\documentclass[a4paper,12pt,reqno]{amsart}
\numberwithin{equation}{section}

\newtheorem{theorem}{Theorem}[section]
\newtheorem{lemma}[theorem]{Lemma}
\newtheorem{definition}[theorem]{Definition}
\newtheorem{proposition}[theorem]{Proposition}
\newtheorem{corollary}[theorem]{Corollary}

\theoremstyle{definition}
\newtheorem*{remark*}{Remark}
\newtheorem*{example*}{Example}
\newtheorem{remark}[theorem]{Remark}
\newtheorem{remarks}[theorem]{Remarks}
\newtheorem*{remarks*}{Remarks}
\newtheorem{example}[theorem]{Example}
\DeclareMathOperator{\Ran}{Ran}
\DeclareMathOperator{\re}{Re}
\DeclareMathOperator{\im}{Im}
\DeclareMathOperator{\diag}{diag}
\DeclareMathOperator{\supp}{supp}

\newcounter{Heq}

\makeatletter
\newenvironment{Hequations}{%
  \refstepcounter{Heq}%
  \protected@edef\theparentequation{\theequation}%
  \setcounter{parentequation}{\value{equation}}%
  \setcounter{equation}{\value{Heq}}%
  \def\theequation{H\theHeq}%
  \ignorespaces
}{%
  \setcounter{equation}{\value{parentequation}}%
  \ignorespacesafterend
}
\makeatother

\begin{document}

\title[Absence of quantum states]%
{Absence of quantum states corresponding to unstable classical channels} 

\author{Ira Herbst}
\address[I.  Herbst]{Department of Mathematics \\
  University of Virginia \\
  Charlottesville \\
  VA 22903\\ U.S.A.}
\email{iwh@weyl.math.virginia.edu}

\author{Erik Skibsted}
\address[E. Skibsted]{Institut for Matematiske Fag\\ 
Aarhus Universitet \\
Ny Munkegade 8000 Aarhus C\\
Denmark}
\email{skibsted@imf.au.dk}
\thanks{E. Skibsted is (partially) supported by  MaPhySto -- A
Network in Mathematical Physics and Stochastics, funded by
The Danish National Research Foundation.}

\maketitle

\tableofcontents


\section{Introduction and results}
\label{sec:1}

The purpose of this paper is to show in a class of models that there are no
quantum states corresponding to unstable classical channels. A
principal example treated in detail is the following: Consider a real-valued potential
  $ V$ on $ \mathbf{R}^{n}$, $ n\geq2$,
  which is smooth outside zero and homogeneous of degree zero. Suppose
  that the restriction of $ V$ to the
  unit sphere $ S^{n-1}$ is a Morse function. We prove that there
  are no $ L^{2}$--solutions to the Schr\"odinger equation 
  $i\partial_t \phi=(-2^{-1}\Delta +V)\phi$
  which asymptotically
  in time are concentrated near local maxima or saddle points of $
  V|S^{n-1}$. Consequently all states concentrate asymptotically
  in time in arbitrarily 
small open cones containing the local minima, cf. \cite {h} and \cite {hs3}.

In the bulk of the paper we consider the
following general situation: Suppose $h(x,\xi)$ is a real classical
Hamiltonian in $ C^{\infty }((\mathbf{R}^{n}\setminus\{0\})
\times\mathbf{R}^{n}) $, $ n\geq2$, satisfying 
\begin{equation}
 x\cdot \nabla _{x}h(x,\xi)
=0  \label{eq:234}
\end{equation}
 in a neighborhood of a point $ (\omega_{0},\xi_{0}) \in S^{n-1}\times
\mathbf{R}^{n}$.  Suppose in addition that this neighborhood is conic in the
$x$--variable and that the orbit $ \left(0,\infty \right) \ni t\rightarrow (
x(t),\xi(t))=\left(tk_{0} \omega_{0},\xi_{0}\right) $ with $k_{0}>0$ is a
solution to Hamilton{'}s equations
\begin{equation*}
{{dx}\over{dt}}=\nabla _{\xi}
h{\left(x,\xi\right)},\quad 
{{d\xi}\over{dt}}=-\nabla _{x}h{\left(
x,\xi\right)}, 
\end{equation*}
or equivalently,
\begin{equation}
\nabla _{x}h{\left(\omega_{0},
\xi_{0}\right)}=0,\quad \nabla _{
\xi}h{\left(\omega_{0},\xi_{0}\right)}
=k_{0}\omega_{0}.\label{eq:1}
\end{equation}
We consider situations in which for each energy $E$ near
$E_{0}=h{\left(\omega_{0},\xi_{0}\right)}$ there is a (typically
unique) ${\left(\omega{\left(E\right)},\xi {\left(E\right)}\right)}\in
S^{n-1}\times\mathbf{R}^{n}$ near $ {\left(\omega_{0},\xi_{0}\right)}
$ depending smoothly on $E$ such that the above structure persists,
namely
\begin{align}
&  h{\left(\omega{\left(E\right)},\xi {\left(E\right)}\right)}=E,
  \label{eq:2}
  \\
&  \nabla _{x}h{\left(\omega{\left(
          E\right)},\xi{\left(E\right)} \right)}=0,
  \label{eq:3}
  \\
&  \nabla _{\xi}h{\left(\omega
      {\left(E\right)},\xi{\left(E\right)}
    \right)}=k{\left(E\right)}\omega {\left(E\right)}.
  \label{eq:4}
\end{align}
Although we shall not elaborate here, we remark that one may easily
derive a criterion for (\ref{eq:2})--(\ref{eq:4}) using
the implicit function theorem.

Let us restrict attention to the constant energy surface
$h{\left(x,\xi\right)}=E$
and to values of 
$
{\left(\hat{x},\xi,E\right)}
$
 close to 
$
{\left(\omega{\left(E\right)},\xi
{\left(E\right)},E_{0}\right)}
$.
 (Here and henceforth 
$
\hat{x}=|x|^{-1}x$.) 
Introduce a change of variables
\begin{equation}
  \begin{split}
&x=x_{n}{\left(\omega{\left(
            E\right)}+u\right)},\quad
    \xi=\xi{\left(E\right)}+\eta 
    +\mu \omega{\left(E\right)};
\\
&u\cdot \omega
    {\left(E\right)}=\eta \cdot \omega{\left( E\right)}=0.
\end{split}
\label{eq:5}
\end{equation}
This amounts to considering coordinates 
$
{\left(u,x_{n},\eta ,\mu \right)}
\in \mathbf{R}^{n-1}\times\mathbf{R}\times \mathbf{R}^{%
n-1}\times \mathbf{R}$. We can solve the equation 
$
h{\left(\omega{\left(E\right)}+u
,\xi{\left(E\right)}+\eta +\mu 
\omega{\left(E\right)}\right)}
=E$
 for $ \mu $ using the implicit function
theorem, because 
$$
\partial _{\mu }h{\left(\omega{\left(
E\right)},\xi{\left(E\right)}
+ \mu \omega{\left(E\right)}\right)}_{|\mu =0}=k{\left(E\right)}>
0$$
 for $E$ near 
$E_{0}$. We obtain 
$
\mu =-g{\left(u,\eta ,E\right)}$
 where $g$ is smooth in a neighborhood of
$
{\left(0,0,E_{0}\right)}$
 and 
$
g{\left(0,0,E_{0}\right)}=0$. 
After introducing the {``}new time{''}
$
\tau =\ln x_{n}{\left(t\right)}
=\ln {\left(x{\left(t\right)}
\cdot \omega{\left(E\right)}\right)}
$
Hamilton{'}s equations reduce to
\begin{equation}
u+{{du}\over{d\tau }}=\nabla _{\eta }g{\left(u,\eta ,E\right)}
,\quad {{d\eta }\over{d\tau }}
=-\nabla _{u}g{\left(u,\eta ,E\right)}.
\label{eq:6}
\end{equation}
(See [A2, p. 243].) After linearization of these equations around the
fixed point 
$
{\left(u,\eta \right)}={\left(0
,0\right)}$
 we obtain with 
$
w={\left(u,\eta \right)}$
\begin{equation}
  \begin{split}
    &{{dw}\over{d\tau }}=B{\left(E\right)} w;\quad B{\left(E\right)}={
\begin{pmatrix}
  0&I \\ -I&0
\end{pmatrix}
}A{\left(E\right)} -{
    \begin{pmatrix}
      I&0\\ 0&0
\end{pmatrix}
},
\\
 &A{\left(E\right)} ={
    \begin{pmatrix}
      g_{u,u}&g_{u,\eta }\\ g_{\eta ,u}&g_{\eta ,\eta }
\end{pmatrix}
}.
\end{split}
\label{eq:7}
\end{equation}
Here the real symmetric matrix 
$
A{\left(E\right)}$
of second order derivatives is evaluated at
$
{\left(0,0,E\right)}$. We assume all eigenvalues of 
$
B{\left(E\right)}$
have nonzero real part (the hyperbolic case). These
eigenvalues are easily proved to come in quadruples, 
$
\lambda ,-1-\lambda$,
 and their complex conjugates (if
$ \lambda $ is not real). If all eigenvalues of
$
B{\left(E\right)}$
 have negative real part then this corresponds to a
stable channel. We prefer the word channel because in the case
considered 
$
x_{n}{\left(t\right)}$
 grows linearly in time. If at least one of the
eigenvalues of 
$B{\left(E\right)}$
 has a positive real part then the usual
stable/unstable manifold theorem shows that there are always classical
orbits (on the stable manifold) for which 
$
{\left(\hat{x}{\left(t\right)}
,\xi{\left(t\right)}\right)}
\rightarrow {\left(\omega{\left(E\right)}
,\xi{\left(E\right)}\right)}
$
 for 
$
t\rightarrow \infty $
 (throughout this paper we use the convention
$
t\rightarrow \infty $
 to mean 
$
t\rightarrow +\infty $
). 
In this situation the question is, do there exist
quantum states whose propagation is governed by a self-adjoint
quantization $H$ of 
$
h{\left(x,\xi\right)}$
 on 
$
L^{2}\left(\mathbf{R}^{n}\right)
$
 (possibly with the singularity at $x=0$
removed) which exhibit this behavior? With a mild further requirement (see (\ref{eq:9}) below), we will answer this question in
the negative.\par 

\par To be precise, let us first fix a (small) neighborhood
$
\mathcal{U}_{0}\subseteq (\mathbf{R}^{n}\setminus{\left\{
0\right\}})\times \mathbf{R}^{n}$
 of 
$
{\left(k{\left(E_{0}\right)}
\omega_{0},\xi_{0}\right)}
$. Then we consider a small open neighborhood
$I_{0}$
 of 
$E_{0}$
 and states of the form 
$\psi =f{\left(H\right)}\psi $
 with 
$f\in C^{\infty }_{0}{\left(I_{0}\right)}$
such that:
\begin{equation}
\begin{split}
&\text{For all }g_{1},g_{2}\in C^{\infty }_{0}{\left(\mathbf{R}^{n}\right)} 
\\
&\|{\left\{g_{1}{\left(
             t^{-1}x\right)}-g_{1}{\left(
             k{\left(H\right)}\omega{\left(H\right)} \right)}1_{I_{0}}{\left(H
           \right)}\right\}}\psi {\left( t\right)}\|\rightarrow 0\text{ for
     }t\rightarrow \infty ,
\\
&\|{\left\{g_{2}
         {\left(p\right)}-g_{2}{\left( \xi{\left(H\right)}\right)}
         1_{I_{0}}{\left(H\right)} \right\}}\psi {\left(t\right)}
     \|\rightarrow 0\text{ for }t\rightarrow \infty ;
\\
&\psi {\left(
         t\right)}=e^{-itH}\psi ,\quad p=-i\nabla _{x},
\end{split}
\label{eq:8}
\end{equation}
while 
\begin{equation}
  \begin{split}
&\int _{1}^{\infty }t^{
    -1}\left\|a^{w}{\left(t^{-1} x,p\right)}\psi {\left(t\right)}
  \right\|^{2}dt<\infty \text{ for all }a\in C^{\infty
  }_{0}{\left(\mathcal{U}_{0 }\setminus\gamma {\left(I_{0}\right)}
    \right)};
\\
&\gamma {\left(I_{0}\right)}
  ={\left\{{\left(k{\left(E\right)} \omega{\left(E\right)},\xi{\left(
E\right)}\right)}\mid E\in I_{%
0}\right\}}.
\end{split}
\label{eq:9}
\end{equation}
(Here 
$
a^{w}$
 signifies Weyl quantization, and 
$
1_{I_{0}}$
 is the characteristic function of 
$
I_{0}$.)\par 

\par Notice that by \eqref{eq:8}, at least intuitively, for all
such symbols $a$
\begin{equation}
\left\|a^{w}{\left(t^{-1
}x,p\right)}\psi {\left(t\right)}
\right\|\rightarrow 0\text{ for }t\rightarrow \infty ,
\label{eq:10}
\end{equation} 
so that \eqref{eq:9} appears as a weak additonal assumption (or as part of our definition of a quantum
channel). See the beginning of Section 3 where (\ref{eq:10}) is
proved from \eqref{eq:8}  and assumptions about the 
pseudodifferential nature of $H$ (conditions (H1)--(H3)). On the other hand, \eqref{eq:10} is also a consequence of
\eqref{eq:9} as may be shown by a subsequence argument (cf. the proof
of \eqref{OOl}).

The
states $ \psi $ obeying the above conditions (with fixed
$I_{0}$) form a subspace whose closure, say 
$\mathcal{H}_{0}$, is $H${--}reducing.\par 

\par We show the following (main) result.

\begin{theorem}\label{thm:1.1}
  Suppose $ B{\left(E_{0}\right)}$ has an eigenvalue with a positive real
  part. Then under a certain assumption concerning possible
resonances (and other technical
  conditions, see \emph{(H1)--(H8)} in Section 2) there exists a sufficiently
  small open neighborhood $ I_{0}$ of $ E_{0}$ such that
\begin{equation}
\mathcal{H}_{0}={\left\{0\right\}}.
\label{eq:11}
\end{equation}
\end{theorem}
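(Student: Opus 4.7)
\emph{Proof plan.} I would argue by a positive-commutator (propagation-estimate) method, executed in the renormalized time $\tau=\ln x_{n}(t)$ introduced before \eqref{eq:6}. Since, by hypothesis, $B(E_{0})$ has an eigenvalue with positive real part, the quadruple structure $\{\lambda,-1-\lambda\}$ together with a spectral decomposition of $B(E_{0})$ yield a real quadratic form $q_{u}(w)$ on the $w=(u,\eta)$-space with the properties that $q_{u}\ge 0$, $q_{u}$ vanishes on the stable subspace, and ${d\over d\tau}q_{u}(w)\ge c\,q_{u}(w)$ along the linearized flow ${dw\over d\tau}=B(E_{0})w$ for some $c>0$. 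The non-resonance hypothesis within (H1)--(H8) is what ensures that $q_{u}$ persists, as a nonnegative smooth Lyapunov symbol for the nonlinear reduced dynamics \eqref{eq:6}, after a $\tau$-independent normal-form transformation near $w=0$.

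Next I would lift $q_{u}$ to a phase-space symbol $a(x,\xi)$ supported in a small neighborhood of $\gamma(I_{0})\subset\mathcal{U}_{0}$, using the change of variables \eqref{eq:5} and the energy-shell solution $\mu=-g(u,\eta,E)$, and form the quantum propagation observable $\Phi(t)=a^{w}(t^{-1}x,p)$. Using the pseudodifferential structure of $H$ supplied by (H1)--(H3) together with \eqref{eq:6}, I would compute the Heisenberg derivative
\begin{equation*}
\mathbf{D}\Phi=\partial_{t}\Phi+i[H,\Phi].
\end{equation*}
The leading contribution is the quantization of $t^{-1}{d\over d\tau}q_{u}$ evaluated along the reduced flow, giving on the range of $f(H)1_{I_{0}}(H)$ a bound of the form
\begin{equation*}
\mathbf{D}\Phi\ge c\,t^{-1}a^{w}-t^{-1}R^{w}(t),
\end{equation*}
where $R$ is supported in $\mathcal{U}_{0}\setminus\gamma(I_{0})$ and the neglected remainders (controlled by (H4)--(H8)) are integrable against $t^{-1}dt$.

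Integrating this inequality over $(1,T)$ and taking expectation in $\psi(t)$, the boundary terms $\langle\psi(T),\Phi(T)\psi(T)\rangle$ are uniformly bounded in $T$ by the phase-space localization \eqref{eq:8} (both $t^{-1}x$ and $p$ are asymptotically pinned to their $H$-valued images $k(H)\omega(H)$ and $\xi(H)$), while the $R^{w}$ contribution is finite by \eqref{eq:9}. Hence
\begin{equation*}
\int_{1}^{\infty}t^{-1}\langle\psi(t),a^{w}\psi(t)\rangle\,dt<\infty.
\end{equation*}
A subsequence argument of the type indicated after \eqref{eq:10} then produces times $t_{k}\to\infty$ along which $a^{w}\psi(t_{k})\to 0$. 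Combined with \eqref{eq:8}, which asserts that $\psi(t)$ concentrates precisely on $\gamma(I_{0})$ where $a>0$ (after cutoff), this forces $\psi=0$.

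The main obstacle I foresee is the construction of $a$ itself: producing a symbol on $(x,\xi)$ whose restriction to the reduced energy shell \eqref{eq:5} genuinely realizes the Lyapunov form $q_{u}$ of the nonlinear system \eqref{eq:6}, within the symbol class demanded by (H1)--(H3), so that the Weyl calculus delivers a positive leading term with errors handled either by \eqref{eq:9} or by $t^{-1}$-integrable remainders. The complex (non-real) Jordan structure of $B(E_{0})$ and the small-divisor issues in the normal-form step are exactly what make the non-resonance hypothesis among (H1)--(H8) indispensable, and account for the technical depth of Section~2.
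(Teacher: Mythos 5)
Your proposal diverges fundamentally from the paper's argument, and I do not see how it can close.

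The central difficulty is that a propagation estimate with the Lyapunov symbol you describe cannot, by itself, force $\psi=0$. By construction $q_u$ vanishes on the stable subspace and therefore at the fixed point $w=0$; the lifted symbol $a$, however you cut it off, necessarily vanishes on $\gamma(I_0)$. Your final step claims that \eqref{eq:8} concentrates $\psi(t)$ ``precisely on $\gamma(I_0)$ where $a>0$,'' but $a=0$ exactly there. Consequently the conclusion $\int_1^\infty t^{-1}\langle a^w\rangle_t\,dt<\infty$ (hence $\langle a^w\rangle_{t_k}\to0$ along a subsequence) is entirely consistent with a nonzero state obeying \eqref{eq:8} and \eqref{eq:9}, and no contradiction is reached. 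At the classical level this is precisely the stable-manifold phenomenon: orbits with $q_u\equiv0$ do converge to the fixed point. Ruling out quantum states corresponding to these orbits is the whole point of the theorem, and that requires a genuinely quantum mechanism.

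The paper's proof supplies that mechanism in a third step you have omitted. After a $t^{-\delta}$-localization (Section 4, which is roughly parallel to your Lyapunov-type estimate but uses $q^-=q^u-q^s$ and $q^s$, and establishes $|\gamma|\lesssim t^{-\delta}$ rather than $\psi=0$), the paper constructs in Section 5 an observable $\Gamma$ from a Poincar\'e normal form of \eqref{eq:6}, whose symbol is built from a coordinate $\gamma_1$ associated to an eigenvalue $\beta_1^s$ with $\re\beta_1^s<-1$. That such an eigenvalue exists is exactly the consequence of $B(E_0)$ having an eigenvalue with positive real part, via the quadruple $\lambda,-1-\lambda$ structure -- a fact your proposal does not invoke. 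The non-resonance condition \eqref{H8} is used to solve the normal-form recursion \eqref{BB24}, not (as you assert) to make a nonnegative Lyapunov form persist; the step I estimates do not need it. The resulting $\Gamma$ decays like $t^{-(1-\sigma)}$ in the state $\psi(t)$ (Corollary~\ref{cor:5.2}). Then Section 6 proves a Mourre-type commutator estimate $i[\bar H,\bar A]\ge\tfrac12$ for $\bar H=t^{1-\bar\delta}\Gamma$ and a conjugate $\bar A\sim t^{\bar\delta-1}x_l$, and derives the operator-norm bound \eqref{f8}. Section 7 then shows that the simultaneous strong localizations $F_-(|\bar A|)\psi(t)\approx\psi(t)$ (from step I) and $F_-(|t^{1-\sigma}\Gamma|)\psi(t)\approx\psi(t)$ (from step II) are incompatible with \eqref{f8} -- an uncertainty-principle contradiction -- unless $\psi=0$. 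None of this machinery appears in your proposal, and without it the argument stops short of the theorem.
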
 

There is the following slightly more general result not involving
\eqref{eq:8}.

\begin{theorem}\label{thm:1.11}
  Under the conditions of Theorem \ref{thm:1.1} there exists  a sufficiently
  small open neighborhood $ I_{0}$ of $ E_{0}$ such that if a state $\psi(t) ={e^{-itH}f\left(H\right)}\psi $
 with 
$f\in C^{\infty }_{0}{\left(I_{0}\right)}$ obeys \eqref{eq:9}, then in fact  the pointwise decay \eqref{eq:10} holds for all $a\in C^{\infty
  }_{0}{\left(\mathcal{U}_{0 }
    \right)}$.
\end{theorem}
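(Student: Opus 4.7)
The plan is to prove the pointwise decay \eqref{eq:10} for every $a \in C^\infty_0(\mathcal U_0)$ via a smooth partition of unity $a = a_1+a_2$, with $a_1 \in C^\infty_0(\mathcal U_0 \setminus \gamma(I_0))$ and $a_2 \in C^\infty_0(\mathcal U_0)$ supported in an arbitrarily small tubular neighborhood of $\gamma(I_0)$. The $a_1$-part is handled directly by upgrading the integral hypothesis \eqref{eq:9} to pointwise decay, while the $a_2$-part is treated by contradiction, invoking Theorem~\ref{thm:1.1}.

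For $a_1$, set $f(t) := \|a_1^w(t^{-1}x,p)\psi(t)\|^2$; hypothesis \eqref{eq:9} gives $\int_1^\infty f(t)\,t^{-1}dt < \infty$. By the Heisenberg equation,
\[ f'(t) = \langle \psi(t),\partial_t[(a_1^w)^*a_1^w]\psi(t)\rangle + i\langle \psi(t),[H,(a_1^w)^*a_1^w]\psi(t)\rangle, \]
and the conditions (H1)--(H3), together with the compact support of $a_1$, force $|f'(t)| \le C/t$: both the explicit time derivative (from the scaling $x \mapsto t^{-1}x$) and the Poisson bracket with $h$ contribute a factor of $t^{-1}$. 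Changing variable $\tau = \ln t$, $g(\tau) := f(e^\tau)$ is a nonnegative, Lipschitz, integrable function on $[0,\infty)$, hence $g(\tau) \to 0$, i.e.\ $f(t) \to 0$. This is essentially the argument flagged at the end of the introduction.

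For $a_2$, assume by contradiction that $\|a_2^w(t_n^{-1}x,p)\psi(t_n)\| \ge \epsilon > 0$ along some $t_n \to \infty$. Applying the previous step to a family of symbols in $C^\infty_0(\mathcal U_0 \setminus \gamma(I_0))$ that exhaust the complement of any preassigned neighborhood of $\gamma(I_0)$ within $\mathcal U_0$, the $\mathcal U_0$-localized microlocal mass of $\psi(t)$ asymptotically concentrates on $\gamma(I_0)$. Combining this with the smooth parametrization $E \mapsto (k(E)\omega(E),\xi(E))$ and inserting the functional calculus $1_{I_0}(H)$, the $\mathcal U_0$-localized piece $\chi^w(t^{-1}x,p)\psi(t)$ (for a suitable $\chi \in C^\infty_0(\mathcal U_0)$ equal to $1$ on the support of $a_2$) satisfies the operator form of \eqref{eq:8}. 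Since this piece is asymptotically nonzero, it contributes a nontrivial element of the closed $H$-reducing subspace $\mathcal H_0$, contradicting $\mathcal H_0 = \{0\}$ from Theorem~\ref{thm:1.1}.

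The main obstacle lies in the third paragraph, in converting the microlocal concentration near $\gamma(I_0)$ into the strong operator identities \eqref{eq:8}, which demands matching $g_1(t^{-1}x)$ and $g_2(p)$ against $g_1(k(H)\omega(H))1_{I_0}(H)$ and $g_2(\xi(H))1_{I_0}(H)$ up to $o(1)$ in norm. This calls for a pseudodifferential symbol expansion exploiting the smoothness of $E \mapsto \gamma(E)$ and the pointwise decay from paragraph~2; one essentially Taylor-expands the $c$-number $k(E)\omega(E)$ about the spectral values of $H$ in $I_0$ and commutes the resulting error terms into the regime already controlled by paragraph~2. An alternative route, should this bookkeeping prove delicate, is to bypass the reduction to Theorem~\ref{thm:1.1} entirely and rerun the positive-commutator/resonance-exclusion argument underlying it directly on $\psi(t)$, using hyperbolicity of $B(E_0)$ to convert microlocal mass at $\gamma$ into microlocal mass off $\gamma$, which must then vanish by \eqref{eq:9}.
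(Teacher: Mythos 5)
The step you take for $a_1$ (upgrading the integral bound \eqref{eq:9} to the pointwise decay on $\mathcal U_0\setminus\gamma(I_0)$, via $g(\tau)=f(e^\tau)$ being nonnegative, Lipschitz and integrable) is sound and matches the subsequence argument the paper itself flags at the end of Section~1 (cf.\ the proof of \eqref{OOl}); the estimate $|f'(t)|\le C/t$ does hold on the relevant scales, using the homogeneity (H1) to control $\partial_x h$.

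The step for $a_2$, however, has a genuine structural gap, not just a ``bookkeeping'' one. You want to conclude from a hypothetical subsequence $t_n$ with $\|a_2^w(t_n^{-1}x,p)\psi(t_n)\|\ge\epsilon$ that ``the $\mathcal U_0$--localized piece $\chi^w(t^{-1}x,p)\psi(t)$ satisfies the operator form of \eqref{eq:8}'' and hence ``contributes a nontrivial element of $\mathcal H_0$.'' But \eqref{eq:8} is a condition on a fixed vector $\phi\in L^2$ (namely on the orbit $e^{-itH}\phi$), whereas $\chi^w(t^{-1}x,p)\psi(t)$ is a time--dependent family with no canonical time--independent representative. Concentration of $\psi(t)$ near $\gamma(I_0)$ along a subsequence does not, even in principle, produce a nonzero vector in the closed $H$--reducing subspace $\mathcal H_0$, which is defined by \emph{asymptotic} conditions: one cannot ``extract'' the concentrating piece as a fixed state, and there is no reason for the orthogonal projection of $\psi$ onto $\mathcal H_0$ to be nonzero. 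Taylor expanding $k(E)\omega(E)$ about the spectral parameter of $H$ would improve the symbol estimates but cannot repair this category mismatch.

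In fact, Theorem~\ref{thm:1.11} cannot be reduced to Theorem~\ref{thm:1.1} as a black box precisely because the hypothesis \eqref{eq:8} is dropped: a state obeying only \eqref{eq:9} may carry mass escaping $\mathcal U_0$ entirely, so none of its pieces need lie in $\mathcal H_0$. The correct route is your ``alternative'' one, and it is what the paper intends when it says the proof ``follows the same pattern'': rerun Sections 4--7 keeping the localization $L_1(t)=g_1(t^{-1}x)g_2(p)$ attached throughout (as it already is, e.g.\ in \eqref{d4}, \eqref{d13}), so that the conclusions of Proposition~\ref{prop:4.6}, Corollary~\ref{cor:5.2} and Lemma~\ref{lem:7.1} are replaced by their $L_1(t)$--localized analogues, and the Mourre argument of Section~7 then yields $\|L_1(t)\psi(t)\|\to 0$ rather than $\|\psi(t)\|\to 0$. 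Since $g_1g_2=1$ on the region in question, this is exactly \eqref{eq:10} for arbitrary $a\in C_0^\infty(\mathcal U_0)$. You should flesh this out rather than present it as a fallback.
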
 

\par A symbol satisfying the conditions \eqref{eq:3} and \eqref{eq:4}
was studied by Guillemin and Schaeffer [GS]. In their paper 
the roles of $x$ and $ \xi $ are reversed and
their $h$ is homogeneous of degree one in $ \xi $.
There is only one half-line of points in question rather than a one
parameter family of half-lines (their critical set of points is at zero
energy). Under the condition of no resonances  they obtain a
conjugation of $H$ to a simpler normal form from which they
draw conclusions about propagation of singularities for an equation of
the form 
$H\psi =\phi $.

\par To see what Theorem \ref{thm:1.1} means in the model where 
$h{\left(x,\xi\right)}=2^{-1}
\xi^{2}+V{\left(\hat{x}\right)}$
with $V$ a Morse function on
$S^{n-1}$  we recall from [H]: The spectrum of
$H=2^{-1}p^{2}+V{\left(\hat{x}\right)}$
 is purely absolutely continuous and 
\begin{equation}
I=\sum _{\omega_{l}\in \mathcal{C}_{%
r}}P_{l},
\label{eq:12}
\end{equation}
where 
$P_{l}$
 are $H${--}reducing orthogonal
projections defined as follows: Pick any family 
${\left\{\chi _{l}|\omega_{l}\in C_{r}\right\}}$
of smooth functions on 
$S^{n-1}$
with 
$\chi _{k}{\left(\omega_{l}\right)}
=\delta _{kl}$
 (the Kronecker symbol); here 
$C_{r}$
is the finite set of non-degenerate critical points in 
$S^{n-1}$
 for 
$V$. Then 
\begin{equation*}
P_{l}=s-\lim _{t
\rightarrow \infty }e^{itH}\chi _{l}{\left(
\hat{x}\right)}e^{-itH},
\end{equation*}
see [H] and [ACH]. Furthermore in [H] the existence of an 
asymptotic momentum 
$p^{+}$
was proved and its relationship to the above
projections was
shown.  (There was the restriction in [H] to
$n\geq 3$
 but this is easily removed using the Mourre estimate
[ACH, Theorem C.1].)\par 

\par We notice that \eqref{eq:12} has an analog in Classical
Mechanics: Any classical orbit (except for the exceptional ones that
collapse at the origin) obeys 
$|x|\rightarrow \infty $
with 
$\hat{x}\rightarrow \omega_{l}$
for some 
$
\omega_{l}\in C_{r}$.\par 

\par Obviously the collection \eqref{eq:2}--\eqref{eq:4}
corresponds in the potential model exactly to 
$C_{r}$: 
${\left(\omega{\left(E\right)},\xi
{\left(E\right)}\right)}={\left(
\omega_{l},\sqrt{2{\left(E-V{\left(
\omega_{l}\right)}\right)}
} \omega_{l}\right)}$
with 
$\omega_{l}\in C_{r}$. The assumption that the real part of one eigenvalue 
is positive corresponds to 
$\omega_{l}$
being either a local maximum or a saddle point of
$V$. Moreover we have the identification 
\begin{equation}
\mathcal{H}_{0}=\Ran {\left(P_{l}
1_{I_{0}}{\left(H\right)}
\right)}.
\label{eq:13}
\end{equation}
Whence, upon varying 
$I_{0}$, Theorem~\ref{thm:1.1} yields the following for the potential
model. 

\begin{theorem}\label{thm:1.2}
 Suppose $ \omega_{l}\in C_{r}$ is 
the location of a local maximum or
 a saddle point of $ V$. Then
 \begin{equation}
   P_{l}=0.
   \label{eq:14}
 \end{equation}
\end{theorem}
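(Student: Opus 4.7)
The plan is to reduce Theorem~\ref{thm:1.2} to Theorem~\ref{thm:1.1} via the identification \eqref{eq:13}. Taking $h(x,\xi)=\tfrac12\xi^{2}+V(\hat{x})$ and the critical pair $\bigl(\omega(E),\xi(E)\bigr)=\bigl(\omega_{l},k(E)\omega_{l}\bigr)$ with $k(E)=\sqrt{2(E-V(\omega_{l}))}$, the relations \eqref{eq:2}--\eqref{eq:4} are immediate: since $V$ is homogeneous of degree zero, Euler's relation gives $\omega_{l}\cdot\nabla V(\omega_{l})=0$, and $\omega_{l}$ being a critical point of $V|S^{n-1}$ makes the tangential derivative vanish too, whence $\nabla_{x} V(\omega_{l})=0$ in $\mathbf{R}^{n}$. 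Thus $\nabla_{x} h(\omega_{l},k(E)\omega_{l})=0$, while $\nabla_{\xi} h=\xi$ gives $k(E)\omega_{l}$ as required. This structure persists for every $E_{0}>V(\omega_{l})$.

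Next I would compute $B(E)$ explicitly. Solving $h(\omega_{l}+u,k(E)\omega_{l}+\eta+\mu\omega_{l})=E$ for $\mu$, under $u,\eta\perp\omega_{l}$, yields
\begin{equation*}
g(u,\eta,E)=k(E)-\sqrt{2E-|\eta|^{2}-2V(\widehat{\omega_{l}+u})}.
\end{equation*}
Using $\widehat{\omega_{l}+u}=(\omega_{l}+u)/\sqrt{1+|u|^{2}}$ together with $\nabla V(\omega_{l})=0$, a direct second-derivative computation at $(u,\eta)=(0,0)$ produces
\begin{equation*}
A(E)=k(E)^{-1}\diag(H_{V},I_{n-1}),
\end{equation*}
where $H_{V}$ is the restricted Hessian of $V$ on $T_{\omega_{l}}S^{n-1}$. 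Substituting into \eqref{eq:7}, $B(E)$ is block-diagonalized along the eigendirections of $H_{V}$, and for each eigenvalue $\lambda_{i}$ of $H_{V}$ one gets the $2\times2$ block
\begin{equation*}
\begin{pmatrix} -1 & 1/k(E) \\ -\lambda_{i}/k(E) & 0 \end{pmatrix},\qquad\text{characteristic polynomial } \lambda^{2}+\lambda+\lambda_{i}/k(E)^{2}=0.
\end{equation*}
The two roots $\lambda=\tfrac12\bigl(-1\pm\sqrt{1-4\lambda_{i}/k(E)^{2}}\bigr)$ have strictly positive real part in one of them precisely when $\lambda_{i}<0$. By the Morse hypothesis, this happens for some $i$ iff $H_{V}$ fails to be positive definite, i.e.\ iff $\omega_{l}$ is a local maximum or a saddle point of $V|S^{n-1}$.

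With the spectral condition on $B(E_{0})$ verified for every $E_{0}>V(\omega_{l})$, and after checking the pseudodifferential conditions (H1)--(H8) and the resonance assumption for $H=\tfrac12 p^{2}+V(\hat{x})$ (with the origin singularity removed as in [H], [ACH]), Theorem~\ref{thm:1.1} yields $\mathcal{H}_{0}=\{0\}$ locally. The identification \eqref{eq:13} then gives $P_{l}1_{I_{0}}(H)=0$ for a small neighborhood $I_{0}$ of each such $E_{0}$. Covering $(V(\omega_{l}),\infty)$ by countably many such $I_{0}$'s and using $\sigma$-additivity of the spectral measure yields $P_{l}1_{(V(\omega_{l}),\infty)}(H)=0$; since the spectrum is purely absolutely continuous, $1_{\{V(\omega_{l})\}}(H)=0$; and $P_{l}1_{(-\infty,V(\omega_{l}))}(H)=0$ follows from a standard energy/Mourre argument (states concentrated near $\omega_{l}$ must have energy at least $V(\omega_{l})$). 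Adding these pieces gives $P_{l}=0$. The main obstacle is the systematic verification of (H1)--(H8) together with the non-resonance condition for this Morse-Schr\"odinger setup, which requires ruling out integer relations among the roots $\tfrac12(-1\pm\sqrt{1-4\lambda_{i}/k(E)^{2}})$ on the relevant energy window; since these depend smoothly on $E$, such resonances occur on a meagre set and can be avoided by shrinking $I_{0}$.
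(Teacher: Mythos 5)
Your reduction to Theorem~\ref{thm:1.1} is laid out correctly at the level of the classical computation: the form of $g(u,\eta,E)$, the matrix $A(E)=k(E)^{-1}\diag(V^{(2)}(\omega_l),I)$, and the eigenvalue formula $\tfrac12\bigl(-1\pm\sqrt{1-4q_j/k(E)^2}\bigr)$ all match \eqref{AX} in the paper, as does the identification that some unstable eigenvalue exists precisely when $\omega_l$ is a local maximum or saddle. The organization into a covering of $(V(\omega_l),\infty)$ by small intervals $I_0$ avoiding the (discrete, by analyticity rather than mere smoothness — smoothness alone does not give discreteness) resonance and eigenvalue sets is also essentially what the paper does.

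The genuine gap is that you treat the identification \eqref{eq:13} as if it were free, writing ``The identification \eqref{eq:13} then gives $P_{l}1_{I_{0}}(H)=0$.'' But \eqref{eq:13} is not an automatic consequence of the definitions: it is the assertion that every state $\psi=P_l f(H)\psi$ with $f\in C^\infty_0(I_0)$ satisfies the dynamical localization hypotheses \eqref{eq:8} and \eqref{eq:9} of Theorem~\ref{thm:1.1}. The paper explicitly flags this (``Of course we will need to verify \eqref{eq:13} in order to use Theorem~\ref{thm:1.1} and this involves verifying \eqref{eq:8} and \eqref{eq:9}\ldots'') and then devotes the bulk of Section~\ref{sec:8} to it: Lemma~\ref{lem:8.1} establishes a chain of eight propagation estimates (built from propagation observables involving $r$, $p_{||}$, and $\tilde{p}_{||}$ via Heisenberg-derivative arguments) and Corollary~\ref{cor:8.2} converts them into the phase-space localization \eqref{eq:8}; the integral estimate \eqref{eq:9} is then drawn from the same lemma. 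None of this is addressed in your proposal. Without it, there is no bridge from $\Ran(P_l 1_{I_0}(H))$ to the subspace $\mathcal{H}_0$ that Theorem~\ref{thm:1.1} shows to be trivial, so the proof is incomplete at precisely the point where the real analytical work of Section~\ref{sec:8} lives. Two smaller issues: the paper works with a regularized $\tilde{V}=F_+(|x|)V(\hat{x})$ to satisfy (H1)--(H3) and replaces $P_l$ by $P_l E_{ac}(H)$ to handle possible bound states introduced by the regularization — your proposal does not engage with either modification, and the appeal to ``a standard energy/Mourre argument'' for the low-energy piece should be replaced by the discreteness-of-$\sigma_{pp}$ argument the paper actually uses.
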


Of course we will need to verify (\ref{eq:13}) in order to use Theorem \ref{thm:1.1}
and this involves verifying (\ref{eq:8}) and (\ref{eq:9}) for $\psi \in \mbox{Ran} P_\ell$ satisfying $\psi = f(H)\psi$, $f \in C^\infty_0(I_0)$ (see Section 8).

\par A detailed analysis of the large time asymptotic behavior of states
in the range of the projections 
$P_{l}$
which correspond to local minima was accomplished
recently in [HS3]. In particular for any local minimum,
$P_{l}\not=0$. Moreover in this case we have \eqref{eq:13} for
the analogous space of that in Theorem~\ref{thm:1.1}. One may easily include in
Theorem~\ref{thm:1.2} a short-range perturbation 
$V_{1}=O{\left(|x|^{-1-\delta }\right)}$, $\delta >0$, $\partial^{\alpha
}_{x}V_{1}=O{\left(|x|^{-2}\right)}$, $|\alpha |=2$,
to the Hamiltonian $H$, see Remarks~\ref{rem:8.3} (1).

\par  The results Theorems~\ref{thm:1.1}  and ~\ref{thm:1.11} are  much more general than
Theorem~\ref{thm:1.2}. In particular, as a further application, we can
apply 
them  to a problem of a quantum particle in two dimensions subject to an
electromagnetic vector potential which is asymptotically homogeneous
of degree zero in $x$, see 
[CHS2]. (For another magnetic field problem in this class, see [CHS1].)
Let us also give a  simple example from Riemannian
  geometry. For further examples, see Examples \ref{exam:riema2},
  \ref{exam:riema3},  \ref{exam:riema4}
  and \ref{exam:riema33} which indeed may be viewed as proper examples due to
   symplectic
  invariance of the problem.

\begin{example} \label{exam:riema1}Consider the symbol $h$ on $(\mathbf{R}^{2}\setminus\{0\})\times \mathbf{R}^{2}$
  \begin{equation}\label{eq:ham}
 h=h(x,\xi)=\tfrac {1}{2}\big(1+ax_2^2|x|^{-2}\big)^{-1}\xi_1^2+\dfrac {1}{2}\xi_2^2;\;a> 0.   
  \end{equation} The family  $(1,0;\sqrt{2E},0)$,    
  $E>0$,  consists  of points obeying \eqref{eq:2},
  \eqref{eq:3} and \eqref{eq:4}. 
 For the   linearized reduced 
  flow  \eqref{eq:7}
  we  find the   eigenvalues $-\tfrac {1}{2}(-1\pm \sqrt{1+4a})$,  
  and we conclude that the
fixed points  are saddle points. If $a$ is irrational there are no resonances
  of any order (see Section \ref{sec:2} for definition), whence we may infer
  from Theorem~\ref{thm:1.1} that  there is no
  quantum  channel associated  to the  family of fixed points in this
  case.
Using the absence of low order  resonances condition \eqref
  {eq:res} we may in fact obtain this conclusion for $a\neq
  \tfrac{3}{4}, 2$; see Remark \ref{rem:remres} for a further discussion. We have tacitly assumed that the symbol
  \eqref{eq:ham} is suitably regularized at  $x=0$ (for the
  quantization). 
\end{example}


\par Our proof of Theorem~\ref{thm:1.1} consists of three steps:

\par I) Assuming 
$\psi {\left(t\right)}=e^{-itH}\psi $ does localize in phase space as
$t\rightarrow \infty $ in the region $|u|+|\eta |\leq \epsilon $ for any $
\epsilon >0$ in the sense of (1.8) and (1.9), we prove a stronger localization. Namely, for some small
positive $ \delta $, the probability (assuming here that $ \psi $ is
normalized) that $ \psi {\left(t\right)}$ is localized in the region $
|u|+|\eta |\geq t^{-\delta }$ goes to zero as $ t\rightarrow \infty $.
See Section 4.
\par

\par II) Using I) and an iteration scheme, we construct an observable
$ \Gamma $ which decreases {``}rapidly{''} to zero.  This iteration scheme is
based on one used by Poincar\'e (see [A1, pp. 177{--}180]) to obtain a change
of coordinates which linearizes \eqref{eq:6}. The fact that if one eigenvalue
of $ B{\left(E\right)}$ has a positive real part then another has real part
$<-1$ is relevant here. Our observable $ \Gamma $ is in first approximation
roughly a quantization of a component of $w$ in \eqref{eq:7} which decays
as $ \exp {\left(\lambda \tau \right)} $ with $\re\lambda <-1$. See Section 5.

\par III) Using Mourre theory we prove an uncertainty principle lemma
for two self-adjoint operators $P$ and $Q$
satisfying 
$i{\left[P,Q\right]}\geq cI,\quad c>0$,
and some technical conditions. A consequence of this  lemma is that  if 
$0\leq \delta _{1}<\delta _{2}$
and 
$g_{1}\text{ and }g_{2}$
are two bounded compactly supported functions then
\begin{equation*}
\lim  _{t\rightarrow \infty }
\left\|g_{1}{\left(t^{-\delta _{
1}}Q\right)}g_{2}{\left(
t^{\delta _{2}}P\right)}\right\|
=0.
\end{equation*}
If $ \psi $ is normalized this bound implies that the
localizations of I) and II) are incompatible. See Sections 6 and 7.\par 

\par The basic theme of our paper may be phrased as absence of certain
quantum mechanical states which are present in the corresponding classical
model. Notice that given any critical point $\omega_{l}\in C_{r}$ (restricting
for convenience the discussion to the potential model) there are indeed
classical orbits with $|x|\rightarrow \infty $ and $ \hat{x}\rightarrow
\omega_{l}$; in particular this is the case for any given local maximum or
saddle point. Intuitively, Theorem~\ref{thm:1.1} is true because the
associated classical orbits occur for only a {``}rare{''} set of initial
conditions as fixed by the stable manifold theorem.  Alternatively, for some
components of ${\left(\hat{x},\xi\right)} $ the convergence to
${\left(\omega_{l},\xi^{+}\right)} $ is {``}too fast{''} thus being
incompatible with the uncertainty principle in Quantum Mechanics. These two
different explanations are actually connected.\par

\par For another example of this theme we refer to [G2], [S1] and
[S2].\par 

\par We addressed the problem of Theorem~\ref{thm:1.2} in a previous work, [HS1],
where we proved \eqref{eq:14} at local maxima but only had a partial result
for saddle points (using a different time-dependent method). Also in the case of 
homogeneous potentials  similar and related results were 
obtained in [HMV1] and [HMV2] by  stationary methods. The present
paper is an expanded version of the preprint \cite {hs2}.\par

\par This paper is organized as follows: In Section~\ref{sec:2} we elaborate on
all technical conditions needed for Theorem~\ref{thm:1.1} and give a more
detailed outline of its proof, cf. the steps I){--}III) indicated above. In
Section~\ref{sec:3} we have collected a few technical preliminaries. In
Section~\ref{sec:4} we prove the $ t^{-\delta }$ {--}localization, cf. step
I), while the localization of $ \Gamma $ is given in Section~\ref{sec:5}.
Finally, Section~\ref{sec:6} is devoted to the Mourre theory for this
observable. We complete the proof of Theorem~\ref{thm:1.1} in
Section~\ref{sec:7} (the proof of  Theorem~\ref{thm:1.11} is omitted
since 
it follows the same pattern) and give a few missing details of the proof of
Theorem~\ref{thm:1.2} in Section~\ref{sec:8}. In Appendix A we study 
 possible generalizations of the homogeneity
condition \eqref{eq:234}.

\section{Technical conditions and outline of proof}
\label{sec:2}

\par We fix 
$ {\left(\omega_{0},\xi_{0}\right)} \in S^{n-1}\times \mathbf{R}^{n}$ and a
small open neighborhood $ I_{0}$ of $
E_{0}=h{\left(r\omega_{0},\xi_{0}\right)}$ as in Section 1. We shall elaborate
on conditions for the real-valued symbol $ h{\left(x,\xi\right)}$, see
(H1){--}(H8) below. For convenience we remove a possible singularity at $x=0$
caused by the imposed (local) homogeneity assumption of Section 1. This may be
done as follows. Let $ \mathcal{N}_{0}$ be as small open neighborhood of $
{\left(\omega_{0},\xi_{0}\right)} $. We shall now and henceforth assume that
for some $ r_{0}>0$
\begin{Hequations}
  \begin{equation}
  \begin{split}
    &\displaystyle h{\left(x,\xi\right)}= h{\left(r_{0}\hat{x},\xi
      \right)}\; in \;\mathcal{C}_{0}:={\left\{ {\left(x,\xi\right)}\mid
        {\left( \hat{x},\xi\right)}\in \mathcal{N}_{0},|x|>r_{0}\right\}},
    \\
    &\displaystyle{{}h\in C^{\infty }{\left(\mathbf{R}^{n}\times
          \mathbf{R}^{n}\right)}}.
\end{split}\label{H1}
\end{equation}
\end{Hequations}
Notice that this modification intuitively is irrelevant for the issue of
Theorem~\ref{thm:1.1} (which concerns states propagating linearly in time in
configuration space).\par

\par We assume that for some 
$r,l\geq 0$
\begin{Hequations}
\begin{equation}
   h\in S{\left({\left\langle \xi\right\rangle }^{
        r}{\left\langle x\right\rangle }^{l},g_{ 0}\right)};\enspace
  g_{0}={\left\langle x\right\rangle }^{-2}dx^{2}+d\xi^{ 2},\enspace
  {\left\langle x\right\rangle }= {\left(1+|x|^{2}\right)}^{ 1/2},
\label{H2}
\end{equation}
\end{Hequations}
and that 
\begin{Hequations}
  \begin{equation}
H=h^{w}{\left(x,p\right)}\;is \;essentially \;self\!\!-\!\!adjoint \;on\;
C^{\infty }_{0}{\left(\mathbf{R}^{n}
\right)}.\label{H3}
\end{equation}
\end{Hequations}
(See Section~\ref{sec:3} for notation.)\par 

\begin{remark*}
  There is some freedom in choosing a global condition like (\ref{H2}). For
  example it suffices to have (\ref{H2}) with $ g_{0}$ replaced by $
  {\left\langle x\right\rangle }^{-2\delta _{1} }dx^{2}+{\left\langle
      x\right\rangle }^{ 2\delta _{2}}d\xi^{2}$ with $ 0\leq \delta
  _{2}<\delta _{1}\leq 1$ .
\end{remark*}

\par We assume 
\begin{Hequations}
  \begin{equation}
\text{\eqref{eq:2}{--}\eqref{eq:4}} \;for \;E\in I_{0}.\label{H4}
\end{equation}
\end{Hequations}

\par We define 
$ \omega_{n}{\left(E\right)}=\omega {\left(E\right)}$, and  shrinking
$I_0$ if necessary we pick smooth
functions $$ \omega_{1}{\left(E\right)},\dots ,\omega_{n-1}{\left(E\right)}\in
S^{n-1}$$ such that $ \omega_{1}{\left(E\right)},\dots
,\omega_{n}{\left(E\right)}$ are mutually orthogonal. We define, cf.
\eqref{eq:5}, $x_{j}=x\cdot \omega_{j}{\left(E\right)} $ for $ j\leq n$, $
u_{j}=x_{j}/x_{n}$ and $ \eta _{j}={\left(\xi-\xi{\left( E\right)}\right)}\cdot
\omega_{ j}{\left(E\right)}$ for $ j\leq n-1$ and $ \mu
={\left(\xi-\xi{\left(E\right)} \right)}\cdot \omega_{n}{\left(E \right)}$.
Let $ w={\left(u,\eta \right)}={\left( u_{1},\dots ,u_{n-1},\eta _{1},\dots
    ,\eta _{n-1}\right)}$.

\par As for the matrix $B(E)$ of \eqref{eq:7} in these
coordinates we need the condition:
\begin{Hequations}
\begin{equation}
\parbox{210pt}{\em
The real part of each eigenvalue of $B(E)$ is nonzero for $E\in I_0$.}
\label{H5}
\end{equation}
\end{Hequations}

\par Let us order the eigenvalues as 
$ \beta ^{s}_{1}{\left(E\right)},\dots ,\beta ^{s}_{n^{s}}{\left(
    E\right)},\beta ^{u}_{1}{\left( E\right)},\dots ,\beta ^{u}_{n^{
    u}}{\left(E\right)}$ where $ \re {\left(\beta ^{s}_{j}{\left(
        E\right)}\right)}<0$ ($ \beta ^{s}_{j}{\left(E\right)} $ are the
stable ones) and $ \re {\left(\beta ^{u}_{j}{\left( E\right)}\right)}>0$ ( $
\beta ^{u}_{j}{\left(E\right)} $ are the unstable ones). Let $ \beta
{\left(E\right)}$ refer to the $ {\bf C}^{2n-2}$ {--}vector of eigenvalues $
{\left(\beta ^{s}_{1}{\left(E \right)},\dots ,\beta
    ^{u}_{n^{u}}{\left(E\right)}\right)} $ counted with multiplicity.\par

\par We are interested in the case
\begin{Hequations}
  \begin{equation}
n^{u}=n^{u}{\left(E\right)}
\geq 1.\label{H6}
\end{equation}
\end{Hequations}

\par Let 
$ V^{s}{\left(E\right)}$ and $ V^{u}{\left(E\right)}$ be the sum of the
generalized eigenspaces of $B(E)$ correponding to stable and unstable
eigenvalues, respectively. Then we have the decomposition 
\begin{equation*}
{\mathbf{C}^{2n-2}=V^{s}{\left(E\right)}
\oplus V^{u}{\left(E\right)}.}
\end{equation*}

\par Using basis vectors respecting this structure we can find a smooth
$ M_{2n-2}{\left(\mathbf{C}\right)}$ {--}valued function $T(E)$ such that
\begin{equation}
T{\left(E\right)}^{-1}B{\left( E\right)}T{\left(E\right)}=\diag
{\left(B^{s}{\left(E\right)},B^{u}{\left(E\right)}\right)}.
\label{eq:17}
\end{equation}

\par 
We may assume the following at $ E=E_{0}$ : Corresponding to the decomposition
into generalized eigenspaces 
\begin{align*}
&\mathbf{C}^{2n-2}=V^{s}\oplus V^{u}
 =V^{s}_{1}\oplus \cdots \oplus 
V^{s}_{n^{s}}\oplus V^{u}_{1}\oplus \cdots \oplus V^{u}_{n^{u}},
\\
&{T{\left(E_{0}\right)}^{-1}
B{\left(E_{0}\right)}T{\left(
E_{0}\right)}=\diag{\left(
B^{s}_{1},\dots ,B^{u}_{n^{u}
}\right)},}
\end{align*}
where for all entries $ N^{\#}_{j}:=B^{\#}_{j}- \beta ^{\#}_{j}{\left(E_{0}
  \right)}I_{\dim{\left(V^{\#}_{j}\right)}}$ is strictly lower triangular.
Given any $ \epsilon >0$ we may assume (by rescaling the basis vectors) that
\begin{equation}
\|N^{\#}_{j}\|\leq \epsilon .
\label{eq:18}
\end{equation}

\par We introduce a vector of new variables 
$ \gamma ={\left(\gamma ^{s},\gamma ^{u} \right)}={\left(\gamma _{1},\dots
    ,\gamma _{2n-2}\right)}$
\begin{equation}
\gamma =\gamma {\left(w{\left(E\right)},E\right)}=T{\left(E\right)}^{
-1}w{\left(E\right)},
\label{eq:19}
\end{equation}
where $ \gamma ^{s}$ and $ \gamma ^{u}$ are the vectors of coordinates of the
part of $ w{\left(E\right)}$ in $ V^{s}{\left(E\right)}$ and $
V^{u}{\left(E\right)}$, respectively.\par

\par We shall make the assumption (using
{``}$tr${''} to denote
transposed):
\begin{Hequations}
  \begin{equation}
\parbox{10cm}{%
\it{There exists a smooth eigenvector}
  $v{\left(
E\right)}$  of $B{\left(E\right)}^{tr}$ in $E\in I_{0}$,
such that $\re{\left(\lambda {\left(E\right)}\right)} <-1$
 for the corresponding
eigenvalue $\lambda {\left( E\right)}$.
}\label{H7}
\end{equation}
\end{Hequations}
See Remark~\ref{rem:2.2} below for an alternative condition.\par 

\par The ordering of the eigenvalues may be chosen such that 
\begin{equation}
\beta ^{s}_{1}{\left(E\right)}
=\lambda {\left(E\right)}.
\label{eq:20}
\end{equation}
It may also be assumed that $ v{\left(E\right)}$ is the first row of $
T{\left(E\right)}^{-1}$. Clearly by \eqref{eq:20} $ \beta
^{s}_{1}{\left(E\right)} $ is smooth for $ E\in I_{0}$.\par

\par We call 
$ E_{0}$ a resonance of order $ m\in {\left\{2,3,\dots \right\}}$ for an
eigenvalue $ \beta ^{\#}_{j}{\left(E_{0} \right)}$ if for some $
\alpha ={\left(\alpha _{1},\dots \alpha _{%
      2n-2}\right)}\in {\left({\bf N\cup } {\left\{0\right\}}\right)}^{2n-2}$
with $ {\left\lvert \alpha \right\rvert}=m$,
\begin{equation}
\beta^{\#}_{j}{\left(E_{0} \right)}=\beta {\left(E_{0}\right)} \cdot \alpha .
\label{eq:21}
\end{equation}

\par We assume that 
\begin{Hequations}
  \begin{equation}
  E_{0}\; is \;not \;a \;resonance \;of \;order \; \leq m_{0} 
  \;for \;\beta ^{s}_{1} {\left(E_{0}\right)}.\label{H8}
\end{equation}
\end{Hequations}
Here $ m_{0}$ may be extracted from the bulk of the paper; the condition
\begin{equation} \label{eq:res}
{m_{0}>\max {\left(4,{{1+\re 
{\left(\beta ^{s}_{1}{\left(E_{%
0}\right)}\right)}}\over{-\re 
{\left(\beta ^{s}_{1}{\left(E_{%
0}\right)}\right)}}},\dots ,{{1+\re {\left(\beta ^{s}_{%
n^{s}}{\left(E_{0}\right)}
\right)}}\over{-\re {\left(\beta ^{%
s}_{n^{s}}{\left(E_{0}\right)}
\right)}}}\right)}} 
\end{equation}
 suffices.\par 

\begin{remark}\label{rem:remres}
Typically the set of resonances of all orders will be dense in $I_0$.
The theorem proved with \eqref{H8} does not exclude cases where there
are low order resonances as long as they constitute a discrete
set. This is used in the proof of Theorem \ref{thm:1.2} in Section
\ref{sec:8}.
For the exceptional values $a=
  \tfrac{3}{4}$ and $a=
  2$ of Example \ref{exam:riema1} there are resonances
  of order $5$ and $4$, respectively. For these values of $a$ {\it
    all} positive energies
  are resonances, and consequently our theorem is not applicable.
\end{remark}

\par We shall build a (classical) observable $ \Gamma $ from
the first coordinate $ \gamma _{1}=\gamma _{1}{\left(w{\left(
        E\right)},E\right)}=v{\left( E\right)}\cdot w{\left(E\right)} $ of $
\gamma ^{s}=\gamma ^{s}{\left(w{\left( E\right)},E\right)}$
\begin{equation}
\Gamma =\gamma _{1}{\left(w{\left(E\right)},E\right)}+O{\left(|\gamma {\left(
w{\left(E\right)},E\right)}|^{2}\right)}.
\label{eq:22}
\end{equation}

\par In the study of an analogous quantum observable we consider in
detail the case where for some 
$1\leq l\leq n-1$
\begin{equation}
\label{b13}
\partial _{\eta _{l}}\gamma _{1}{\left(
w,E_{0}\right)}_{|w=0}\not=0.
\end{equation}
We notice that if (\ref{b13}) is not true then for some
$1\leq l\leq n-1$
\begin{equation}
\label{bb13}
\partial _{u_{l}}\gamma _{1}{\left(
w,E_{0}\right)}_{|w=0}\not=0.
\end{equation}

\par The construction of the quantum $ \Gamma $ in the case
of (\ref{b13}) and an elaboration of its decay properties will be given in
Section~\ref{sec:5}. A Mourre esimate is given in Section~\ref{sec:6}, and we
complete the proof of Theorem~\ref{thm:1.1} in this case in
Section~\ref{sec:7}. We refer the reader to Remarks~\ref{rem:5.3},
\ref{rem:6.3} and \ref{rem:7.2} for the
modifications needed for showing Theorem~\ref{thm:1.1} in the case of (\ref{bb13}).

\subsection{Outline of proof of Theorem~\ref{thm:1.1}}
\label{sec:outl-proof-theor}

Consider a classical orbit with $
{\left(\hat{x}{\left(t\right)},\xi{\left(t\right)}\right)} \rightarrow
{\left(\omega{\left(E\right)},\xi{\left(E\right)}\right)} $ for $ t\rightarrow
\infty $ (and $E$ near $E_{0}$). How do we prove the bound $ |u|+|\eta
|\leq Ct^{-\delta }$ for some positive $ \delta $?\par

\par We consider the observables
\begin{equation}
\label{b14}
q^{s}=|\gamma ^{s}|^{2},\enspace 
q^{u}=|\gamma ^{u}|^{2},\enspace
q^{-}=q^{u}-q^{s},\enspace q^{+}=q^{u}+q^{s}=|\gamma |^{
2}.
\end{equation}

\par Using \eqref{eq:6} and \eqref{eq:17} we compute
\begin{equation}
\label{b15}
{{d}\over{dt}}\gamma ={{\partial _{\mu 
}h}\over{x_{n}}}{\left\{{\left(
B^{s}{\left(E\right)}\gamma ^{
s},B^{u}{\left(E\right)}\gamma ^{
u}\right)}+O{\left(q^{+}\right)}
\right\}}.
\end{equation}

\par 
For $ \epsilon >0$ small enough in \eqref{eq:18} the
equation (\ref{b15}) leads to 
\begin{equation}
\label{b17}
{\frac{d}{dt}}q^{-}=2\re {\left\langle 
\gamma ^{u},{\frac{d}{dt}}\gamma ^{
u}\right\rangle }_{\mathbf{C}^{n^{u}}
}-2\re {\left\langle \gamma ^{s},{
\frac{d}{dt}}\gamma ^{s}\right\rangle }_{%
\mathbf{C}^{n^{s}}}\geq \delta ^{-}
t^{-1}q^{+}
\end{equation}
for some positive $ \delta ^{-}$ (which may be chosen independent of $E$ close
enough to $ E_{0}$) and for all $ t\geq t^{-}$ (with $ t^{-}$ large enough).
\par

\par In particular 
$ q^{-}$ is increasing and hence 
\begin{equation}
\label{b18}
 q^{-}\leq 0;\quad t\geq t^{-}.
\end{equation}
\par Using (\ref{b15}), (\ref{b18}) and the Cauchy-Schwarz
inequality we compute
\begin{equation}
\label{b19}
{{d}\over{dt}}q^{s}=2\re {\left\langle 
\gamma ^{s},{{d}\over{dt}}\gamma ^{s}\right\rangle }_{\mathbf{C}^{n^{s}}
}\leq -2\delta ^{s}t^{-1}q^{s}
\end{equation}
for some positive $ \delta ^{s}$ and all $ t\geq t^{s}$.\par

\par Integrating (\ref{b19}) yields
\begin{equation}
\label{b20}
q^{s}\leq C^{s}t^{-2\delta ^{s}},\quad t\geq t^{s}.
\end{equation}

\par Finally from (\ref{b18}) and (\ref{b20}) we conclude
that $ q^{+}\leq 2C^{s}t^{-2\delta ^{s}} $ and therefore that
\begin{equation}
\label{b21}
 |\gamma|\leq Ct^{-\delta };\quad \delta \leq \delta ^{s}.
\end{equation}

This classical proof will be the basis for our quantum arguments in Section 4 which
constitute step I) of the proof of Theorem \ref{thm:1.1}.

\begin{remarks}\label{rem:2.1} \hfill
\begin{enumerate}
\item We may choose the positive $ \delta $ in (\ref{b21}) as close to the
    (optimal) exponent $ \min {\left(\re {\left(-\beta
            ^{s}_{1}{\left(E_{0}\right)} \right)},\dots ,\re {\left(-\beta
            ^{s}_{n^{s}}{\left(E_{0}\right)} \right)}\right)}$ as we wish
    (provided $E$ is taken close enough to $ E_{0}$).

\item Although not needed, one may easily prove using similar differential
    inequalities that indeed $q^{u}=O{\left({\left(q^{s}\right)}^{ 2}\right)}$
    in complete agreement with the stable manifold theorem.
\end{enumerate}
\end{remarks}

\noindent{\bf Classical $\Gamma$}.
To implement step  II) of the proof, we shall for each $ m\in {\left\{1,\dots ,m_{0}\right\}} $ construct a $
\gamma ^{{\left(m\right)}}$ of the form \eqref{eq:22} such that
\begin{equation}
\label{B15}
{{d}\over{dt}}\gamma ^{{\left(m\right)} }={{\partial _{\mu }h}\over{x_{n}
}}\beta ^{s}_{1}{\left\{\gamma ^{
{\left(m\right)}}+O{\left(|
\gamma |^{m+1}\right)}\right\}}
;\quad \beta ^{s}_{1}=\beta ^{
s}_{1}{\left(E\right)}.
\end{equation}
Specifically we shall require
\begin{equation}
\label{BB21}
\gamma ^{{\left(1\right)}}=\gamma _{
1},\text{ and }\gamma ^{{\left(m\right)}
}=\gamma _{1}+\sum _{2\leq {\left\lvert 
\alpha \right\rvert}\leq m}c_{\alpha }
\gamma ^{\alpha };\quad m\geq 2,
\end{equation}
with $ \gamma ^{\alpha }=\gamma ^{\alpha _{1} }_{1}\cdots \gamma ^{\alpha
  _{2n-2} }_{2n-2}$. (It will follow from the construction below that the
coefficients $ c_{\alpha }=c_{\alpha }{\left(E\right)} $ will be smooth; this
will be important for {``}quantizing{''} the symbol.)\par

\par We proceed inductively. Clearly by (\ref{b15}) we have (\ref{B15}) for
  $m=1$. Now suppose we have constructed a 
function 
$\gamma ^{{\left(m-1\right)}}=\sum 
_{{\left\lvert \alpha \right\rvert}\leq m-1}
c_{\alpha }\gamma ^{\alpha }$
 obeying 
 \begin{equation*}
{{{d}\over{dt}}\gamma ^{{\left(
m-1\right)}}={{\partial _{\mu }
h}\over{x_{n}}}\beta ^{s}_{1}
{\bigg(\gamma ^{{\left(m-1\right)}
}+\sum _{{\left\lvert \alpha \right\rvert}
=m}d_{\alpha }\gamma ^{\alpha }+
O{\left({\left\lvert \gamma \right\rvert}^{m+1}\right)}\bigg)},} 
\end{equation*}
then we add to $ \gamma ^{{\left(m-1\right)}}$ a function of the form $ \sum
_{{\left\lvert \alpha \right\rvert}= m}c_{\alpha }\gamma ^{\alpha }$ and we
need to solve 
\begin{equation}
\label{BB22}
 {{d}\over{dt}}\sum _{{\left\lvert
      \alpha \right\rvert}=m}c_{\alpha }
\gamma ^{\alpha }={{\partial _{
\mu }h}\over{x_{n}}}\beta ^{s
}_{1}\sum _{{\left\lvert \alpha \right\rvert}
=m}{\left(c_{\alpha }-d_{\alpha 
}\right)}\gamma ^{\alpha }+O{\left(
{\left\lvert \gamma \right\rvert}^{m+1}\right)}.
\end{equation}

\par For that we compute the derivative using again (\ref{b15}).
Let us denote by $ \beta _{ij}$ the $ij${'}th entry of the matrix $ \hbox{\rm
  diag}{\left(B^{s}{\left(E\right)}^{ tr},B^{u}{\left(E\right)}^{
      tr}\right)}$. Then (\ref{BB22}) reduces to solving
\begin{equation}
\label{BB23}
 \sum_{{\left\lvert \tilde{\alpha} \right\rvert}=m}\sum _{i,j}
\tilde{\alpha}_{i}\beta _{ij}c_{\tilde{\alpha}}\gamma ^{\tilde{\alpha}-e_{i}+e_{j}
}=\beta ^{s}_{1}\sum 
_{{\left\lvert \alpha \right\rvert}=m}{\left(
c_{\alpha }-d_{\alpha }\right)}
\gamma ^{\alpha },
\end{equation}
which in turn reduces to solving the system of algebraic equations
\begin{equation}
\label{BB24}
\sum _{i,j}{\left(\alpha _{i
}+1-\delta _{ij}\right)}\beta _{%
ij}c_{\alpha +e_{i}-e_{j}}=
\beta ^{s}_{1}{\left(c_{\alpha }
-d_{\alpha }\right)};\quad |
\alpha |=m.
\end{equation}
Here $ e_{i}$ and $e_{j}$ denote canonical basis vectors in $
\mathbf{R}^{2n-2}$ and $ \delta _{ij}$ is the Kronecker symbol.\par

\par Clearly (\ref{BB24}) amounts to showing that
$\beta ^{s}_{1}$
 is not an eigenvalue of the linear map
$\tilde{B}$
 on 
$
\mathbf{C}^{\tilde{n}}$
with 
\begin{equation*}
{\tilde{n}=\#{\left\{\alpha 
\in {\left({\bf N\cup }{\left\{0\right\}}
\right)}^{2n-2}|\quad |
\alpha |=m\right\}}={{{\left(m
+2n-3\right)}!}\over{{\left(2n-3\right)}
!m!}}} 
\end{equation*}
given by 
\begin{multline*}
\mathbf{C}^{\tilde{n}}\ni c={\left(
c_{\alpha }\right)}_{\alpha }
\\
\rightarrow 
{\big(\tilde{B}c\big)}_{\alpha }=\bigg(\sum _{i,j}
{\left(\alpha _{i}+1-\delta _{ij}\right)}
\beta _{ij}c_{\alpha +e_{i}-e_{j}
}\bigg)_{\alpha }\in \mathbf{C}^{\tilde{n}}.
\end{multline*}
Since 
$
\beta _{ij}=\beta _{ij}{\left(E\right)}
$
 depends continuously on 
$
E\in I_{0}$
we only need to show that 
\begin{equation}
\label{BBB100} 
\tilde{B}{\left(E_{0}\right)}
-\beta ^{s}_{1}{\left(E_{0}\right)}
I\text{  is invertible.}
\end{equation}
By the condition \eqref{H8} indeed (\ref{BBB100}) holds since
$m\leq m_{0}$
and the spectrum
\begin{equation*}
{\sigma {\left(\tilde{B}{\left(
E_{0}\right)}\right)}={\left\{
\beta {\left(E_{0}\right)}\cdot 
\alpha |\quad |\alpha |=m\right\}}.}
\end{equation*}
The latter is obvious if 
$\diag {\left(B^{s}{\left(E_{0}\right)}^{tr},B^{u}{\left(
E_{0}\right)}^{tr}\right)}
$ is diagonal. In general this may be seen 
by a perturbation argument, see [N, p. 37].\par 

\par Finally we define
\begin{equation*}
\Gamma =\gamma ^{{\left(m_{0}\right)}
}.
\end{equation*}

\par If we have 
$ m_{0}$ so large that $ \delta {\left(m_{0}+1\right)}> -\beta
^{s}_{1}{\left(E\right)} $ where $ \delta $ is given as in (\ref{b21}) we
infer by integrating (\ref{B15}) (since $ \lim_{t\rightarrow \infty }
t{{\partial _{\mu }h}\over{x_{n}}} =1$ ) that 
\begin{equation}
\label{b23}
\Gamma =\gamma _{1}+O{\left(|\gamma |^{%
2}\right)}=O{\left(t^{\beta ^{%
s}_{1}{\left(E\right)}+\epsilon ^{%
\prime}}\right)};\quad \epsilon ^{%
\prime}>0.
\end{equation}

\begin{remark}\label{rem:2.2}
  We could have used a different observable constructed by a
  similar iteration using as $ \gamma ^{{\left(1\right)}}$ a component of $
  \gamma $ corresponding to an eigenvector with eigenvalue $ \lambda
  {\left(E\right)}$ having $ \re {\left(\lambda {\left(E\right)} \right)}>0$.
  We would again need smoothness of the eigenvector and a non-resonance
  condition for $ \lambda {\left(E_{0}\right)}$, cf. \eqref{H7} and \eqref{H8}.
  The analogous observable $ \gamma ^{{\left(m\right)}}$ decreases as $
  t^{-\delta {\left(m+1\right)}}$ with no upper bound on $m$ (assuming $
  E_{0}$ is not a resonance of any order). But as we will see below, the
  correspondence between classical and quantum behavior is not so precise as
  to allow a similar statement in Quantum Mechanics. Thus it does not much
  matter which of these observables is used.
\end{remark}

\noindent {\bf Quantum $\Gamma$}.
To get a statement like (\ref{b23}) in Quantum Mechanics we
need to quantize the classical symbol 
$
\gamma ^{{\left(m\right)}}=\gamma ^{%
{\left(m\right)}}{\left(x,\xi
\right)}$. We choose a quantization that takes into account
localizations of the states 
$
\psi =f{\left(H\right)}\psi $
 obeying \eqref{eq:8} and \eqref{eq:9}. We
fix $m=m_{0}$
 depending on an analogof the classical bound
(\ref{b21}), cf.  the classical case discussed above. Without
going into details, in the case of (\ref{b13}) this operator takes
the form
\begin{equation*}
\Gamma =\Gamma {\left(t\right)}
={\left(p-\xi{\left(E_{0}\right)}
\right)}\cdot \omega_{l}{\left(E_{
0}\right)}+B_{1}{\left(t\right)}
;\quad B_{1}{\left(t\right)}
\text{  bounded}.
\end{equation*}
We want $B_1(t)$ to be bounded to facilitate our uncertainty principle
argument (see Section 6). The fact that this works even though the
classical $\Gamma$ does not have this form rests on the localizations
of $\psi$.  
Strictly speaking, to get the above expression we first make the modification
of the classical $ \Gamma $ of dividing by the constant
$
c_{l}=\partial _{\eta _{l}
}\gamma _{1}{\left(w,E_{0}\right)}_{|w=0}$
 and then taking the real part; we shall not discuss
the case of (\ref{bb13}) here. We show the following analog of
(\ref{b23}):\par 

\par Given $ \sigma >0$ we have for some
$ \Gamma $ of this form the strong localization
\begin{equation}
\label{H14}
\left\|1_{[t^{\sigma -1},\infty )
}{\left(|\Gamma |\right)}
e^{-itH}\psi \right\|\rightarrow 0\text{ for }t\rightarrow 
\infty .
\end{equation}

\par We notice that (\ref{H14}) is a weaker bound than (\ref{b23}); to control
various commutators we need to have 
$ \sigma $ positive. On the other hand it may appear somewhat
surprising that such localization result can be proved at all for
$
\sigma <2^{-1}$. According to folklore wisdom there is usually a
strong connection for pseudodifferential operators between the
functional calculus and the pseudodifferential calculus, see for example
[DG, Appendix D]. In our case one might think that (\ref{H14}) is
equivalent to a statement like
\begin{equation*}
e^{-itH}\psi\approx a^{w}_{t}{\left(
x,p\right)}e^{-itH}\psi \text{  for }
t\rightarrow \infty , 
\end{equation*}
where the symbol $a_{t}=h{\left(t^{1-\sigma }\re {\left(c^{-1}_{l}\gamma
        ^{{\left( m_{0}\right)}}\right)} \right)}$ for suitable $ h\in
C^{\infty }_{0}{\left(\mathbf{R}\right)} $ and $ \gamma ^{{\left(m_{0}\right)}
}$ given by the classical symbol (possibly modified by cut-offs) discussed
above. However for $ \sigma <2^{-1}$ such symbols $ a_{t}$ do not fit into any
standard (parameter-dependent) pseudodifferential calculus which by
        the uncertainty principle  essentially
would require the uniform bounds $ \partial ^{\beta }_{\xi}\partial ^{\alpha
}_{x}a_{t}=O{\big(t^{\delta _{2}|\beta |-\delta _{1}|\alpha |}\big)}$ with
${\delta _{2}}<\delta _{1}$. As a consequence we shall base our proof of
(\ref{H14}) on a functional calculus approach. Using a differential
\textit{equality} related to (\ref{B15}) we can indeed bound certain quantum
errors in a calculus even for $ \sigma <2^{-1}$. It is important that we can
take $ \sigma $ small; see below. Somewhat related problems were
studied in [G1] and [CHS1].\par
\begin{remarks*}\hfill 
\begin{enumerate}
\item There is a subtle point suppressed in the above discussion which is
very important technically. Although the $t^{-\delta}$--localization
proved in step I)  is needed to construct the quantum $\Gamma$ and
prove \eqref{H14}, its full force cannot be used for this purpose. The
reason is that an effective use of the operator calculus  limits the strength of this
localization (this is basically
the uncertainty principle again). Thus using a strong $t^{-\delta}$--localization results in a weaker localization for $\Gamma$. The full force of the $t^{-\delta}$--localization is only exploited at the very end of the proof of Theorem \ref{thm:1.1} in Section 7.

\item Another technical point not discussed here is the use of a certain
hierarchy of localizations in the construction of $\Gamma$ (and
$\bar{A}$ below) necessary because of the variation of
$(\omega(E),\xi(E))$ with $E$. The fact that our procedure here
actually works may look  almost miraculous at first glance (see
\eqref{Ff1} and \eqref{Ff2}).
\end{enumerate}
\end{remarks*}

\noindent {\bf Implementing the uncertainty principle}.
The last step in our proof of Theorem~\ref{thm:1.1} is the decisive one; here
Quantum Mechanics enters crucially. We show that a localization similar to the
classical bound (\ref{b21}) and (\ref{H14}) are incompatible unless $ \psi
=0$. First fix $ \delta >0$ in agreement with (\ref{b21}). More precisely we
need the localization 
\begin{equation}
\label{HHH15}
e^{-itH}\psi \approx h_{2}{\left(\bar{A}\right)}e^{-itH}\psi 
\rightarrow 0\text{ for }t\rightarrow \infty ,
\end{equation}
for some 
$h_{2}\in C^{\infty }_{0}{\left(
\mathbf{R}\right)}$
 and some operator of the form 
 \begin{equation*}
{\bar{A}=t^{\delta -1}x_{l}+B_{2}{\left(t\right)};\quad 
B_{2}{\left(t\right)}=O{\big(
t^{\delta }\big)},\quad x_{%
l}=x\cdot \omega_{l}{\left(E_{0}
\right)}.}
\end{equation*}
Then fix any $ \sigma \in {\left(0,\delta \right)}$ and introduce with $
\Gamma $ as in (\ref{H14}) the operator $ \bar{H}=t^{1-\delta
}\Gamma$.

\par We prove a global Mourre estimate
\begin{equation}
\label{H15} 
i{\left[\bar{H},\bar{A}
\right]}\geq 2^{-1}I.
\end{equation}

\par Abstract Mourre theory and (\ref{H15}) lead to the bound
\begin{equation}
\label{H16} 
\left\|h_{2}{\left(\bar{A}
\right)}h_{1}{\left(t^{\delta 
-\sigma }\bar{H}\right)}
\right\|\leq Ct^{{\left(\sigma -\delta \right)}
/2},
\end{equation}
valid for all $ h_{1},h_{2}\in C^{\infty }_{0} {\left(\mathbf{R}\right)}$.\par

\par Finally picking localization functions in agreement with (\ref{HHH15})
and (\ref{H14}) we conclude from (\ref{H16}) 
that 
\begin{equation*}
e^{-itH}\psi \approx h_{2}{\left(\bar{A}\right)}h_{1}{\left(
t^{\delta -\sigma }\bar{H}\right)}
e^{-itH}\psi \rightarrow 0\text{ for }t\rightarrow \infty ,
\end{equation*}
completing the proof.\par

\section{Preliminaries}
\label{sec:3}

\par We use the notation 
$\Psi {\left(m,g\right)}$ for the space of operators given by quantizing
symbols in the symbol class $ S{\left(m,g\right)}$ as defined by
[H\"o, (18.4.6)]. For the weight functions $m$ and metrics $g$ relevant for
this paper it does not matter here whether {``}quantize{''} refers to Weyl or
Kohn-Nirenberg quantization. For $ a\in S{\left(m,g\right)}$ we use the
notation $ a^{w}{\left(x,p\right)}$ to denote the Weyl quantization of $a$. We
refer the reader to [DG, Appendix D] and [H\"o, Chapter 18] for a detailed
account of the calculus of pseudodifferential operators. We shall deal with
various kinds of parameter-dependent symbols. In one case the parameter is
time $ t\geq 1$ and for that we introduce the following shorthand
notation.\par

\begin{definition}
\label{def:3.1}
A family $ {\left\{a_{t}|t\geq 1\right\}}$ of symbols in $
S{\left(m,g\right)}$ is said to be uniform in $ S{\left(m,g\right)}$ if for
all semi-norms $ ||\cdot ||_{k}$ on $ S{\left(m,g\right)}$ (cf. [H\"o,
(18.4.6)]) $ \sup_{t}\| a_{t}\|_{k}<\infty $. In this
case we write $ a_{t}\in S_{unif}{\left(m,g\right)} $ and $
a^{w}_{t}{\left(x,p\right)} \in \Psi _{unif}{\left(m,g\right)} $.
\end{definition}

\par Given this uniformity property various bounds from the calculus of
pseudodifferential operators are uniform in the parameter (by continuity
properties of the calculus). \par 

\par We shall also deal with parameter-dependent metrics. Specifically
we shall consider for $ 0\leq \delta _{2}<\delta _{1}\leq 1$ and $ t\geq 1$
\begin{equation}
\label{c-10} 
g_{t}=g^{\delta _{1},\delta _{2}
}_{t}=t^{-2\delta _{1}}dx^{2
}+t^{2\delta _{2}}d\xi^{2}.
\end{equation}
Similarly to Definition~\ref{def:3.1} we shall write (for given $ l\in
\mathbf{R}$), $ a_{t}\in S_{unif}{\left(t^{l},g_{t}\right)}$ and $
a^{w}_{t}{\left(x,p\right)}
\in \Psi _{unif}{\left(t^{l},g_{%
      t}\right)}$ meaning that for all (time-dependent) semi-norms
$\sup_{t}\|a_{t}\|_{t,k}<\infty $. Also in this case various bounds from the
calculus of pseudodifferential operators will be uniform in the parameter.
Some extensions of this idea will be used without further comment.\par

\par One may verify that \eqref{eq:10} follows from \eqref{eq:8} by applying a
partition of unity to the $f$ of any state $ \psi =f{\left(H\right)}\psi $ of
\eqref{eq:8} to decompose it as $ f=\sum f_{i}$ and by noticing that
\eqref{eq:8} remains
valid for the sharper localized states $ \psi \rightarrow \psi
_{i}=f_{i}{\left( H\right)}\psi $.  (Notice that if $\supp{\left(f_{i}\right)}
$ is located near $ E_{i}$ this leads to $ 
t^{-1}x\approx k{\left(E_{i}\right)} \omega{\left(E_{i}\right)}$ and $
p\approx \xi{\left(E_{i}\right)} $ along $ \psi _{i}{\left(t\right)}$.) The latter
follows readily upon commutation and applying Lemma~\ref{lem:3.2} stated below. The same
argument shows that indeed $ \mathcal{H}_{0}$ is $H${--}reducing. (This property
may also be verified without appealing to Lemma~\ref{lem:3.2}.)\par

\par Pick non-negative 
$g_{1},\tilde{g}_{1},\tilde{\tilde{g}}_{1} \in C^{\infty
}_{0}{\left(\mathbf{R}^{ n}\right)}$ such that $ g_{1}=1$ in a (small)
neighborhood of $k{\left(E_{0}\right)}\omega_{0}$, $\tilde{g}_{1}=1$
in a
neighborhood of $ \supp{\left(g_{1}\right)} $ and $ \tilde{\tilde{g}}_{1}=1$
in a neighborhood of $\supp{\left(\tilde{g}_{1}\right)}$. Similarly, pick 
non-negative   $g_{2},\tilde{g}_{2},\tilde{\tilde{g}}_{2}\in C^{\infty
}_{0}{\left(\mathbf{R}^{n}\right)}$ such that $ g_{2}=1$ in a neighborhood of $
\xi_{0}$, $ \tilde{g}_{2}=1$ in a neighborhood of $\supp{\left(g_{2}\right)} $ and $
\tilde{\tilde{g}}_{2}=1$
in a neighborhood of 
$\supp{\left(\tilde{g}_{2}\right)}$. We suppose 
$\supp {\left(\tilde{\tilde{g}}_{1}\right)}\times 
\supp {\left(\tilde{\tilde{g}}_{2}\right)}\subseteq 
\mathcal{U}_{0}$
 (with 
$\mathcal{U}_{0}$
 given as in \eqref{eq:9}), and in fact that the
supports are so small that for some 
$t_{0}\geq 1$ the symbol
\begin{equation}
\label{c1} 
\begin{split}
  h_{t}{\left(x,\xi\right)}:&=h
{\left(x,\xi\right)}\tilde{g}_{%
  1}{\left(t^{-1}x\right)}\tilde{g}_{2}{\left(\xi\right)}
\\
&=h{\left(r_{0}\hat{x},\xi \right)}\tilde{g}_{1}
{\left(t^{-1}x\right)}\tilde{g}_{%
  2}{\left(\xi\right)};
\quad t\geq t_{0},
\end{split}
\end{equation}
cf. (\ref{H1}). By the assumption (\ref{H2}) we then have 
\begin{equation}
\label{c2} 
h_{t}\in S_{unif}{\left(1,g_{0}
\right)}\cap S_{unif}{\left(1,g^{%
1,0}_{t}\right)}.
\end{equation}

\begin{lemma}\label{lem:3.2}
  For all
  $ f\in C^{\infty }_{0}{\left(\mathbf{R}\right)} $ the family
  \begin{equation}
    \label{CC2} 
     f{\left(h^{w}_{t}{\left(x,p\right)} \right)}\in \Psi _{unif}{\left(
        1,g_{0}\right)}\cap \Psi _{unif} {\left(1,g^{1,0}_{t}\right)} 
\end{equation}
and 
\begin{equation}
\label{cc2} 
 \left\|g_{1}{\left(t^{-1}x\right)} g_{2}{\left(p\right)}{\left\{
    f{\left(h^{w}_{t}{\left(x,p\right)} \right)}-f{\left(H\right)}\right\}}
\right\|=O{\left(t^{-\infty }\right)}.
\end{equation}
\end{lemma}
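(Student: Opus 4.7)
The plan is to use the Helffer–Sjöstrand almost analytic functional calculus together with the pseudodifferential calculi associated to the metrics $g_0$ and $g_t^{1,0}$. Let $\tilde f \in C^{\infty}_0(\mathbf{C})$ be an almost analytic extension of $f$, chosen so that $\bar\partial \tilde f(z) = O(|\im z|^N)$ for every $N$, and write
\[
f(h_t^w(x,p)) = -\pi^{-1} \int_{\mathbf{C}} \bar\partial \tilde f(z)\,(z - h_t^w(x,p))^{-1} L(dz),
\]
and analogously for $f(H)$.

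For \eqref{CC2}, the first step is a standard parametrix construction for $(z - h_t^w)^{-1}$. Using \eqref{c2} and the fact that $z\notin \mathbf{R}$, iteration starting from the symbol $(z - h_t)^{-1}$ yields a symbol $r_{z,t}$ with $r_{z,t}^w(x,p) = (z - h_t^w)^{-1}$, uniform in $t$ in both $S(1,g_0)$ and $S(1,g_t^{1,0})$, and with semi-norms growing at most polynomially in $|\im z|^{-1}$. Integrating against $\bar\partial \tilde f$, whose decay on the real axis absorbs those polynomial factors, gives uniform symbol bounds for $f(h_t^w)$ in both classes.

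For \eqref{cc2}, apply the Helffer–Sjöstrand formula to the difference:
\[
f(h_t^w) - f(H) = -\pi^{-1}\int \bar\partial \tilde f(z)\,(z - h_t^w)^{-1}(h_t^w - H)(z - H)^{-1} L(dz).
\]
The key observation is that $h_t - h = h(\tilde g_1(t^{-1}x)\tilde g_2(\xi) - 1)$ vanishes wherever $\tilde g_1(t^{-1}x)\tilde g_2(\xi) = 1$, hence on a neighborhood of $\supp\{g_1(t^{-1}\cdot)\}\times \supp\{g_2\}$; by the nesting $g_1 \prec \tilde g_1$ and $g_2 \prec \tilde g_2$ this yields a uniform separation of supports in the metric $g_t^{1,0}$. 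To exploit this, commute the localization past the resolvent via
\[
g_1(t^{-1}x)g_2(p)(z - h_t^w)^{-1} = \sum_{k=0}^{N-1}(z-h_t^w)^{-1-k}\,\mathrm{ad}_{h_t^w}^{k}\!\bigl(g_1(t^{-1}x)g_2(p)\bigr) + R_N(z,t),
\]
where each iterated commutator with $h_t^w$ gains a factor of $t^{-1}$ in the $g_t^{1,0}$-calculus, and the principal symbols of the iterated commutators remain supported in $\supp\{g_1(t^{-1}\cdot)g_2\}$. Composing each such term with $(h_t^w - H) = (h_t - h)^w$ produces a product of two operators whose Weyl symbols have $g_t^{1,0}$-separated supports; the calculus then gives $O(t^{-\infty})$ in operator norm, with semi-norm bounds polynomial in $|\im z|^{-1}$. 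Taking $N$ large and using the rapid vanishing of $\bar\partial \tilde f$ on the real axis to absorb the $|\im z|^{-1}$ factors, the integrated estimate yields $\|g_1(t^{-1}x)g_2(p)\{f(h_t^w) - f(H)\}\| = O(t^{-\infty})$.

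The main technical obstacle is the bookkeeping in the commutator expansion: one must track simultaneously the gain in powers of $t^{-1}$ from each commutator, the polynomial growth in $|\im z|^{-1}$ from the resolvents, and the uniformity (in $t$) of the disjoint-support estimate inside the parameter-dependent calculus set up in Definition~\ref{def:3.1}. Once these are organized, the remainder $R_N$ and the main sum both contribute $O(t^{-\infty})$, and the integrated Helffer–Sjöstrand representation closes the argument.
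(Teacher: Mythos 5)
Your proof is correct, but it is organized differently from the paper's, and the comparison is instructive.

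For \eqref{CC2} the paper verifies the Beals criterion for $f(h_t^w)$ via the representation \eqref{82a}, whereas you build a parametrix for $(z-h_t^w)^{-1}$ and then integrate against $\bar\partial\tilde f$. Both are standard; the Beals route is perhaps cleaner because you never actually get an exact equality $r_{z,t}^w=(z-h_t^w)^{-1}$ from iteration (only an approximate inverse plus a remainder which must be inverted separately), a point your sketch glosses over.

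For \eqref{cc2} the two arguments are mirror images of each other. Writing $B=h_t^w$, $G=h^w-h_t^w$, $L=g_1(t^{-1}x)g_2(p)$, both begin from the Helffer--Sj\"ostrand representation of $f(B)-f(H)$ and must estimate $L(B-z)^{-1}G(H-z)^{-1}$. The paper commutes $G$ past $(B-z)^{-1}$, so the intermediate terms have the form $L\,ad_B^k(G)(B-z)^{-k-1}(H-z)^{-1}$, whose first factor is $O(t^{-\infty})$ by disjoint supports; the remainder involves $ad_B^m(G)$ and is controlled by the global weight bound $ad_B^m(G)\in\Psi_{unif}(\langle x\rangle^{l-m},g_0)$ from \eqref{H2}, which requires taking $m>l$. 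You instead commute $L$ past $(B-z)^{-1}$, so the intermediate terms contain the product $ad_B^k(L)\,G$, again $O(t^{-\infty})$ by disjoint supports, and the remainder contains $ad_B^N(L)$, which is $O(t^{-N})$ simply because the Planck constant of $g_t^{1,0}$ is $t^{-1}$ and $L\in\Psi_{unif}(1,g_t^{1,0})$. Your remainder estimate is thus independent of the global growth exponents $r,l$ of \eqref{H2}, which is a mild simplification. One step you leave implicit: in the remainder term $R_N(h_t^w-H)(z-H)^{-1}$ the factor $h_t^w-H$ is unbounded, and it does not come into direct contact with $ad_B^N(L)$ (there is a resolvent in between), so the disjoint-support device is unavailable. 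This is easily repaired by noting $(z-h_t^w)^{-1}(h_t^w-H)(z-H)^{-1}=(z-H)^{-1}-(z-h_t^w)^{-1}$, which is bounded uniformly by $2|\im z|^{-1}$; you should make this explicit.
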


This lemma facilitates the transition between the functional calculus and the pseudo-differential operator calculus, both of which are used in this paper.

\begin{proof}
  As for (\ref{CC2}) we may proceed as in the proofs of [DG, Propositions
  D.4.7 and D.11.2].  (One verifies the Beals criterion using the
  representation (\ref{82a}) given below and the calculus of
  pseudodifferential operators.)\par

\par For (\ref{cc2}) we let 
$ B=h^{w}_{t}{\left(x,p\right)} $ and $ G=h^{w}{\left(x,p\right)}-h^{w
}_{t}{\left(x,p\right)}$. By (\ref{82a}) 
\begin{equation}
\label{c3} 
 f{\left(h^{w}_{t}{\left(x,p\right)}
\right)}-f{\left(H\right)}={{%
  1}\over{\pi }}\int _{\mathbf{C}} {\left(\bar{\partial }\tilde{f} \right)}{\left(z\right)}{\left(
B-z\right)}^{-1}G{\left(H-z\right)}^{-1}dudv.
\end{equation}
For any large $ m\in \mathbf{N}$ we may decompose
\begin{multline}
\label{ccc3} 
 {\left(B-z\right)}^{-1}G=
\\
\sum _{k=1}^{m}ad^{k}_{B}{\left(
G\right)}{\left(B-z\right)}^{%
-k}+{\left(B-z\right)}^{-1}a d^{m}_{B}{\left(G\right)}
{\left(B-z\right)}^{-m},
\end{multline}
yielding (by the calculus) 
\begin{equation}
\label{c4} 
\begin{split}
  &g_{1}{\left(t^{-1}x\right)} g_{2}{\left(p\right)}{\left(
      B-z\right)}^{-1}G=\sum _{k=1
}^{m}R_{k}{\left(B-z\right)}^{%
  -k}
\\
&\qquad\qquad+g_{1}{\left(t^{-1}x\right)} g_{2}{\left(p\right)}{\left(
    B-z\right)}^{-1}ad^{m}_{B}
{\left(G\right)}{\left(B-z\right)}^{%
  -m};
\\
& R_{k}=O{\left(t^{- \infty }\right)}.
\end{split}
\end{equation}
By (\ref{H2}), $ ad^{m}_{B}{\left(G\right)}
\in \Psi _{unif}{\left({\left\langle x\right\rangle }^{%
      l-m},g_{0}\right)}$ and therefore $ ad^{m}_{B}{\left(G\right)}
=O{\left(t^{l-m}\right)}$, whence 
\begin{equation}
\label{cc3} 
 \left\|g_{1}{\left(t^{-1}x\right)} g_{2}{\left(p\right)}{\left(
B-z\right)}^{-1}G\right\|\leq Ct^{l-m}|\im z|^{-{\left(m+1\right)} }
\end{equation}
uniformly in $ z\in \supp {\big(\tilde{f}
  \big)}$.\par

\par Clearly (\ref{cc2}) follows from (\ref{c3}) and (\ref{cc3}). 
\end{proof}

\begin{remark}\label{rem:3.3}
The statements of Lemma~\ref{lem:3.2} extend to any smooth function $f$ with $
{{d^{k}}\over{d\lambda ^{k}}} 
f{\left(\lambda \right)}=O{\left(\lambda ^{%
      m-k}\right)}$ (for fixed $ m\in \mathbf{R}$); in particular Lemma~\ref{lem:3.2}
holds for $ f{\left(\lambda \right)}=\lambda$.
\end{remark}

\begin{definition}\label{def:3.4}
Let
  $ \mathcal{F}_{+}$ denote the largest set of $ F=F_{+}\in C^{\infty
  }{\left(\mathbf{R} \right)}$, such that $ 0\leq F\leq 1$, $ F^{\prime}\geq
  0,F^{\prime}\in C^{\infty
}_{0}{\left({\left({{1}\over{%
            2}},{{3}\over{4}}\right)} \right)}$, $
F{\left({{1}\over{2}}\right)} =0$, $F{\left({{3}\over{4}} \right)}=1$ and $
\sqrt{1-F} $, $\sqrt{F}$, $\sqrt{F^{\prime}} \in C^{\infty }$, which is stable
under the maps $ F\rightarrow F^{m}$  and  $ F\rightarrow 1-{\left(1-
    F\right)}^{m};$ $m\in \mathbf{N}$. Let $ \mathcal{F}_{-}$ denote the
set of functions $ F_{-}=1-F_{+}$ where $ F_{+} \in \mathcal{F}_{+}$.
\end{definition}

\par We shall in Section~\ref{sec:5} use a modification of the abstract calculus
[D, Lemma A.3 (b)], see also [DG, Appendix C], [G1, Appendix] or
[M{\o}].

\begin{lemma}\label{lem:3.5}  
  Suppose $ \bar{H}$ and $B$ are self-adjoint operators on a complex Hilbert
  space $ \mathcal{H}$ , and that $ {\left\{B{\left(t\right)}\mid
      t>t_{0}\right\}}$ is a family of self-adjoint operators on $
  \mathcal{H}$ with the common domain $ \mathcal{D}{\left(B{\left(t\right)}
    \right)}=\mathcal{D}{\left(B\right)} $. Suppose that $ \bar{H}$ is
  bounded, that the commutator form $ i{\left[\bar{H},B{\left(
          t\right)}\right]}$ defined on $ \mathcal{D}{\left(B\right)}$ is a
  symmetric operator with same (operator) domain $
  \mathcal{D}{\left(B\right)}$ and that the $
  \mathcal{B}{\left(\mathcal{H}\right)}\hbox{\rm -valued} $ function $
  B{\left(t\right)}{\left(B{\left(t\right)}-i\right)}^{ -1}$ is continuously differentiable.
  Then

\par (A)\enspace For any given $ F\in C^{\infty }_{0}{\left(\mathbf{R}\right)} $ we let $
  \tilde{F}\in C^{\infty }_{0 }{\left(\mathbf{C}\right)}$ denote
  an almost analytic extension. In particular 
  \begin{equation}
    \label{82a} 
    F{\left(B{\left(t\right)}\right)} ={{1}\over{\pi }}\int _{\mathbf{C}
    }{\left(\bar{\partial} \tilde{F}\right)}{\left(
        z\right)}{\left(B{\left(t\right)} -z\right)}^{-1}dudv,\quad z=u+
iv.
\end{equation}

The $ \mathcal{B}{\left(\mathcal{H}\right)}$-valued function $
F{\left(B{\left(t\right)}\right)} $ is continuously differentiable, and
introducing the Heisenberg derivative $
\mathbf{D}={{d}\over{dt}}+i{\left[\bar{H},\cdot \right]}$, the form $$
{{d}\over{dt}}F{\left(B{\left(t
\right)}\right)}+i{\left[\bar{H},F{\left(B{\left(t\right)} \right)}\right]}$$
is given by the bounded 
operator 
\begin{equation}
\label{83a} 
 \mathbf{D}F{\left(B{\left(t\right)}\right)} =-{{1}\over{\pi }}\int _{\mathbf{C}
}{\left(\bar{\partial} \tilde{F}\right)}{\left(
    z\right)}{\left(B{\left(t\right)} -z\right)}^{-1}{\left(\mathbf{D}B{\left(
        t\right)}\right)}{\left(B{\left( t\right)}-z\right)}^{-1}dud v.
\end{equation}

\par  In particular if 
$ \mathbf{D}B{\left(t\right)}$ is bounded then for any $ \epsilon >0$ (with $
{\left\langle z\right\rangle }={\left(1+|z|^{
      2}\right)}^{{{1}\over{2}} }$) 
\begin{equation}
\label{8883a} 
\left\|\mathbf{D}F{\left(B{\left(t\right)} \right)}\right\|\leq C_{\epsilon
}\sup_{z\in \mathbf{C}}{\Big({\left\langle z\right\rangle }^{%
      \epsilon +2}|\im z|^{-2}| {\big(\bar{\partial }\tilde{F}
      \big)}{\left(z\right)}| \Big)}
 ||\mathbf{D}B{\left(t\right)} ||.
\end{equation}

\par {(B)}\enspace
Suppose in addition that we can split $ \mathbf{D}B{\left(t\right)}=D{\left(
    t\right)}+D_{r}{\left(t\right)} $, where $ D{\left(t\right)}$  and
$D_{r} {\left(t\right)}$ are symmetric operators on $
    \mathcal{D}{\left(B\right)}$ and that the form $
    i^{k}ad^{k}_{B{\left(t\right)} 
}{\left(D{\left(t\right)}\right)} =i{\big[i^{k-1}ad^{k-1}_{B{\left(
          t\right)}}{\left(D{\left(t\right)}
      \right)},B{\left(t\right)}\big]} $ for $k=1$ defined on $ \mathcal{
  D}{\left(B\right)}$ is a symmetric operator on $
\mathcal{D}{\left(B\right)}$; $ ad^{0}_{B{\left(t\right)}}
{\left(D{\left(t\right)}\right)} =D{\left(t\right)}$. (No assumption is made
for the form when $k=2$.) Then the contribution from $ D{\left(t\right)}$ to
(\ref{83a}) can be written as

\begin{align}\label{84a} 
-{{1}\over{\pi }}\int
  _{\mathbf{C}} &{\big(\bar{\partial} \tilde{F}\big)}{\left(
      z\right)}{\left(B{\left(t\right)} -z\right)}^{-1}D{\left(t\right)}
{\left(B{\left(t\right)}-z\right)}^{
  -1}dudv\nonumber
\\
&={{1}\over{2}}{\left(
    F^{\prime}{\left(B{\left(t\right)} \right)}D{\left(t\right)}+D{\left(
        t\right)}F^{\prime}{\left(B{\left( t\right)}\right)}\right)}
+R_{1}{\left(t\right)};
\\
R_{1}{\left(t\right)}&={{1}\over{
  2\pi }}\int _{\mathbf{C}}{\big( \bar{\partial}\tilde{F}
\big)}{\left(z\right)}{\left(
B{\left(t\right)}-z\right)}^{
-2}\nonumber\\
&\qquad\qquad\qquad\cdot ad^{2}_{B{\left(t\right)} }{\left(D{\left(t\right)}\right)}
{\left(B{\left(t\right)}-z\right)}^{
  -2}dudv.\nonumber
\end{align}
For all  $ f\in C^{\infty }_{0}{\left( \mathbf{R}\right)}$
\begin{align}\label{85a} 
  & {{\tfrac{1}{2}}{\left(f^{ 2}{\left(B{\left(t\right)}\right)}
        D{\left(t\right)}+D{\left(t\right)} f^{2}{\left(B{\left(t\right)}
          \right)}\right)}}\nonumber
  \\
  &=f{\left( B{\left(t\right)}\right)}D{\left(
      t\right)}f{\left(B{\left(t\right)} \right)}+R_{2}{\left(t\right)} ;
  \\
  R_{2}{\left(t\right)}&=\frac{1}{2\pi ^2}\int _{\mathbf{C}}\int
  _{\mathbf{C}}{\big(\bar{\partial} \tilde{f}\big)}{\left(
      z_{2}\right)}{\big(\bar{\partial }\tilde{f}\big)}
  {\left(z_{1}\right)}{\left(
      B{\left(t\right)}-z_{2}\right)}^{-1}{\left(B{\left(t\right)}-z_{1}\right)}^{-1}\nonumber
\\
&ad^{ 
    2}_{B{\left(t\right)}}{\left( D{\left(t\right)}\right)}{\left(
      B{\left(t\right)}-z_{1}\right)}^{ -1}{\left(B{\left(t\right)}-z_{
        2}\right)}^{-1}du_{1}dv_{ 1}du_{2}dv_{2}.\nonumber
\end{align}

\par {(C)} \enspace
Suppose in addition to previous assumptions that for all $ t>t_{0}$ the form $
i{\left[D{\left(t\right)},B{\left( t\right)}\right]}$ extends from $ \mathcal{
  D}{\left(B\right)}$ to a bounded self-adjoint operator. Similarly suppose
the operator $ D_{r}{\left(t\right)}$ extends to a bounded self-adjoint
operator. Then for all $ F\in \mathcal{ F}_{+}$ the $ \mathcal{
  B}{\left(\mathcal{H}\right)}$-valued function $
F{\left(B{\left(t\right)}\right)} {\left(B-i\right)}^{-1}$ is continuously
differentiable, and there is an almost analytic extension with
\begin{equation}
\label{87a} 
\big|{\big(\bar{\partial}\tilde{F}\big)}{\left(z\right)}
\big|\leq C_{k}{\left\langle z\right\rangle }^{%
  -1-k}|\im z|^{k};\ k\in \mathbf{N},
\end{equation}
yielding the representation 
\begin{equation}
\label{86a} 
 \mathbf{D}F{\left(B{\left(t\right)}\right)} =F^{\prime{{1}\over{2}}}{\left(
    B{\left(t\right)}\right)}D{\left( t\right)}F^{\prime{{1}\over{2}}
}{\left(B{\left(t\right)}\right)} +R_{1}{\left(t\right)}+R_{2}
{\left(t\right)}+R_{3}{\left( t\right)},
\end{equation}
where $ R_{1}{\left(t\right)}$ is given by (\ref{84a}), $
R_{2}{\left(t\right)}$ by (\ref{85a}) with $ f=\sqrt{F^{\prime}} $ and $
R_{3}{\left(t\right)}$ is the contribution from $ D_{r}{\left(t\right)}$ to
(\ref{83a}).
\end{lemma}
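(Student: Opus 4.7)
The plan is a three-step exploitation of the Helffer--Sjöstrand formula \eqref{82a} together with iterated resolvent commutation.

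For (A), I would differentiate \eqref{82a} in $t$ and take the commutator with $\bar H$ under the integral. The resolvent identities
\[
\frac{d}{dt}(B(t)-z)^{-1}=-(B(t)-z)^{-1}\dot B(t)(B(t)-z)^{-1},
\]
\[
i[\bar H,(B(t)-z)^{-1}]=-(B(t)-z)^{-1}\,i[\bar H,B(t)]\,(B(t)-z)^{-1}
\]
hold as form identities on $\mathcal D(B)$; the first follows from the continuous differentiability of $B(t)(B(t)-i)^{-1}$ via the algebraic identity $(B(t)-z)^{-1}=(B(t)-i)^{-1}(I+(i-z)(B(t)-z)^{-1})$, while the second uses that $i[\bar H,B(t)]$ is symmetric on $\mathcal D(B)$. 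Summing, integrating, and extracting bounded operators on $\mathcal H$ yields \eqref{83a}. The exchange of differentiation/commutator with the integral is legitimate because $|(\bar\partial\tilde F)(z)|$ vanishes to infinite order at the real axis, dominating $|\im z|^{-2}$. The bound \eqref{8883a} then follows from $\|(B(t)-z)^{-1}\|\le|\im z|^{-1}$ after inserting $\langle z\rangle^{-\varepsilon}\langle z\rangle^{\varepsilon}$ to obtain $\langle z\rangle^{-2}$--integrability over $\mathbf C$.

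For (B) the key algebraic identity is
\[
(B-z)^{-1}D(B-z)^{-1}=\tfrac12\bigl(D(B-z)^{-2}+(B-z)^{-2}D\bigr)-\tfrac12(B-z)^{-2}\,ad_{B}^{2}(D)\,(B-z)^{-2},
\]
obtained by two applications of $[(B-z)^{-1},A]=-(B-z)^{-1}[B,A](B-z)^{-1}$: first with $A=D$ to symmetrize the middle $D$, then with $A=[B,D]$ to absorb the remaining single resolvent on each side. Inserting this into the $D$--contribution of \eqref{83a} and recognizing $F'(B)=-\frac{1}{\pi}\int(\bar\partial\tilde F)(z)(B-z)^{-2}du\,dv$ produces \eqref{84a}. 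For \eqref{85a} I would write both $\tfrac12(f^{2}(B)D+Df^{2}(B))$ and $f(B)Df(B)$ as double Helffer--Sjöstrand integrals, identify their difference as $\tfrac12[f(B),[f(B),D]]$, and expand each inner commutator with the same resolvent identity. The result is a double integral whose integrand, after the two commutators, carries exactly $ad_{B}^{2}(D)$ flanked by the four resolvents, matching the displayed form of $R_{2}$.

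For (C) the obstacle is that $F\in\mathcal F_{+}$ is not compactly supported. I would exploit that $F$ is constant outside a compact set and $F'\in C_{0}^{\infty}$ to build an almost analytic extension $\tilde F$ satisfying the enhanced decay \eqref{87a} (the scalar part $F(+\infty)I$ is annihilated by $\mathbf D$ and is harmless). Applying (A) gives \eqref{83a}; splitting $\mathbf DB(t)=D(t)+D_{r}(t)$ and applying \eqref{84a} to the $D$--term yields $\tfrac12\{F'(B),D\}+R_{1}$; then \eqref{85a} with $f=\sqrt{F'}\in C^\infty$ (available by stability of $\mathcal F_{+}$ under $F\mapsto 1-(1-F)^{m}$) converts the anticommutator into $\sqrt{F'}(B)\,D\,\sqrt{F'}(B)+R_{2}$. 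The $D_{r}$--contribution is by definition $R_{3}$, and assembling these gives \eqref{86a}. The main technical hurdle throughout is justifying the operator manipulations for unbounded $B(t)$ and ensuring absolute convergence of the double integral in $R_{2}$; the standing hypotheses (symmetry of the iterated commutator forms on $\mathcal D(B)$, boundedness of $ad_{B(t)}^{2}(D(t))$ and $D_{r}(t)$) are precisely what allow each sesquilinear form in the argument to extend to a bounded operator, so that every $R_{k}$ is a well-defined Bochner integral.
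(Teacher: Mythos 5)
Your proposal is correct and follows the standard Helffer--Sj\"ostrand resolvent-commutation route; the paper itself gives no proof of this lemma but cites [D, Lemma A.3(b)], [DG, Appendix C], [G1, Appendix], [M\o], and your argument is exactly the one used there. In particular, your symmetrization identity
$RDR=\tfrac12(DR^{2}+R^{2}D)-\tfrac12R^{2}\,\mathrm{ad}_{B}^{2}(D)\,R^{2}$
and the computation $\tfrac12(f^{2}D+Df^{2})-fDf=\tfrac12[f(B),[f(B),D]]$ reproduce the paper's $R_{1}$ and $R_{2}$ exactly; the only loose remark is the parenthetical in (C) about subtracting $F(+\infty)I$ (this does not produce a compactly supported function since $F\in\mathcal F_{+}$ tends to different constants at $\pm\infty$), but the genuinely operative point you also identify --- that the enhanced decay \eqref{87a} makes all the integrals absolutely convergent --- is what actually carries the extension to $\mathcal F_{+}$.
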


\begin{remarks*}\hfill
  \begin{enumerate}
  \item The left hand side of (\ref{86a}) is initially defined as a form on $
    \mathcal{D}{\left(B\right)}$ while the terms on the right hand side are
    bounded operators. We shall use the stated representation formulas for
    bounding these operators in an application in the proof of
    Proposition~\ref{prop:5.1}; this will be in the spirit of (\ref{8883a})
    although somewhat more sophisticated.
  \item  There are versions of Lemma~\ref{lem:3.5} 
    without the assumption that $\bar{H}$ is bounded; they are not needed in this
    paper.
\end{enumerate}
\end{remarks*}

\section{$t^{-\delta }${--}localization}
\label{sec:4}

\par Let 
$ \psi =f{\left(H\right)}\psi $ be any state obeying \eqref{eq:8} and
\eqref{eq:9} with $f$ supported in a very small neighborhood of $
E_{0}$ (in agreement with the smallness of the neighborhood $ I_{0}$
of Theorem~\ref{thm:1.1}). Let $
g_{1},\tilde{g}_{1},g_{2},\tilde{g}_{2}\in C^{\infty
}_{0}{\left(\mathbf{R}^{n}\right)} $ be given as in (\ref{c1}) and
(\ref{cc2}). In particular we have 
\begin{equation*}
g_{1}{\left(k{\left(E\right)}
    \omega{\left(E\right)}\right)} f{\left(E\right)}=f{\left(E\right)},\; g_{2}{\left(\xi{\left(E\right)}
  \right)}f{\left(E\right)}=f{\left( E\right)}. 
\end{equation*}

Consider for 
$ t,\kappa \geq 1$ symbols 
\begin{equation}
\label{d1} 
a=a_{t,\kappa }{\left(x,\xi\right)}
=F_{+}{\left(\kappa q^{-}{\left(
x,\xi\right)}\right)}\tilde{g}_{%
1}{\left(t^{-1}x\right)}\tilde{g}_{2}{\left(\xi\right)},
\end{equation}
where $ F_{+}$ is given as in Definition~\ref{def:3.4} and $q^{-}$ is
built from the $ q^{-}$ of (\ref{b14}) by writing $
q^{-}=q^{-}{\left(w{\left(E\right)},E\right)}$ and substituting for
$E$ the symbol $ h{\left(r_{0}\hat{x},\xi \right)}$ cf. (\ref{c1}),
\begin{equation}
\label{d2} 
q=q^{-}{\left(w{\left(h{\left(r_{%
0}\hat{x},\xi\right)}
\right)},h{\left(r_{0}\hat{x},\xi\right)}\right)}.
\end{equation}

\par We shall consider 
$ \kappa \in {\left[1,t^{\nu }\right]} $ with $ \nu >0$. To have a
good calculus for the symbol $a$ we need $ \nu <1/2$. Notice that
\begin{equation}
\label{d3} 
a_{t,\kappa }\in S_{unif}{\left(1,g^{%
1-\nu ,\nu }_{t}\right)},
\end{equation} 
and that the ``Planck constant'' for this symbol class is $h=t^{2\nu-1}$.

\par 
Denoting by $ {\left\langle \cdot \right\rangle }_{t}$ the expectation
in the state $ \psi {\left(t\right)}=e^{-itH} \psi $ we have the
following localization.\par

\begin{lemma}\label{lem:4.1}
  For all $ \nu \in {\left(0,2/5\right)}$
  \begin{equation}
\label{D3} 
 {\left\langle a^{w}_{t,t^{\nu }}{\left( x,p\right)}\right\rangle
}_{t}\rightarrow 0\;\hbox {\rm for }t\rightarrow \infty .
\end{equation}
\end{lemma}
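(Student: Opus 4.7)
The plan is to establish \eqref{D3} by a quantum version of the classical monotonicity argument leading to \eqref{b18}, namely a Heisenberg-derivative (propagation) estimate on the time-dependent observable $B(t):=a^w_{t,t^\nu}(x,p)$ set up in the uniform symbol calculus $\Psi_{unif}(1,g_t^{1-\nu,\nu})$. Letting $\phi(t):=\langle\psi(t),B(t)\psi(t)\rangle$, I would compute $\dot\phi(t)=\langle\mathbf{D} B(t)\rangle_t$ with $\mathbf{D}=\partial_t+i[H,\,\cdot\,]$, and aim for a lower bound of the form
\begin{equation*}
\mathbf{D}B(t)\ \ge\ c\,t^{-1}\bigl(F_+'(t^\nu q)\,\tilde g_1(t^{-1}x)\,\tilde g_2(\xi)\bigr)^w\ -\ R(t),
\qquad \int_1^\infty\|R(t)\|_{\psi(t)}\,dt<\infty,
\end{equation*}
whose use will be twofold: first, a monotonicity-type conclusion that $\phi_\infty=\lim_t\phi(t)$ exists, and second, integrability of a propagation observable $t^{-1}\langle F_+'(t^\nu q)^w\tilde g_1\tilde g_2\rangle_t$.

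The principal-symbol computation yields
\begin{equation*}
\partial_t a_{t,t^\nu}+\{h,a_{t,t^\nu}\}=F_+'(t^\nu q)\bigl[\nu t^{\nu-1} q+t^\nu\{h,q\}\bigr]\tilde g_1\tilde g_2+F_+(t^\nu q)\,\bigl[\partial_t+\{h,\cdot\}\bigr](\tilde g_1\tilde g_2).
\end{equation*}
On $\operatorname{supp}F_+'(t^\nu q)$ one has $t^\nu q\in[1/2,3/4]$, so $q>0$, and the classical identity \eqref{b17} together with $\partial_\mu h/x_n\approx t^{-1}$ on $\operatorname{supp}\tilde g_1(t^{-1}x)$ gives $\{h,q\}\geq \delta^-|x|^{-1}q^+\geq \delta^-|x|^{-1}q$; the bracketed factor is therefore bounded below by $(\nu+\delta^-)t^{-1}(t^\nu q)\geq \tfrac12(\nu+\delta^-)\,t^{-1}$, producing the desired main positive term. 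The secondary term involving $[\partial_t+\{h,\cdot\}](\tilde g_1\tilde g_2)$ is supported either where $t^{-1}x$ is away from $k_0\omega_0$ or $\xi$ is away from $\xi_0$; inserting $1-g_1(t^{-1}x)$ or $1-g_2(\xi)$ and invoking \eqref{eq:8} (via Lemma \ref{lem:3.2}) turns these boundary contributions into integrable errors. Finally, the Moyal remainders from passing between symbol-level and operator-level identities are controlled in $\Psi_{unif}(t^l,g_t^{1-\nu,\nu})$: each correction costs a factor $t^{2\nu-1}$, and the bound $\nu<2/5$ is exactly what is needed for two such corrections (together with one $t^{-1}$ from the main term) to beat $t^{-1}$ and thus be integrable.

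Given the estimate, integration in $t\in[1,T]$ and the trivial bound $\phi(T)\le\|\psi\|^2$ deliver both the existence of $\phi_\infty$ and the finiteness of $\int_1^\infty t^{-1}\langle F_+'(t^\nu q)^w\tilde g_1\tilde g_2\rangle_t\,dt$. To upgrade $\phi_\infty\ge 0$ to $\phi_\infty=0$, I would produce a subsequence $t_n\to\infty$ with $\phi(t_n)\to 0$ by invoking \eqref{eq:9}: the fixed symbol $\tilde a_{\kappa_0}(y,\xi)=F_+(\kappa_0 q^-(\hat y,\xi))\tilde g_1(y)\tilde g_2(\xi)$ lies in $C^\infty_0(\mathcal{U}_0\setminus\gamma(I_0))$ because $q^-$ vanishes on $\gamma(I_0)$, so \eqref{eq:9} yields a sequence $t_n$ along which $\langle\tilde a_{\kappa_0}^w(t_n^{-1}x,p)\rangle_{t_n}\to 0$. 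One then bridges between the fixed-$\kappa_0$ statement and the time-dependent statement by writing $F_+(t^\nu q)-F_+(\kappa_0 q)=\int_{\kappa_0}^{t^\nu}qF_+'(\sigma q)\,d\sigma$ and applying the same Heisenberg derivative estimate uniformly in the parameter $\sigma\in[\kappa_0,t^\nu]$ to show $\phi(t_n)-\phi_{\kappa_0}(t_n)\to 0$.

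The main obstacle I anticipate is the bookkeeping of Weyl remainders in the shrinking class $S_{unif}(1,g_t^{1-\nu,\nu})$, since the ``Planck constant'' $h=t^{2\nu-1}$ does not decay fast enough to absorb multiple commutator expansions unless $\nu<2/5$; every time $H$ is commuted with $F_+(\kappa q)^w$ one must keep track of subprincipal contributions from $h\in S({\langle\xi\rangle}^r{\langle x\rangle}^l,g_0)$, and verify that these correspond to terms which, in expectation along $\psi(t)$, decay integrably in $t^{-1}dt$ by combining \eqref{eq:8} and \eqref{eq:9}. A secondary technical point is the uniform-in-$\sigma$ propagation estimate needed to execute the bridging step to \eqref{eq:9}, which must survive when $\sigma$ is allowed to grow like $t^\nu$ without destroying the $S_{unif}$ bounds.
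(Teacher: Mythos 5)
Your proposal captures the right classical mechanism (the monotonicity of $q^-$ from \eqref{b17}--\eqref{b18}), the correct principal-symbol computation, and the correct origin of the restriction $\nu<2/5$. However there are two genuine gaps, and both are avoided by a device in the paper that your sketch omits.

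\textbf{Gap 1: boundary terms are not integrable without the $L_1$ sandwich.} You work with $B(t)=a^w_{t,t^\nu}(x,p)$ directly, whereas the paper uses $A_{t,\kappa}=L_1(t)^*a^w_{t,\kappa}(x,p)L_1(t)$ with $L_1(t)=g_1(t^{-1}x)g_2(p)$. The secondary term $F_+(t^\nu q)\bigl[\partial_t+\{h,\cdot\}\bigr](\tilde g_1\tilde g_2)$ in your Poisson-bracket computation is a real symbol of size $O(t^{-1})$ supported where $\tilde g_1$ or $\tilde g_2$ varies. The condition \eqref{eq:8} only gives pointwise decay of $\|(1-g_1(t^{-1}x))\psi(t)\|$, not integrability, and Cauchy--Schwarz applied to \eqref{eq:9} gives at best $\int t^{-1}\|a^w\psi(t)\|\,dt\leq(\int t^{-1}dt)^{1/2}(\int t^{-1}\|a^w\psi(t)\|^2dt)^{1/2}=\infty$. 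So the claim that ``inserting $1-g_1$ or $1-g_2$ and invoking \eqref{eq:8} turns these boundary contributions into integrable errors'' does not go through. The paper's remedy is that after sandwiching with $g_1g_2$, whose supports are strictly inside the region where $\tilde g_1,\tilde g_2\equiv 1$, these boundary terms become $O(t^{-\infty})$ in the symbol calculus for free. The derivative hitting $L_1$ itself is then the only boundary term left (the operator $T$ in the paper), and that one is genuinely handled by \eqref{eq:9} after symmetrizing.

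\textbf{Gap 2: your bridging step to $\phi_\infty=0$ has the wrong monotonicity.} Since $F_+$ is nondecreasing, $F_+(t^\nu q)\geq F_+(\kappa_0 q)$ on the support for $t^\nu\geq\kappa_0$, hence $\phi(t_n)-\phi_{\kappa_0}(t_n)=\int_{\kappa_0}^{t_n^\nu}\langle(qF'_+(\sigma q)\tilde g_1\tilde g_2)^w\rangle_{t_n}\,d\sigma\geq 0$ up to symbol-calculus errors. Knowing $\phi_{\kappa_0}(t_n)\to 0$ therefore only gives $\liminf\phi(t_n)\geq 0$ --- no upper bound on $\phi(t_n)$. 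The paper's way around this is more structural: treat $\kappa$ as a free parameter in $[1,t^\nu]$, use $\lim_{s\to\infty}\langle A_{s,\kappa}\rangle_s=0$ (from \eqref{eq:10}, itself a consequence of \eqref{eq:8}) to write $-\langle A_{t,\kappa}\rangle_t=\int_t^\infty\langle\mathbf{D}A_{s,\kappa}\rangle_s\,ds$ for each fixed $\kappa$, then prove $\int_t^\infty\langle\mathbf{D}A_{s,\kappa}\rangle_s\,ds\geq o(t^0)-Ct^{5\nu-2}$ \emph{uniformly} in $\kappa\in[1,t^\nu]$ (this is (4.9) in the paper), and only at the very last step set $\kappa=t^\nu$, getting $\langle A_{t,t^\nu}\rangle_t\leq o(t^0)+Ct^{5\nu-2}\to 0$. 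This ``integrate from $t$ to $\infty$ at fixed $\kappa$, then choose $\kappa=t^\nu$ last'' trick sidesteps the monotonicity-in-$\kappa$ obstruction entirely. Your ``apply the Heisenberg estimate uniformly in $\sigma$'' step would need a spelled-out argument, and the natural implementation integrates the propagation observable in time, not in $\sigma$ at fixed time; I do not see how to close the loop as written.
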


This lemma is a quantum version of (\ref{b18}).

\begin{proof}
  We shall use a scheme of proof from [D]. Let
  \begin{equation}
\label{d4} 
A_{t,\kappa }=L_{1}{\left(t\right)}^{
  *}a^{w}_{t,\kappa }{\left(x,p\right)} L_{1}{\left(t\right)};\quad
L_{1}{\left(t\right)}=g_{1} {\left(t^{-1}x\right)}g_{2}
{\left(p\right)}.
\end{equation}

\par From \eqref{eq:10} and the calculus of pseudodifferential
operators we immediately conclude that for fixed $ \kappa $
\begin{equation*}
{\left\langle A_{s,\kappa }\right\rangle }_{%
s}\rightarrow 0\text{ for }s\rightarrow \infty ,
\end{equation*}
yielding 
\begin{equation}
\label{d6} 
-{\left\langle A_{t,\kappa }\right\rangle }_{%
  t}=\int _{t}^{\infty }{\left\langle \mathbf{D}A_{s,\kappa
    }\right\rangle }_{s} ds,
\end{equation}
where $ \mathbf{D}$ refers to the Heisenberg derivative $
\mathbf{D}={{d}\over{ds}}+i{\left[H,\cdot \right]}$. We shall show
that the expectation of $ \mathbf{D}A_{s,\kappa }$ is essentially
positive (in agreement with (\ref{b17})). Up to terms $
O{\left(s^{-\infty }\right)}$ we may replace $ \mathbf{D}$ by $
\mathbf{D}_{s}={{d}\over{ds}}+i{\left[
    h^{w}_{s}{\left(x,p\right)},\cdot \right]}$, cf.
Remark~\ref{rem:3.3}. First we notice that 
\begin{equation}
\label{d7} 
 g_{2}{\left(p\right)}g_{1} {\left(s^{-1}x\right)}{\left(
    \mathbf{D}_{s}a^{w}_{s,\kappa }{\left(
        x,p\right)}\right)}g_{1}{\left( s^{-1}x\right)}g_{2}{\left(
    p\right)}\geq -Cs^{5\nu -3},
\end{equation}
where $C>0$ is independent of $ \kappa \in
{\left[1,t^{\nu }\right]} $. \par

\par This bound follows from the calculus. The classical Poisson bracket
contributes by a positive symbol when differentiating $
q{\left(x,\xi\right)}$. The Fefferman-Phong inequality (see [H\"o,
Theorem 18.6.8 and  Lemma 18.6.10]) for this term yields the lower
bound $ O{\left(s^{\nu -1}{\left(s^{2\nu -1}\right)}^{2}\right)}
=O{\left(s^{5\nu -3}\right)}$.\par

\par Hence (uniformly in $ \kappa $)
\begin{align*}
  &\mathbf{D}A_{s,\kappa }\geq {\left\{T+T^{*
      }\right\}}-Cs^{5\nu -3};
  \\ 
&T=g_{2}{\left(p\right)}
g_{1}{\left(s^{-1}x\right)}
a^{w}_{s,\kappa }{\left(x,p\right)}
\mathbf{D}_{s}{\left(g_{1}{\left(
s^{-1}x\right)}g_{2}{\left(
p\right)}\right)}.
\end{align*}
For the contribution from the first term on the right hand side we
invoke \eqref{eq:9} after symmetrizing. We conclude that
\begin{equation}
\label{d9} 
 \int _{t}^{\infty }{\left\langle {\bf D}A_{s,\kappa }\right\rangle
}_{s}d
s\geq o{\left(t^{0}\right)}-Ct^{%
  5\nu -2}\text{ uniformly in }\kappa \in {\left[ 1,t^{\nu
    }\right]}.
\end{equation}
Pick $ \kappa =t^{\nu }$.\par

\par By combining (\ref{d6}), (\ref{d9}), and the Fefferman-Phong inequality, we infer that 
\begin{equation*}
\left\langle A_{t,t^{\nu }}\right\rangle _{%
t}\rightarrow 0\text{  for }t\rightarrow \infty ,
\end{equation*}
and therefore (\ref{D3}). 
\end{proof}

\par Let 
$ q^{+},q^{s}$ and $ q^{u}$ be given as in (\ref{b14}) upon
substituting the symbol $ h{\left(r_{0}\hat{x},\xi \right)}$ for $E$,
cf. the use of $ q^{-}$ above. We introduce the symbols 
\begin{align*}
  &a^{1}_{t}=t^{\nu -1}
  q^{-}{\left(x,\xi\right)}F^{
    \prime}_{+}{\left(t^{\nu }q^{
        -}{\left(x,\xi\right)}\right)}
  \tilde{g}_{1}{\left(t^{
        -1}x\right)}\tilde{g}_{
    2}{\left(\xi\right)},
  \\ 
  &a^{2}_{t}=t^{\nu -
    1}q^{+}{\left(x,\xi\right)}
  F^{\prime}_{+}{\left(t^{\nu }
      q^{-}{\left(x,\xi\right)}\right)}
  \tilde{g}_{1}{\left(t^{
        -1}x\right)}\tilde{g}_{
2}{\left(\xi\right)}. 
\end{align*}

\par We get the following integral estimate from the above proof
employing the uniform boundedness of the family of {``}propagation
observables{''} $ A_{t,t^{\nu }}$, cf. a standard argument of
scattering theory see for example [D, Lemma A.1 (b)].\par

\begin{lemma}\label{lem:4.2}
  In the state $ \psi _{1}{\left(t\right)}=L_{ 1}{\left(t\right)}\psi
  {\left( t\right)}$
  \begin{equation*}
    \int _{1}^{\infty }{\left(
        |{\left\langle {\left(a^{1}_{t}\right)}^{
              w}{\left(x,p\right)}\right\rangle }_{
          t}|+|{\left\langle {\left(a^{2}_{
                  t}\right)}^{w}{\left(x,p\right)}
          \right\rangle }_{t}|\right)}d
    t<\infty .
\end{equation*}
\end{lemma}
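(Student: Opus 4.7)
The plan is to follow the same propagation observable framework used in the proof of Lemma~\ref{lem:4.1}, but now extracting the principal positive part of $\mathbf{D}A_{s,s^\nu}$ rather than merely obtaining a crude lower bound. Recall $A_{t,\kappa}=L_1(t)^*a^w_{t,\kappa}(x,p)L_1(t)$ with $\kappa=t^\nu$. By~\eqref{d3} and the continuity of the pseudodifferential calculus, the family $A_{t,t^\nu}$ is uniformly bounded, so $\langle A_{t,t^\nu}\rangle_t$ is uniformly bounded in $t$. By the standard Cook/propagation-estimate argument (cf.\ \cite{D}, Lemma A.1(b)), it then suffices to bound $\mathbf{D}A_{s,s^\nu}$ from below by a sum of the desired positive operators plus an integrable error, and integrate from $1$ to $\infty$.

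To identify this positive part, I would carry out the computation of the classical symbol $\mathbf{D}_s a_{s,s^\nu}$ explicitly. The $\partial_s$ derivative of $F_+(s^\nu q^-)$ produces exactly $\nu\, a^1_s$, while the Poisson bracket
\[
\{h,F_+(s^\nu q^-)\}=s^\nu F'_+(s^\nu q^-)\{h,q^-\}
\]
combined with the classical identity \eqref{b17} (which at the symbol level reads $\{h,q^-\}\geq \delta^- s^{-1}q^+$ after absorbing the factor $\partial_\mu h/x_n\sim k(E)/s$) produces a positive multiple of $a^2_s$. All remaining contributions come from derivatives of the cutoffs $\tilde g_1(s^{-1}x)$ and $\tilde g_2(\xi)$; these are the $T+T^*$ terms of the proof of Lemma~\ref{lem:4.1}, which are integrable in expectation by virtue of \eqref{eq:9}, together with pseudodifferential calculus remainders controlled by~$O(s^{5\nu-3})$ as in~\eqref{d7}.

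Next, I would pass from symbol positivity to operator positivity. Since $F'_+\geq 0$ and $q^+\geq 0$, and $q^->0$ on the support of $F'_+(s^\nu q^-)$, both symbols $a^1_s$ and $a^2_s$ are \emph{nonnegative} elements of $S_{unif}(s^{-1},g^{1-\nu,\nu}_s)$. Applying the Fefferman--Phong inequality exactly as in the proof of~\eqref{d7} yields
\[
(a^i_s)^w(x,p)\geq -Cs^{5\nu-3},\qquad i=1,2,
\]
an error which is integrable in $s$ precisely because $\nu<2/5$. Combining with the symbolic computation,
\[
\mathbf{D}A_{s,s^\nu}\geq \nu\, L_1^*(a^1_s)^w L_1 + c\, L_1^*(a^2_s)^w L_1 + \{T+T^*\} - Cs^{5\nu-3},
\]
and integrating against $\psi(s)$ from $1$ to $\infty$ gives, via the uniform bound on $\langle A_{t,t^\nu}\rangle_t$, the integrability of $\langle L_1^*(a^i_s)^w L_1\rangle_s=\langle (a^i_s)^w\rangle_{\psi_1(s)}$. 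Since these expectations are bounded below by the integrable function $-Cs^{5\nu-3}$, the integral of their absolute values is also finite.

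The main obstacle I anticipate is bookkeeping the operator calculus: verifying that each commutator remainder arising from replacing Poisson brackets by operator commutators in the symbol class $S_{unif}(1,g^{1-\nu,\nu}_s)$ is either of order $O(s^{5\nu-3})$, or falls into the $T+T^*$ pattern accessible to~\eqref{eq:9}, or is $O(s^{-\infty})$ via Lemma~\ref{lem:3.2}. Once the calculus bookkeeping is done, the sign of each principal positive contribution is manifestly correct and the argument closes cleanly.
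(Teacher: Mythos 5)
Your proposal is correct and follows essentially the same propagation-observable plus Fefferman--Phong argument that the paper uses. The only cosmetic difference is that the paper packages the key observation as a two-sided symbol comparison $C^{-1}c_t \leq g_1^2g_2^2(a^1_t+a^2_t)\leq Cc_t$, with $c_t$ the leading symbol of $\mathbf{D}A_{s,s^\nu}$, while you decompose $c_s$ directly into its $\partial_s$-piece ($\nu a^1_s$) and Poisson-bracket piece (bounded below by a multiple of $a^2_s$ via~\eqref{b17}); both yield the same conclusion once uniform boundedness of $A_{t,t^\nu}$ and the nonnegativity of $a^i_s$ are invoked.
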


\begin{proof}
  We substitute $ \kappa =t^{\nu }$ in the construction (\ref{d4}).
  Then up to integrable terms the left hand side of (\ref{d7}) (with $
  s=t$ ) is given by $ c^{w}_{t}{\left(x,p\right)} $ with
  \begin{multline*}
    c_{t}{\left(x,\xi\right)}
    =g_{2}{\left(\xi\right)}^{
      2}g_{1}{\left(t^{-1}x\right)}^{
      2}\\{\left(\nu t^{\nu -1}q^{-}
        {\left(x,\xi\right)}+t^{\nu }
        {\left\{h{\left(x,\xi\right)},
            q^{-}{\left(x,\xi\right)}\right\}}
      \right)}F^{\prime}_{+}{\left(
        t^{\nu }q^{-}{\left(x,\xi\right)}
      \right)},
\end{multline*}
where $ {\left\{\cdot ,\cdot \right\}}$ signifies Poisson bracket.\par

\par We have the bounds for some $C>0$ and all large enough
$t$ 
\begin{equation*}
C^{-1}c_{t}{\left(x,\xi\right)}
\leq g_{2}{\left(\xi\right)}^{%
2}g_{1}{\left(t^{-1}x\right)}^{%
2}{\left(a^{1}_{t}{\left(x,\xi\right)}+a^{2}_{t}{\left(
x,\xi\right)}\right)}\leq Cc_{%
t}{\left(x,\xi\right)},
\end{equation*}
from which we readily get the lemma by the Fefferman-Phong inequality.
\end{proof}

\begin{remark}\label{rem:4.3}
  We shall not directly use Lemma~\ref{lem:4.2}. However the proof will be
  important. In particular we shall need the non-negativity of the
  above symbol $ c_{t}$.
\end{remark}

\par Let for 
$ t,\kappa \geq 1$ and $ 0<2\delta <\min {\left(\nu ,2\delta
    ^{s}\right)}$ with $ \nu <    2/5$ and $ \delta ^{s}$ as in
(\ref{b19}) (this number may be taken independent of $E$ close to $
E_{0}$, cf. Remarks~\ref{rem:2.1} (1)), 
\begin{multline*}
b_{t,\kappa }{\left(x,\xi\right)}
=F_{+}{\left(\kappa ^{-1}t^{2\delta 
}q^{s}{\left(x,\xi\right)}
\right)}F_{-}{\left(t^{\nu 
}q^{-}{\left(x,\xi\right)}
\right)}\\\tilde{g}_{1}
{\left(t^{-1}x\right)}\tilde{g}_{%
2}{\left(\xi\right)}\in S_{%
unif}{\left(1,g^{1-\nu ,\nu }_{%
t}\right)}.
\end{multline*}

\begin{lemma}\label{lem:4.4}
  For all $ \epsilon >0$ 
  \begin{equation}
    \label{d12} 
 \left\langle b^{w}_{t,t^{\epsilon }}{\left(
        x,p\right)}\right\rangle _{t}\rightarrow 0\text{ for
        }t\rightarrow \infty . 
\end{equation}
\end{lemma}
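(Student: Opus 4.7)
The plan is to adapt the propagation-observable scheme of Lemma \ref{lem:4.1} to the two-cutoff symbol $b_{s,\kappa}$. Set $B_{s,\kappa} = L_1(s)^* b^w_{s,\kappa}(x,p) L_1(s)$, so that $B_{s,\kappa}$ lies in a uniform pseudodifferential class for $\kappa \in [1, s^\nu]$. I would combine two ingredients: a one-sided Heisenberg-derivative bound $\langle \mathbf{D} B_{s,\kappa}\rangle_s \leq (\text{integrable errors in } s)$ together with a vanishing boundary value at a starting time $s_0 = s_0(t)$ chosen so that $b_{s_0(t), t^\epsilon} \equiv 0$ on the relevant support. Integrating over $s \in [s_0(t), t]$ and letting $s_0(t)\to\infty$ as $t\to\infty$ would then yield $\langle B_{t, t^\epsilon}\rangle_t \to 0$.

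The heart of the argument is a quantization of the classical inequality (\ref{b19}). Up to calculus remainders (cf.\ Remark \ref{rem:3.3}), the Heisenberg derivative acting on the symbol gives
\[
\mathbf{D}_{\mathrm{cl}}\, b_{s,\kappa} = F'_+(\kappa^{-1}s^{2\delta}q^s)\, F_-\,\kappa^{-1}s^{2\delta-1}\bigl[2\delta\, q^s + s\{h,q^s\}\bigr] + F_+\, F'_-(s^\nu q^-)\bigl[\nu s^{\nu-1} q^- + s^\nu\{h,q^-\}\bigr] + R,
\]
where $R$ gathers contributions from derivatives falling on $\tilde g_1$, $\tilde g_2$ and on $L_1$. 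Shrinking $\tilde g_1, \tilde g_2$ so that $|w|$ is uniformly small on their support, the linearization (\ref{b15}) combined with the localization $q^- \leq \tfrac{3}{4}s^{-\nu}$ enforced by the factor $F_-(s^\nu q^-)$ (and the full quantum input of Lemma \ref{lem:4.1}) yields $s\{h,q^s\} \leq -(2\delta^s - o(1))q^s + O((q^s)^{1/2}|q^-|)$, so on $\supp F'_+$ the first square bracket is essentially $-(2\delta^s - 2\delta - o(1))q^s$, which is \emph{negative} because $2\delta < 2\delta^s$. The second square bracket is non-positive on $\supp F'_-$ since $F'_- \leq 0$ while $q^- > 0$ there and (\ref{b17}) forces $s\{h,q^-\}\geq 0$; its integrated contribution will be controlled by a bound of Lemma \ref{lem:4.2} type. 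Finally $R$ is supported away from $\gamma(I_0)$ and is handled via \eqref{eq:9} exactly as in the proof of Lemma \ref{lem:4.1}; together with Fefferman--Phong remainders of order $O(s^{5\nu-3})$, the total error remains integrable over $[1,\infty)$ uniformly in $\kappa\in [1, s^\nu]$.

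For the boundary, $b_{s_0,\kappa}$ vanishes identically on $\supp(\tilde g_1\tilde g_2)$ whenever $\kappa \leq s_0^{2\delta}/(2C_0)$, with $C_0 = \sup_{\supp(\tilde g_1\tilde g_2)}q^s$; hence for $\epsilon>0$ the choice $s_0(t) = c\,t^{\epsilon/(2\delta)}$ (for small $c>0$) gives $\langle B_{s_0(t),t^\epsilon}\rangle_{s_0(t)} = O(s_0(t)^{-\infty})$, and the hypothesis $2\delta < \nu$ ensures $t^\epsilon \leq s^\nu$ throughout $[s_0(t), t]$, so the uniform calculus applies on the whole interval. Integration then yields
\[
\langle B_{t, t^\epsilon}\rangle_t \leq \langle B_{s_0(t), t^\epsilon}\rangle_{s_0(t)} + \int_{s_0(t)}^t |\text{error}|(s)\,ds \longrightarrow 0 \quad (t\to\infty),
\]
with a Fefferman--Phong inequality providing the complementary lower bound (the case $\epsilon \geq 2\delta$ is immediate, as then $b_{t,t^\epsilon}$ vanishes on $\supp(\tilde g_1\tilde g_2)$ for $t$ large). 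The principal obstacle is the uniform-in-$\kappa$ bookkeeping of remainders: one must verify that the nonlinear correction $O((q^s)^{1/2}|q^-|)$ from the Taylor tail in (\ref{b15}) is dominated by the main negative term throughout $[s_0(t), t]$ (exploiting $2\delta<\nu$ and the smallness of $|w|$ on the support), and that the $F'_-$-supported contribution is genuinely absorbed into an integrable envelope via a Lemma \ref{lem:4.2}-type argument, rather than merely being pointwise bounded.
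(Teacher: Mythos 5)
Your proposal follows essentially the same route as the paper: you build the same propagation observable $B_{s,\kappa}=L_1(s)^*\,b^w_{s,\kappa}\,L_1(s)$, compute the Heisenberg derivative, exploit the sign information coming from the classical inequalities (\ref{b17})--(\ref{b19}) together with the factor $F_-(s^\nu q^-)$ (enforcing $q^-\lesssim s^{-\nu}\ll q^s$ on $\supp F'_+$, so the nonlinear error is dominated), use \eqref{eq:9} and Remark~\ref{rem:4.3} for the $\tilde g_1,\tilde g_2$ and $L_1$ terms, and finally integrate. The only deviation is cosmetic: you make the initial time $s_0(t)\sim t^{\epsilon/(2\delta)}$ depend on $t$ so that $b_{s_0(t),t^\epsilon}$ is identically zero on the support, whereas the paper fixes $t_0$, establishes $\limsup_{t_0\to\infty}\sup_{\kappa\geq1,\,t\geq t_0}\int_{t_0}^t\langle\mathbf{D}B_{s,\kappa}\rangle_s\,ds\le 0$, and then uses $\langle B_{t_0,\kappa}\rangle_{t_0}\to0$ as $\kappa\to\infty$; both rest on exactly the same observation that $F_+(\kappa^{-1}s^{2\delta}q^s)$ vanishes once $\kappa\gg s^{2\delta}$. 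One small inaccuracy: for this particular symbol the $S_{\rm unif}(1,g^{1-\nu,\nu}_t)$ membership actually holds for \emph{all} $\kappa\ge1$ (since larger $\kappa$ makes $F_+(\kappa^{-1}s^{2\delta}q^s)$ flatter), so your restriction $\kappa\le s^\nu$ is unnecessary here, though it happens to be satisfied along your $[s_0(t),t]$ and is therefore harmless.
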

\begin{proof}
 We shall use another scheme of
proof from [D]. Let 
\begin{equation}
\label{d13} 
B_{t,\kappa }=L_{1}{\left(t\right)}^{
  *}b^{w}_{t,\kappa }{\left(x,p\right)} L_{1}{\left(t\right)},
\end{equation}
cf. (\ref{d4}), and write for any (large) $ t_{0}$
\begin{equation}
\label{d14} 
{\left\langle B_{t,\kappa }\right\rangle }_{t}={\left\langle
    B_{t_{0},\kappa }\right\rangle }_{
t_{0}}+\int _{t_{0}}^{t
}{\left\langle \mathbf{D}B_{s,\kappa }\right\rangle }_{
  s}ds.
\end{equation}

\par To show that the left hand side of (\ref{d14}) vanishes as
$ t\rightarrow \infty $ (with $ \kappa =t^{\epsilon }$ ) we look at
the integrand on the right hand side: As in the proof of
Lemma~\ref{lem:4.1} we may replace $ \mathbf{D}$ by $ \mathbf{D}_{s}$
up to a term $ r_{s,\kappa }$ such that 
\begin{equation*}
\int _{t_{0}}^{t}r_{%
s,\kappa }ds\rightarrow 0\text{ uniformly in }\kappa \geq 
1\text{ and }t\geq t_{0}\text{ as }t_{0}
\rightarrow \infty .
\end{equation*}

\par Using \eqref{eq:9} and Remark~\ref{rem:4.3} we may estimate the
integrand up to terms of this type as 
\begin{equation*}
\cdots \leq {\left\langle L_{1}{\left(s\right)}^{
*}{\left(b^{1}_{s,\kappa }\right)}^{
w}{\left(x,p\right)}L_{1}{\left(
s\right)}\right\rangle }_{s},
\end{equation*}
where 
\begin{align*}
  &b^{1}_{s,\kappa }{\left(
      x,\xi\right)}=\kappa ^{-1}s^{2\delta 
  }{\left(2\delta s^{-1}q^{s}{\left(
          x,\xi\right)}+{\left\{h{\left(
              x,\xi\right)},q^{s}{\left(x,
              \xi\right)}\right\}}\right)}
  c_{s,\kappa }{\left(x,\xi\right)}
  ;
  \\ 
  &c_{s,\kappa }{\left(x,\xi
    \right)}=F^{\prime}_{+}{\left(
      \kappa ^{-1}s^{2\delta }q^{s}{\left(
          x,\xi\right)}\right)}F_{-}
  {\left(s^{\nu }q^{-}{\left(
          x,\xi\right)}\right)}\tilde{g}_{%
    1}{\left(s^{-1}x\right)}\tilde{g}_{2}{\left(\xi\right)}.
\end{align*}

\par We compute, cf. (\ref{b19}), that for all large
$s$ and a large constant $C>0$ 
\begin{align*}
  &-Cs^{2\delta -\nu -1}-Cs^{2
    \delta -1}q^{s}{\left(x,\xi\right)}
  c_{s,\kappa }{\left(x,\xi
    \right)}
  \\ 
&\leq b^{1}_{s,\kappa }{\left(
x,\xi\right)}\leq Cs^{2\delta -\nu -
1}-C^{-1}s^{2\delta -1}q^{s}{\left(
  x,\xi\right)}c_{s,\kappa 
}{\left(x,\xi\right)},
\end{align*}
from which we conclude that 
\begin{equation}
\label{d15} 
 \limsup  _{t_{0}\rightarrow \infty
}\sup  _{\kappa \geq 1,t\geq t_{0}}\int _{t_{0}
}^{t}{\left\langle \mathbf{D}B_{s,\kappa } \right\rangle }_{s}ds\leq
0.
\end{equation}

\par As for the first term on the right hand side of (\ref{d14}),
obviously for fixed $ t_{0}$
\begin{equation}
\label{d16} 
{\left\langle B_{t_{0},\kappa }\right\rangle }_{%
  t_{0}}\rightarrow 0\;\hbox{\rm for }\kappa \rightarrow \infty .
\end{equation}
\par Combining (\ref{d15}) and (\ref{d16}) we conclude (by
first fixing $ t_{0}$ ) that 
\begin{equation*}
\limsup  _{t\rightarrow \infty 
}\quad {\left\langle B_{t,t^{\epsilon 
}}\right\rangle }_{t}\leq 0,
\end{equation*}
whence we infer (\ref{d12}). 
\end{proof}

\par Next we {``}absorb{''} the $ \epsilon $ of Lemma~\ref{lem:4.4}
into the $ \delta $ and introduce the symbols 
\begin{equation}
\label{D16} 
\begin{split}
&b_{t}{\left(x,\xi\right)}=
F_{
  -}{\left(t^{2\delta }q^{s}{\left( x,\xi\right)}\right)}F_{-}
{\left(t^{\nu }q^{-}{\left(
x,\xi\right)}\right)}\tilde{g}_{
1}{\left(t^{-1}x\right)}\tilde{g}_{2}{\left(\xi\right)},
\\
&b^{1}_{t}{\left(x,\xi\right)} 
=-t^{-1}F^{\prime}_{-}{\left( t^{2\delta }q^{s}{\left(x,\xi\right)}
  \right)}F_{-}{\left(t^{\nu }q^{-}{\left(x,\xi\right)}
  \right)}\tilde{g}_{1}
{\left(t^{-1}x\right)}\tilde{g}_{
  2}{\left(\xi\right)},
\end{split}
\end{equation}
where $ 0<2\delta <\min {\left(\nu ,2\delta ^{ s}\right)}$ with $ \nu
<2/5$ and $ \delta ^{s}$ as in (\ref{b19}). Clearly 
\begin{equation*}
b_{t}{\left(x,\xi\right)}
\in S_{unif}{\left(1,g^{1-\nu ^{\prime
},\nu ^{\prime}}_{t}\right)}
\subseteq S_{unif}{\left(1,g^{1-\nu ,\nu 
}_{t}\right)};\quad \nu ^{%
\prime}=\nu -\delta .
\end{equation*}

\par We have the following integral estimate.\par 

\begin{lemma}\label{lem:4.5}
  In the state $ \psi _{1}{\left(t\right)}=L_{ 1}{\left(t\right)}\psi
  {\left( t\right)}$
  \begin{equation}
    \label{d17} 
 \int _{1}^{\infty }|{\left\langle
{\left(b^{1}_{t}\right)}^{
w}{\left(x,p\right)}\right\rangle }_{
t}|dt<\infty .
\end{equation}
\end{lemma}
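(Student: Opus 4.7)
The plan is to emulate the propagation observable method used in Lemmas \ref{lem:4.1} and \ref{lem:4.4}. I introduce
\begin{equation*}
B_t = L_1(t)^* b_t^w(x,p) L_1(t), \qquad L_1(t) = g_1(t^{-1}x) g_2(p),
\end{equation*}
with $b_t$ as in \eqref{D16}. Since $b_t \in S_{unif}(1, g_t^{1-\nu',\nu'})$ with $\nu' = \nu - \delta < 2/5$, the calculus produces $B_t$ uniformly bounded, so from
\begin{equation*}
\langle B_t\rangle_t - \langle B_1\rangle_1 = \int_1^t \langle \mathbf{D} B_s\rangle_s \, ds
\end{equation*}
I conclude that $\int_1^\infty \langle \mathbf{D} B_s\rangle_s \, ds$ is bounded. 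The aim is to show $\langle \mathbf{D} B_s\rangle_s \geq c\,\langle L_1(s)^*(b_s^1)^w L_1(s)\rangle_s + R_s$ for some $c > 0$ and $\int_1^\infty |R_s|\, ds < \infty$.

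To identify the main term, I first replace $\mathbf{D}$ by $\mathbf{D}_s = d/ds + i[h_s^w,\cdot]$ using Lemma \ref{lem:3.2} (with $O(s^{-\infty})$ error) and move derivatives of $L_1(s)$ into error terms supported where $\nabla g_1$ or $\nabla g_2$ is nonzero, whose expectations are integrable by \eqref{eq:9}. The principal symbol of $\mathbf{D}_s b_s$ in the calculus with metric $g_s^{1-\nu',\nu'}$ is $\partial_s b_s + \{h,b_s\}$, and lower-order calculus corrections are of order $O(s^{(2\nu'-1)k})$ for $k \geq 1$, hence integrable in $s$ (taking $k = 2$) since $\nu' < 2/5$.

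The principal symbol splits into three pieces coming from the derivatives of $F_-(s^{2\delta}q^s)$, of $F_-(s^\nu q^-)$, and of $\tilde g_1(s^{-1}x)\tilde g_2(\xi)$. For the first piece, the classical bound underlying \eqref{b19} gives $\{h,q^s\} \leq -2\delta^s x_n^{-1} q^s$ near the fixed point, and because $x_n \sim s$ on $\supp \tilde g_1(s^{-1}x)$, $F_-' \leq 0$, and $2\delta < 2\delta^s$, I obtain (using $q^s \sim s^{-2\delta}$ on $\supp F_-'(s^{2\delta}q^s)$)
\begin{equation*}
\bigl[\partial_s F_-(s^{2\delta}q^s) + \{h,F_-(s^{2\delta}q^s)\}\bigr] F_-(s^\nu q^-)\tilde g_1\tilde g_2 \geq c\, b_s^1
\end{equation*}
for some $c > 0$. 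The remaining two pieces are supported in $\mathcal{U}_0\setminus\gamma(I_0)$ (the $q^-$ piece because $F_-'(s^\nu q^-)$ forces $q^- \gtrsim s^{-\nu}$, the $\tilde g_j$ pieces because $\nabla\tilde g_j$ vanish near $(k(E_0)\omega_0,\xi_0)$), and the integrability of their expectations in $\psi(s)$ follows by the Fefferman--Phong propagation argument proving Lemma \ref{lem:4.2}, together with \eqref{eq:9}.

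Assembling everything, $\int_1^\infty \langle(b_s^1)^w\rangle_s\, ds$ is finite. Since $b_s^1 \geq 0$ as a symbol, the Fefferman--Phong inequality gives $(b_s^1)^w \geq -Cs^{-\alpha}$ for some $\alpha > 1$ (determined by $\nu'$), so the expectations and their absolute values differ by an integrable amount, yielding \eqref{d17}. The main obstacle is the $q^-$ piece: its symbol is not manifestly integrable pointwise in $s$, but the localization $q^- \gtrsim s^{-\nu}$ puts it in the class controlled by \eqref{eq:9}, and this step has to be threaded carefully through the parameter-dependent calculus exactly as in the proof of Lemma \ref{lem:4.2}.
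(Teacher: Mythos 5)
Your proof follows the same propagation‑observable strategy as the paper's: build an observable from a symbol of the form $F_{\pm}(t^{2\delta}q^{s})F_{-}(t^{\nu}q^{-})\tilde g_{1}\tilde g_{2}$ sandwiched by $L_{1}(t)$, compute its Heisenberg derivative, identify the $q^{s}$ piece as proportional to $b_{t}^{1}$ via the classical inequality behind (\ref{b19}), and control the $q^{-}$ and $\tilde g_{j}$ pieces using Remark~\ref{rem:4.3}/Lemma~\ref{lem:4.2} and (\ref{eq:9}). There is one genuine difference worth noting: the paper's hint uses $F_{+}(t^{2\delta}q^{s})$, which makes \emph{all three} leading pieces of the derivative non‑positive, so only the \emph{sign} of the $q^{-}$ symbol (Remark~\ref{rem:4.3}) is needed and the telescoping closes cleanly. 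You instead use the $F_{-}$ observable $b_{t}$ of (\ref{D16}), which puts the $q^{s}$ piece on the opposite side as $\geq c\,b_{s}^{1}$; now the $q^{-}$ piece enters with the wrong sign relative to the desired inequality and you must bound $\int|\langle(\mathrm{Piece~2})^{w}\rangle|\,ds$ using the full integral estimate of Lemma~\ref{lem:4.2}, which you correctly flag. Both routes work, yours at the cost of invoking Lemma~\ref{lem:4.2} as a statement rather than just its proof.

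There is one numerical slip: you claim the calculus corrections are $O(s^{(2\nu'-1)k})$ with $k=2$ and assert integrability from $\nu'<2/5$; but $4\nu'-2>-1$ when $\nu'$ is near $2/5$, so that power alone is \emph{not} integrable. The corrections inherit the overall $s^{-1}$ prefactor visible in $b_{s}^{1}=-s^{-1}F_{-}'(\cdots)\cdots$ (since $s^{2\delta}q^{s}\sim1$ on $\operatorname{supp}F_{-}'$ forces $s^{2\delta-1}q^{s}\sim s^{-1}$), so the genuine error is $O(s^{-1}\cdot s^{(2\nu'-1)k})$; already $k=1$ gives $O(s^{2\nu'-2})$, integrable for $\nu'<1/2$. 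The same $s^{-1}$ factor is what makes your final Fefferman--Phong bound $(b_{s}^{1})^{w}\geq -Cs^{-1}(s^{2\nu'-1})^{2}=-Cs^{4\nu'-3}$ work.
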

\begin{proof}
  We use the proofs of Lemmas~\ref{lem:4.2} and \ref{lem:4.4}. Notice
  that to leading order {``}the derivative{''} of the symbol
  \begin{equation*}
    F_{+}{\left(t^{2\delta }q^{s}
        {\left(x,\xi\right)}\right)}
    F_{-}{\left(t^{\nu }q^{-}{\left(
            x,\xi\right)}\right)}\tilde{g}_{
      1}{\left(t^{-1}x\right)}\tilde{g}_{2}{\left(\xi\right)}
  \end{equation*}
is indeed non-positive, and that $ F^{\prime}_{+}=-F^{\prime}_{-} $.
\end{proof}

\par By combining Lemmas~\ref{lem:4.1} and \ref{lem:4.4} we conclude
the following localization result.

\begin{proposition}\label{prop:4.6}
  For any state $ \psi =f{\left(H\right)}\psi $ obeying \eqref{eq:8}
  and \eqref{eq:9} with $ f\in C^{\infty }_{0}{\left(I_{0} \right)}$
  where $ I_{0}$ is a sufficiently small neighborhood of $ E_{0}$
  \begin{equation}
    \label{d18} 
    ||\psi {\left(t\right)}-b^{
      w}_{t}{\left(x,p\right)}\psi {\left(t\right)}||\rightarrow
    0\text{ for } t\rightarrow \infty .
  \end{equation}
\end{proposition}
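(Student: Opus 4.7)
The plan is to reduce the norm convergence $\|\psi(t) - b_t^w\psi(t)\| \to 0$ to the expectation convergence $\langle(1-b_t)^w\rangle_t \to 0$, and then to establish the latter by decomposing the symbol into three pieces controlled respectively by \eqref{eq:10}, by Lemma~\ref{lem:4.1}, and by Lemma~\ref{lem:4.4}.

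For the reduction, note that $b_t$ is real-valued with $0 \leq b_t \leq 1$ and lies in the uniform class $S_{\mathrm{unif}}(1, g_t^{1-\nu',\nu'})$ with $\nu' = \nu - \delta < 2/5 < 1/2$. Its Weyl quantization is therefore symmetric, and Weyl composition gives $(b_t^w)^2 = (b_t^2)^w + R_1(t)$ with $\|R_1(t)\| = O(t^{2\nu'-1}) \to 0$. The Fefferman--Phong inequality applied to the non-negative symbol $b_t(1-b_t)$ yields $(b_t^w)^2 \leq b_t^w + R_2(t)$ with $\|R_2(t)\|\to 0$, so as an operator inequality
\begin{equation*}
(I - b_t^w)^{*}(I - b_t^w) \;=\; I - 2 b_t^w + (b_t^w)^2 \;\leq\; I - b_t^w + R_2(t).
\end{equation*}
Taking the expectation in $\psi(t)$ gives $\|\psi(t) - b_t^w\psi(t)\|^2 \leq \langle(1-b_t)^w\rangle_t + \|R_2(t)\|$, so it suffices to prove $\langle(1-b_t)^w\rangle_t \to 0$.

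Using $F_- = 1 - F_+$ twice produces the exact (linear) decomposition of symbols
\begin{equation*}
1 - b_t \;=\; (1 - \tilde g_1 \tilde g_2) \;+\; \tilde g_1 \tilde g_2\, F_+(t^\nu q^-) \;+\; \tilde g_1 \tilde g_2\, F_-(t^\nu q^-)\, F_+(t^{2\delta} q^s),
\end{equation*}
and, by linearity of Weyl quantization, it is enough to bound the expectation of each quantized summand. The symbol $1 - \tilde g_1 \tilde g_2$ vanishes on an open set surrounding every $(k(E)\omega(E), \xi(E))$ with $E \in I_0$, so \eqref{eq:10} combined with Lemma~\ref{lem:3.2} gives norm convergence of its quantization applied to $\psi(t)$ to zero. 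The second symbol is precisely the $a_{t, t^\nu}$ of Lemma~\ref{lem:4.1}, giving the expectation bound directly. For the third summand, we use the absorption trick sketched just before \eqref{D16}: choose $\delta' \in \bigl(\delta, \tfrac12 \min(\nu, 2\delta^s)\bigr)$, set $\epsilon = 2(\delta' - \delta) > 0$, and apply Lemma~\ref{lem:4.4} with $\delta$ replaced by $\delta'$ and $\kappa = t^\epsilon$; since $\kappa^{-1} t^{2\delta'} = t^{2\delta}$, the conclusion of that lemma is exactly $\langle(\tilde g_1 \tilde g_2 F_-(t^\nu q^-) F_+(t^{2\delta} q^s))^w\rangle_t \to 0$.

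The main obstacle is the reduction step: Weyl composition and the Fefferman--Phong inequality must be run uniformly in $t$ within the parameter-dependent class $S_{\mathrm{unif}}(1, g_t^{1-\nu', \nu'})$, and the remainders $R_1(t), R_2(t)$ must be shown to vanish in operator norm. This forces the effective Planck parameter $t^{2\nu'-1}$ to be a negative power of $t$, which is precisely why Section~\ref{sec:4} insists on $\nu < 2/5$ (hence $\nu' < 1/2$) throughout.
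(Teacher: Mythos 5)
Your proposal is correct and takes essentially the same route the paper intends; the paper simply says "combining Lemmas 4.1 and 4.4" and sketches the $\epsilon$-absorption in the paragraph after Lemma 4.4, so your contribution is filling in exactly those details. The algebraic splitting $1-b_t = (1-\tilde g_1\tilde g_2)+\tilde g_1\tilde g_2F_+(t^\nu q^-)+\tilde g_1\tilde g_2F_-(t^\nu q^-)F_+(t^{2\delta}q^s)$ is correct, the second summand matches $a_{t,t^\nu}$ from Lemma~\ref{lem:4.1}, the absorption $\delta\to\delta'$ with $\kappa=t^{2(\delta'-\delta)}$ reproduces the third summand from Lemma~\ref{lem:4.4}, and the reduction from expectation to norm via $(b_t^w)^2=(b_t^2)^w+O(t^{2(2\nu'-1)})$ together with Fefferman--Phong on $b_t(1-b_t)\ge 0$ is exactly what is needed (indeed the remainder is $O(t^{2(2\nu'-1)})$ rather than $O(t^{2\nu'-1})$, since the order-$h$ term in $b_t\#b_t$ vanishes, but either estimate suffices). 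The one small imprecision is the treatment of the first summand: the symbol $1-\tilde g_1\tilde g_2$ is not compactly supported, so \eqref{eq:10} does not apply to it directly; the clean argument is to use \eqref{eq:8} with $g_i=\tilde g_i$, noting that $\tilde g_i(k(H)\omega(H))1_{I_0}(H)\psi=\psi$ (resp. $\tilde g_2(\xi(H))1_{I_0}(H)\psi=\psi$) for $I_0$ small, which gives $\tilde g_1(t^{-1}x)\tilde g_2(p)\psi(t)\to\psi(t)$ in norm, and then to compare $\tilde g_1(t^{-1}x)\tilde g_2(p)$ with $(\tilde g_1(t^{-1}x)\tilde g_2(\xi))^w$, which differ by $O(t^{-1})$ in operator norm. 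This is a cosmetic point and does not affect the validity of the proof.
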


\par Using the symbol 
$ b_{t}{\left(x,\xi\right)}$ we can bound powers of $ \gamma $, cf.
(\ref{B15}). If we define $ \gamma =\gamma {\left(x,\xi\right)} $ as
in \eqref{eq:19} upon substituting $E$ by the symbols $
h{\left(r_{0}\hat{x},\xi \right)}$ we may consider the symbol
\begin{equation}
\label{d20} 
\gamma ^{\alpha }_{t}{\left(x,\xi
\right)}:=\gamma ^{\alpha }{\left(
x,\xi\right)}b_{t}{\left(x,\xi
\right)};\quad \alpha \in {\left(
\mathbf{N}\cup {\left\{0\right\}}\right)}^{%
2n-2}.
\end{equation}

\par We have the bounds 
\begin{equation}
\label{d19} 
||{\left(\gamma ^{\alpha }_{t}
\right)}^{w}{\left(x,p\right)}
||=O{\left(t^{-\delta |\alpha |
}\right)}.
\end{equation}

Proposition \ref{prop:4.6} and the accompanying (\ref{d19}) give the
$t^{-\delta}$--localization of step I) of the proof of Theorem
\ref{thm:1.1}. We will also need the integral estimate of 
Lemma \ref{lem:4.5} as well as Remark \ref{rem:4.3} in the proof that $\Gamma$ is well localized in the state $\psi(t)$ (see the proof of Proposition \ref{prop:5.1}).

\section{$\Gamma $ and its localization}
\label{sec:5}

\par With the assumption (\ref{b13}) we define operators
$G$ and $ \Gamma $ as follows: The right hand side
of (\ref{BB21}) with $m=m_0$ is of the form 
\begin{equation*}
\gamma ^{{\left(m_{0}\right)}
}=\gamma _{1}+\sum _{2\leq {\left\lvert 
\alpha \right\rvert}\leq m_{0}}c_{
\alpha }\gamma ^{\alpha },
\end{equation*}
with  $c_{\alpha }$ as well as $\gamma _{1}$  and $ \gamma ^{\alpha }$
depending  smoothly of $E$. As
done in (\ref{d20}) we substitute 
\begin{equation}
\label{e2} 
E=h{\left(r_{0}\hat{x},
\xi\right)}
\end{equation}
and multiply suitably by the factors $ \tilde{\tilde{g}}_{
  1}{\left(t^{-1}x\right)}\text{ and } \tilde{\tilde{g}}_{
  1}{\left(\xi\right)}$ as introduced in Section~\ref{sec:3} (with small
supports).  Precisely we pick $ l\leq n-1$ such that (\ref{b13}) holds
and write 
\begin{equation*}
\gamma _{1}=c_{l}{\left(\xi-
\xi{\left(E_{0}\right)}\right)}
\cdot \omega_{l}{\left(E_{0}\right)}
+r_{E}{\left(x,\xi\right)};\quad 
c_{l}=\partial _{\eta _{l}}\gamma _{%
1}{\left(w,E_{0}\right)}_{%
|w=0}.
\end{equation*}
Then we define the operator $ G=G_{t}=\gamma ^{w}_{t}{\left(
    x,p\right)}$ by the symbol 
\begin{equation}
\label{e1} 
\begin{split}
  &\gamma _{t}{\left(x,\xi\right)} =\gamma ^{1}{\left(x,\xi\right)}
  +\gamma ^{2}_{t}{\left(x,\xi\right)} ;
  \\
  &\gamma ^{1}{\left(x,\xi \right)}={\left(\xi-\xi{\left(
          E_{0}\right)}\right)}\cdot \omega_{ l}{\left(E_{0}\right)},
  \\
  &\gamma ^{2}_{t}{\left(x,\xi\right)} ={\left(c_{l}\right)}^{-1}
  {\left(r_{E}{\left(x,\xi\right)} +\sum _{2\leq {\left\lvert \alpha
          \right\rvert} \leq m_{0}}c_{\alpha }\gamma ^{ \alpha
      }{\left(x,\xi\right)}\right)}
  \tilde{\tilde{g}}_{1}{\left(t^{-1}x\right)}\tilde{\tilde{g}}_{2}
  {\left(\xi\right)}.
\end{split}
\end{equation}
For the second term the substitution (\ref{e2}) is used. Let $ \Gamma
=\Gamma _{t}=\re {\left(G\right)} $.\par

\par Clearly the quantization of this second term 
$ B_{1}{\left(t\right)}={\left(\gamma
    ^{2}_{t}\right)}^{w}{\left(x,p\right)}$ is bounded. \par

\par We shall assume that 
\begin{equation}
\label{E1} 
\delta {\left(m_{0}+1\right)}\geq 
1,
\end{equation}
where $ \delta <2^{-1}\min {\left(\nu , 2\delta _{s}\right)}$ is given
as in Proposition~\ref{prop:4.6}.\par

Our proof that $\Gamma$ is well localized in the state $e^{-itH}\psi$
(see Corollary 5.2) rests on the quantum analog of the differential equation (\ref{B15}) and the $t^{-\delta}$--localizations proved in Section 4. In addition we will need integral estimates to bound terms which arise when these ``$t^{-\delta}$--localizations'' are differentiated (see the proof of Proposition \ref{prop:5.1}).

\par We shall use the operator 
$ L_{1}{\left(t\right)}$ given in (\ref{d4}). Let us introduce the
notation $ L_{2}{\left(t\right)}=b^{w}_{ t}{\left(x,p\right)}$ for the
quantization of the first symbol of (\ref{D16}). Let us also introduce
the {``}bigger{''} localization operator 
\begin{align*}
&L_{3}{\left(t\right)}
={\left(\tilde{b}_{t}\right)}^{
w}{\left(x,p\right)};
\\
&\tilde{b}_{t}
{\left(x,\xi\right)}=F_{-}{\left(
2^{-1}t^{2\delta }q^{s}{\left(
x,\xi\right)}\right)}F_{-}
{\left(2^{-1}t^{\nu }q^{-}
{\left(x,\xi\right)}\right)}
\tilde{g}_{1}{\left(t^{
-1}x\right)}\tilde{g}_{
2}{\left(\xi\right)}. 
\end{align*}
Notice that also 
\begin{equation*}
\tilde{b}_{t}{\left(
x,\xi\right)}\in S_{unif}{\left(
1,g^{1-\nu ^{\prime},\nu ^{\prime}
}_{t}\right)};\quad \nu ^{\prime}=\nu -\delta ,
\end{equation*}
 and that indeed for example 
 \begin{equation}
\label{jjunk} 
{\left(I-L_{3}{\left(t\right)}
\right)}L_{2}{\left(t\right)}
L_{1}{\left(t\right)}=O{\left(
t^{-\infty }\right)}.
\end{equation}

\par We obtain from (\ref{B15}), (\ref{E1}) and bounds like
(\ref{d19}) that 
\begin{equation}
\label{e4} 
L_{3}i{\left[H,G\right]}L_{3}
=-L_{3}\tilde{t}^{-1}GL_{%
3}+O{\left(t^{-2}\right)},
\end{equation}
where $t$ is omitted in the notation and 
$\tilde{t}^{-1}$
 is the Weyl quantization of the symbol 
 \begin{equation*}
-{{\partial _{\mu }h{\left(x,\xi
\right)}}\over{x\cdot \omega{\left(h{\left(
x,\xi\right)}\right)}}}
\beta ^{s}_{1}{\left(h{\left(
x,\xi\right)}\right)}\tilde{\tilde{g}}_{1}{\left(
t^{-1}x\right)}\tilde{\tilde{g}}_{2}{\left(\xi\right)}.
\end{equation*}

\par We may assume that the supports of 
$
\tilde{\tilde{g}}_{1}$  and $\tilde{\tilde{g}}_{1}$
 are so small that 
 \begin{equation}
\label{E4} 
\re {\left(\tilde{t}^{%
-1}\right)}\geq t^{-1}\re {\left(
\tilde{\tilde{g}}_{%
1}{\left(t^{-1}x\right)}\tilde{\tilde{g}}_{2}
{\left(p\right)}\right)}+O{\left(
t^{-2}\right)}.
\end{equation}

Next introduce $ P=P_{t}=GG^{*}+G^{*}G$ where $
  G=G_{t}$ is given as above. 
Using the calculus we compute
(with some patience) 
\begin{align*}
  &L_{3}i{\left[H,P\right]}
  L_{3}=2\re {\left(L_{
        3}i{\left[H,G\right]}L_{3}G^{
        *}+G^{*}L_{3}i{\left[H,G\right]}
      L_{3}\right)}
  \\ 
  &\qquad+\re {\left(L_{3}i{\left[H,G\right]}{\left[G^{*},L_{3}
        \right]}-{\left[G^{*},L_{3}
        \right]}i{\left[H,G\right]}L_{
        3}\right)}
  \\ 
  &\qquad+\re {\left(L_{3}i{\left[H,G^{*}\right]}{\left[G,L_{3}
        \right]}-{\left[G,L_{3}\right]}
      i{\left[H,G^{*}\right]}L_{3}
    \right)}
  \\ 
  &=2\re {\left(L_{3}
      i{\left[H,G\right]}L_{3}G^{*}
      +G^{*}L_{3}i{\left[H,G\right]}
      L_{3}\right)}+c^{w}_{t}{\left(
      x,p\right)}+O{\big(t^{2\nu ^{\prime
        }-3}\big)},
\end{align*}
where \
\begin{equation*}
  c_{t}{\left(x,\xi\right)}
  =c_{t}=2\re {\big (\big\{\tilde{b}_{t},{\big\{\tilde{b}_{
                t},\overline{{\gamma }_{t}
              }\big\}}\big\}}{\left\{
          h,{\gamma }_{t}\right\}
    \big )}\in S_{unif}{\big(t^{
        3\nu ^{\prime}-3},g^{1-\nu ^{\prime
        },\nu ^{\prime}}_{t}\big)}.
\end{equation*}
Applying (\ref{e4}) to the first two terms on the right hand side and
symmetrizing yields 
  \begin{multline}\label{eee18e}  
  L_{3}i{\left[H,P\right]} L_{3}
  =-L_{3}{\left\{P\re {\left(\tilde{t}^{-1}\right)}
        +h.c.\right\}}L_{3}\\+\re {\left( GO{\left(t^{-2}\right)}+G^{*}
        O{\left(t^{-2}\right)}\right)} +O{\left(t^{2\nu
          ^{\prime}-3}\right)}.
\end{multline}
(Here and henceforth the notation $h.c.$ refers to hermitian conjugate,
 viz. $S+h.c.=S+S^*$.) 
Notice that the contribution from $ c^{w}_{t}{\left(x,p\right)} $
disappears and that we use 
\begin{equation}
\label{EXT} 
P\re {\left(\tilde{t}^{
      -1}\right)}+h.c.=2G\re {\left( \tilde{t}^{-1}\right)}
G^{*}+2G^{*}\re {\left(\tilde{t}^{-1}\right)}G+O{\left( t^{-3}\right)}.
\end{equation}

\par We have the following localization result.\par 

\begin{proposition}\label{prop:5.1}
  Let $ \psi , \nu $ and $ \delta $ be given as in
  Proposition~\ref{prop:4.6} and suppose (\ref{E1}). Then for all $
  \sigma \in {\left(\nu ^{\prime},1-\nu ^{ \prime}\right)}$, $ \nu
  ^{\prime}=\nu -\delta $, and with $ P=P_{t}=GG^{*}+G^{*}G$ where $
  G=G_{t}$ is given as above 
  \begin{equation}
    \label{dd222} 
    \left\|F_{+}{\left(t^{2-2\sigma } P\right)}\psi {\left(t\right)}
    \right\|\rightarrow 0\text{ for }t\rightarrow \infty .
  \end{equation}
\end{proposition}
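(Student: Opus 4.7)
The plan is to adapt the maximal-velocity propagation observable technique. Set $B(t) = t^{2-2\sigma}P_t$ and $\Phi(t) = F(B(t))$ for a fixed $F \in \mathcal{F}_+$. I will show that $\langle \Phi(t)\rangle_t \to 0$; then since $0\le F\le 1$ gives $F^2 \le F$, the elementary bound $\|F(B(t))\psi(t)\|^2 \le \langle F(B(t))\rangle_t$ supplies the norm convergence claimed in \eqref{dd222}.

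The key dynamical step is to compute $\mathbf{D}B(t) = (2-2\sigma)t^{1-2\sigma}P_t + t^{2-2\sigma}\mathbf{D}P_t$. Substitute the identity \eqref{eee18e}, sandwich between the localizations $L_3$ (which by Proposition~\ref{prop:4.6} and \eqref{jjunk} are asymptotically the identity on $\psi(t)$), and use the sharp positivity \eqref{E4} — whose constant $1$ reflects the assumption $\re(-\beta^s_1)>1$ from \eqref{H7}. The two $t^{1-2\sigma}P_t$ contributions combine as $(2-2\sigma)-2=-2\sigma$, giving
\begin{equation*}
L_3\mathbf{D}B(t)L_3 \le -2\sigma t^{1-2\sigma}L_3 P_t L_3 + \re\bigl(G\cdot O(t^{-2\sigma}) + G^*\cdot O(t^{-2\sigma})\bigr) + O(t^{2\nu'-1-2\sigma}),
\end{equation*}
and the error terms are integrable in $t$ when paired with $\psi(\cdot)$: for those proportional to $G_t$ this follows from Lemma~\ref{lem:4.5} combined with the calculus bound $\|G_tL_2L_1\|=O(t^{-\delta})$ (since on $\supp b_t$ the symbol $\gamma$ is bounded by $t^{-\delta}$), and the last term integrates directly since $\nu'<1$.

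Next apply Lemma~\ref{lem:3.5}~(C) to $\Phi(t)=F(B(t))$ with $D(t)$ taken as this leading negative part. The main contribution to $\mathbf{D}\Phi(t)$ is
\begin{equation*}
F'^{1/2}(B(t))\bigl(-2\sigma t^{1-2\sigma}P_t\bigr)F'^{1/2}(B(t)) \le -\sigma t^{-1}F'(B(t)),
\end{equation*}
using that $B(t)\ge 1/2$ on $\supp F'$ and hence $P_t\ge t^{2\sigma-2}/2$ by the functional calculus. The remainders $R_1, R_2, R_3$ in \eqref{86a} involve at most two commutators of $B(t)$ against itself; in the uniform calculus $S_{\mathrm{unif}}(1, g_t^{1-\nu',\nu'})$ each such commutator costs $O(t^{2\nu'-1})$, so the hypothesis $\sigma<1-\nu'$ forces the remainders to be time-integrable. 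Integrating $\langle \mathbf{D}\Phi(s)\rangle_s$ from $t_0$ to $t$ and using $\Phi\ge 0$ yields $\int_1^\infty s^{-1}\langle F'(B(s))\rangle_s\,ds < \infty$, and $\langle\Phi(t)\rangle_t$ is bounded and non-increasing modulo an integrable error, so the limit $\ell=\lim_{t\to\infty}\langle\Phi(t)\rangle_t$ exists. To conclude $\ell = 0$, I combine the repetition of the argument with $F$ replaced by $F^N\in\mathcal{F}_+$ (stable under $F\mapsto F^N$) with the a priori bound $\|P_tL_2L_1\|=O(t^{-2\delta})$; the latter forces $\langle\Phi(t_0)\rangle_{t_0}$ to be small for $\sigma$ near $1-\delta$, and a bootstrap across the parameter $\sigma\in(\nu',1-\nu')$ then carries the conclusion to all admissible $\sigma$.

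The main obstacle is the commutator analysis in the uniform pseudodifferential calculus. Because the symbol $\gamma_t$ of $G_t$ contains the unbounded piece $\gamma^1=(\xi-\xi(E_0))\cdot\omega_l(E_0)$, the operator $P_t$ sits at the edge of the calculus, so each application of $\mathrm{ad}_{B(t)}$ must be handled by splitting $G_t=\gamma^1+B_1(t)$ and exploiting the localization of $p$ near $\xi(E_0)$ on $\psi(t)$ provided by $L_1$. The narrow admissible window $\sigma\in(\nu',1-\nu')$ is precisely where the calculus of $B(t)$ with respect to $g_t^{1-\nu',\nu'}$ behaves well (lower bound) and the remainder time-integrals converge (upper bound); balancing these constraints against the dynamical gain $-2\sigma t^{-1}$ is the technical heart of the argument.
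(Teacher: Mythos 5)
Your proposal follows the same general strategy as the paper (propagation observable, Heisenberg derivative, Lemma~\ref{lem:3.5}, $L_i$-localizations), but there is a decisive gap in the endgame and a secondary structural gap in the observable.

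The decisive gap is in your argument that $\ell=\lim_{t\to\infty}\langle\Phi(t)\rangle_t=0$. You propose to seed a bootstrap by observing that the a priori bound $\|P_tL_2L_1\|=O(t^{-2\delta})$ makes $\langle\Phi(t_0)\rangle_{t_0}$ small for $\sigma$ near $1-\delta$. This cannot work: by the standing hypothesis $2\delta<\nu$ one has $\nu'=\nu-\delta>\delta$, so the admissible window $\sigma\in(\nu',1-\nu')$ is bounded above by $1-\nu'<1-\delta$, and $\sigma$ never approaches $1-\delta$. Indeed $\|B(t_0)L_2L_1\|\sim t_0^{2-2\sigma-2\delta}$ with exponent $2-2\sigma-2\delta>2\nu'-2\delta=2\nu-4\delta>0$ for every admissible $\sigma$; this quantity \emph{diverges}, so it does not provide a zero seed at $t_0$ for any $\sigma$ you are allowed to use, and the "bootstrap across $\sigma$" has no starting point. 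The paper's actual mechanism here is the two-parameter trick of the proof of Lemma~\ref{lem:4.4}: one works with $\bar G=\kappa^{-1}t^{1-\sigma}G_t$ and $B(t;\kappa)=\bar G\bar G^*+\bar G^*\bar G$, so that for \emph{fixed} $t_0$ the initial value $\langle A(t_0,\kappa)\rangle_{t_0}\to 0$ as $\kappa\to\infty$, and the integral $\int_{t_0}^t\langle T_i(s,\kappa)\rangle_s\,ds$ is controlled uniformly in $\kappa\ge1$ and $t\ge t_0$; choosing $\kappa=t^\epsilon$ then yields the conclusion for $\sigma+\epsilon$, and letting $\epsilon\downarrow0$ covers the whole admissible interval. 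This extra parameter is an essential idea, not an optional refinement, and your proposal is missing it.

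The secondary gap is that you take $\Phi(t)=F(B(t))$ without the $L_1$, $L_2$ sandwich. The identities you rely on — (\ref{eee18e}), (\ref{E4}), the bound (\ref{d19}) — are all localized by $L_3$ or $L_2L_1$ and cannot simply be "sandwiched between the localizations $L_3$" after expanding $\mathbf{D}F(B(t))$ through resolvents. In the paper the observable is $A(t,\kappa)=L_1^*F_+(B)L_2^2F_+(B)L_1$ from the outset, and the Heisenberg derivative is replaced by $\mathbf{D}_3$ with $\bar H=L_3HL_3$ before Lemma~\ref{lem:3.5} is applied; this is what makes the $L_3$-cutoff identities available inside the resolvent expansion, and what generates the boundary terms $T_2$ and $T_3$ (handled respectively via Lemma~\ref{lem:4.5} and Remark~\ref{rem:4.3}, and via \eqref{eq:9}). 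You correctly flag in your last paragraph that the unbounded piece $\gamma^1$ of $G_t$ forces one to exploit the localizations supplied by $L_1$, but your construction does not actually build them into the observable, so the commutator analysis you sketch for $R_1$, $R_2$, $R_3$ does not go through as stated.
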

\begin{proof}
 We shall use the scheme of the proof of
Lemma~\ref{lem:4.4}. Consider with $ \kappa =t^{\epsilon }$ for a
small $ \epsilon >0$ the observable 
\begin{align*}
 & A{\left(t,\kappa \right)}
 =L_{1}{\left(t\right)}^{*}
 F_{+}{\left(B{\left(t\right)}
   \right)}L_{2}{\left(t\right)}^{2}F_{+}{\left(B{\left(t\right)}
   \right)}L_{1}{\left(t\right)}
 ;
 \\
 & B{\left(t\right)}=B{\left(
     t;\kappa \right)}=\bar{G}
 \bar{G}^{*}+\bar{G}^{*}\bar{G},\quad \bar{G} 
 =\bar{G}{\left(
     t;\kappa \right)}=\kappa ^{-1}t^{1-
   \sigma }G_{t}.
\end{align*}

\par As before we first compute the Heisenberg derivative treating
$ \kappa $ as a parameter and split (with $
L_{j}=L_{j}{(t)}$) 
\begin{align*}
  &\mathbf{D}A{\left(t,\kappa \right)} =T_{1}{\left(t,\kappa
    \right)}+T_{ 2}{\left(t,\kappa \right)}+T_{3} {\left(t,\kappa
    \right)};
  \\
  &T_{1}=L^{*}_{1}F_{+}{\left( B{\left(t\right)}\right)}L^{
    2}_{2}{\left(\mathbf{D}F_{+}{\left(
          B{\left(t\right)}\right)}\right)} L_{1}+h.c.,
  \\
  &T_{2}=L^{*}_{1}F_{+}{\left( B{\left(t\right)}\right)}{\left(
      \mathbf{D}L^{2}_{2}\right)} F_{+}{\left(B{\left(t\right)}
    \right)}L_{1},
  \\
  &T_{3}=L^{*}_{1}F_{+}{\left( B{\left(t\right)}\right)}L^{
    2}_{2}F_{+}{\left(B{\left( t\right)}\right)}\mathbf{D}L_{1} +h.c.
\end{align*}
\par 

\par The analog of (\ref{d14}) is
\begin{equation}
  \label{e19} 
  \left\langle A{\left(t,\kappa \right)}
  \right\rangle _{t}={\left\langle A{\left(
        t_{0},\kappa \right)}\right\rangle }_{%
  t_{0}}+\int _{t_{0}}^{t }{\left\langle T_{1}{\left(s,\kappa
      \right)}+T_{2}{\left(s,\kappa \right)}
    +T_{3}{\left(s,\kappa \right)}\right\rangle }_{%
  s}ds.
\end{equation}

\par We shall prove that 
\begin{equation}
\label{e20} 
\limsup  _{t_{0}\rightarrow \infty
}\sup  _{t\geq
t_{0}}\int _{t_{0}}^{
t}{\left\langle T_{i}{\left(s,\kappa \right)}\right\rangle }_{s}ds\leq
0 ;\quad i=1,2,3.
\end{equation}

\par To do this we may replace 
$ \mathbf{D}$ by the modified Heisenberg derivative 
\begin{equation*}
\mathbf{D}_{3}={{d}\over{dt}}+i{\left[
\bar{H},\cdot \right]};\quad 
\bar{H}=L_{3}HL_{3},\quad 
L_{3}=L_{3}{\left(t\right)},
\end{equation*}
cf. (\ref{jjunk}) and arguments below for (\ref{YYYY1}).

\par With this modification we first look at the most interesting bound
(\ref{e20}) with $i=1$. We use (\ref{86a}) to write
\begin{align}\label{BBbb} 
  &  \mathbf{D}_{3}F_{+}{\left( B{\left(t\right)}\right)}=F^{
    \prime{{1}\over{2}}}_{+}{\left( B{\left(t\right)}\right)}D{\left(
      t\right)}F^{\prime{{1}\over{2}} }_{+}{\left(B{\left(t\right)}
    \right)}+R_{1}{\left(t\right)} +R_{2}{\left(t\right)}+R_{3}
  {\left(t\right)};\nonumber
  \\
  &D{\left(t\right)} ={{2-2\sigma
    }\over{t}}B{\left(t\right)} -L_{3}{\left\{B{\left(t\right)} \re
      {\left(\tilde{t}^{ -1}\right)}+h.c.\right\}}L_{ 3}.
\end{align}
Notice that here $ R_{3}{\left(t\right)}$ is given by the integral
representation (\ref{83a}) of Lemma~\ref{lem:3.5} in terms of the
bounded operator $ D_{r}{\left(t\right)}=\mathbf{D}_{
  3}B{\left(t\right)}-D{\left(t \right)}$ which by (\ref{eee18e}) is
of the form 
\begin{equation}
\label{YYY} 
\begin{split}
&D_{r}{\left(t\right)}
=\kappa ^{-2}t^{2-2\sigma }{{d}\over{%
    dt}}P+{\left\{\kappa ^{-2}t^{2-2\sigma
    }L_{3}Hi{\left[L_{3},P\right]}
    +h.c.\right\}}
\\
&+\kappa ^{-2}
t^{2-2\sigma }{\left\{\re {\left( GO{\left(t^{-2}\right)}\right)}
+\re {\left(G^{*}O{\left(t^{%
          -2}\right)}\right)}+O{\left( t^{2\nu ^{\prime}-3}\right)}
\right\}}.
\end{split}
\end{equation}

\par  First we examine the contribution from the expectation of the term
\begin{equation*}
\cdots L_{2}{\left(s\right)}^{
2}{\left\{R_{1}{\left(s\right)}
+R_{2}{\left(s\right)}\right\}}
L_{1}{\left(s\right)}+h.c. 
\end{equation*}
of the integrand of (\ref{e20}) (after substituting (\ref{BBbb})). We
may write, omitting here and henceforth the argument $s$,
\begin{equation}
\label{JUNK1} 
\begin{split}
  &i{\left[D,B\right]}=-i{\left[ L_{3}{\left\{B\re
          {\left(\tilde{t}^{-1}\right)}+h.c.\right\}}L_{3},B\right]}
  \\
  &= -{\left(L_{3}{\left\{B\re {\left( \tilde{t}^{-1}\right)}
          +h.c.\right\}}i{\left[L_{3},B\right]}
      +h.c.\right)}\\&-L_{3}{\left\{B\re
{\left(i{\left[\tilde{t}^{%
          -1},B\right]}\right)}+h.c.\right\}} L_{3}.
\end{split}
\end{equation}
Substituted into the representation formulas (\ref{84a}) and
(\ref{85a}) of Lemma \ref{lem:3.5} the first term to the right can be
shown to contribute by terms of the form $ \kappa
^{-2}O{\left(s^{-\infty }\right)} $ (using the factors of $ L_{1}$ and
$ L_{2}$ and commutation), however the bound $ \kappa
^{-1}O{\big(s^{\nu ^{\prime }-1-\sigma }\big)}$ suffices. Here and
henceforth $ O{\left(s^{-\tilde{\epsilon }} \right)}$
refers to a term bounded by $ Cs^{-\tilde{\epsilon }}$
uniformly in $t$ (recall that $ B$ contains a factor $ \kappa
^{-2}=t^{-2\epsilon }$).  To demonstrate this weaker bound we compute
\begin{align*}
  &i{\left[L_{3},B\right]}
  =\kappa ^{-1}s^{1-\sigma }i{\left[L_{
        3},G\right]}\bar{G}^{
    *}+\kappa ^{-1}s^{1-\sigma }\bar{G}^{
    *}i{\left[L_{3},G\right]}+h.
  c.,
  \\
  &i{\left[L_{3},G\right]}
=O{\left(s^{\nu ^{\prime}-1}\right)}.
\end{align*}
Since the middle factor $ \re {\left(\tilde{t}^{
      -1}\right)}=O{\left(s^{-1}\right)} $ we get the bound $$ \kappa
^{-1}O{\left(s^{1-\sigma }\right)} O{\big(s^{\nu ^{\prime}-2}\big)}
=\kappa ^{-1}O{\big(s^{\nu ^{\prime }-1-\sigma }\big)}.$$ We used that
$ \bar{G},\quad \bar {G}^{ *}$ and $B$ may be considered as bounded in
combination with the resolvents of $B$; explicitly we exploited the
uniform bounds (after commutation) 
\begin{equation}
\label{YYYY} 
\begin{split}
  &||\bar{G}{\left(
      B-z\right)}^{-1}||,\quad ||\bar{G}^{*}{\left(
      B-z\right)}^{-1}||\leq C{{%
      {\left\langle z\right\rangle }^{1/2}}\over{| \im z|}},
  \\
  &||{\left( B-z\right)}^{-1}||\leq
  C|\im z|^{-1},\quad ||
  B{\left(B-z\right)}^{-1}|| \leq {{%
      2{\left\langle z\right\rangle }}\over{| \im z|}}.
\end{split}
\end{equation}

\par Similarly, since
\begin{equation}
\label{NNN} 
\re {\left(i{\left[\tilde{t}^{
-1},B\right]}\right)}=\kappa ^{
-1}O{\left(s^{-1-\sigma }\right)} \bar{G}^{*}+\kappa ^{-1}
O{\left(s^{-1-\sigma }\right)}\bar{G}+h.c.
\end{equation}
the second term to the right in (\ref{JUNK1}) contributes by a term of
the form $ \kappa ^{-1}O{\left(s^{-1-\sigma }\right)}$.

\par Using the representation for 
$ R_{3}=R_{3}{\left(s\right)} $ and commutation we claim the bound
\begin{multline}
\label{YYYY1} 
\cdots L^{2}_{2}R_{3}L_{%
  1}+h.c.\\=\kappa ^{-1}O{\left(
    s^{-1}\right)}+\kappa ^{-1}O{\left( s^{-1-\sigma }\right)}+\kappa
^{-2} O{\left(s^{2\nu ^{\prime}-1-2\sigma } \right)}.
\end{multline}
The contributions from the first two terms of (\ref{YYY})
are $ \kappa ^{-2}O{\left(s^{-\infty }\right)} $ and therefore in
particular $ \kappa ^{-1}O{\left(s^{-1}\right)} $. Let us elaborate on
this weaker bound for the first term: Write 
\begin{equation*}
\kappa ^{-2}s^{2-2\sigma }{{d}
\over{ds}}P=\kappa ^{-1}s^{1-\sigma }
{\left\{\bar{G}{{d}\over{ds}}G^{*}+\bar{G}^{
*}{{d}\over{ds}}G+h.c.\right\}},
\end{equation*}
and compute the time-derivative of the symbol $
\tilde{\tilde{g}}_{%
  1}{\left(s^{-1}x\right)}$ that defines the time-dependence of the
symbol of $G$ 
\begin{equation*}
{{d}\over{ds}}\tilde{\tilde{g}}_{1}{\left(s^{-1}
x\right)}=-s^{-2}x\cdot {\left(\nabla 
\tilde{\tilde{g}}_{
1}\right)}{\left(s^{-1}x\right)}.
\end{equation*}
The contribution from this expression is treated by using the factor $
g_{1}{\left(s^{-1}x\right)} $ of $ L_{1}$. First we may insert the
$j${'}th power of $ F=\tilde{g}_{1}{\left(s^{ -1}x\right)}$ next to a
factor $ L_{1}$. Then we place one factor of $ F$ next to any of the
factors of the time-derivative of $G$ by commuting through the
resolvent of $B$, and repeat successively this procedure for the
{``}errors{''} given in terms of intermediary commutators. At each
step a factor of $ \kappa ^{-1}s^{\nu ^{\prime}-\sigma
}=O{\left(s^{\nu ^{\prime}-\sigma }\right)}$ will be gained. (In fact
for the first term of (\ref{YYY}) treated here we have the stronger
estimate $ O{\big(s^{-\sigma }\big)}$.) This means that if we put $
\sigma ^{\prime}=\sigma -\nu ^{\prime} $ then $ h=s^{-\sigma
  ^{\prime}}$ will be an {``}effective Planck constant{''}.  Notice
that 
\begin{align*}
  &  i\left[{\left(B-z\right)}^{ -1},F\right]
\\
&=\kappa ^{-1}s^{1-\sigma }
{\left(B-z\right)}^{-1}{\left\{
    \bar{G}O{\left(s^{\nu ^{
            \prime}-1}\right)}+\bar{G}^{
      *}O{\left(s^{\nu ^{\prime}-1}
      \right)}+h.c.\right\}}{\left(
    B-z\right)}^{-1}.
\end{align*}
Repeated commutation through such an expression by factors of $ F$
provides eventually the power $ h^{j}=s^{-\sigma ^{\prime}j}$. Again
the finite numbers of factors like $ \bar{G}{\left(B-z\right)}^{ -1}$
and $ \bar{G}^{*}{\left(B-z\right)}^{ -1}$ may be estimated by
(\ref{YYYY}) before integrating with respect to the $z${--}variable.
We choose $j$ so large that $ \sigma ^{\prime}{\left(j+1\right)} \geq
1$.\par

\par The contribution to (\ref{YYYY1}) from the second term of
(\ref{YYY}) may be treated very similarly.\par

\par Clearly the last term of (\ref{YYY}) contributes by terms of
the form of the last two terms to the right in (\ref{YYYY1}).\par

\par Next we move the factors of 
$ L_{2}$ next to those of $ L_{1}$ (and other commutation) for the
contribution to (\ref{e20}) from the first term to the right in
(\ref{BBbb}) yielding, as a conclusion, that 
\begin{equation}
\label{e22} 
\begin{split}
  &\left\langle T_{1} \left(s,\kappa \right) \right\rangle _{s}\leq
{\left\langle \breve{\psi },D{\left(s\right)} \breve{\psi
    }\right\rangle }+\kappa ^{ -1}O{\left(s^{-1}\right)}+O{\left(
    s^{\nu ^{\prime}-2}\right)} ;
\\
&\breve{\psi }={\left( F^{2\prime}_{+}\right)}^{{{
      1}\over{2}}}{\left(B{\left(s\right)}
  \right)}L_{2}{\left(s\right)} L_{1}{\left(s\right)}\psi {\left(
    s\right)}.
\end{split}
\end{equation}
Notice that commutation of $ D{\left(s\right)}$ with the factors of $
L_{2}{\left(s\right)}$, $ F^{\prime{{1}\over{2}}}_{+}{\left(
    B{\left(s\right)}\right)}$ and ${\left(F^{2\prime}_{+}\right)}^{
  {{1}\over{2}}}{\left(B{\left( s\right)}\right)}$ (when symmetrizing)
involves the calculus of Lemma \ref{lem:3.5} and the effective Planck
constant $ h=s^{-\sigma ^{\prime}}$ in a similar fashion as above.\par

\par For the first term on the right hand side of (\ref{e22}) we
infer from (\ref{E4}) and (\ref{EXT}) that 
\begin{equation}
\label{OO11} 
 {\left\langle \breve{\psi },D{\left( s\right)}\breve{\psi }
   \right\rangle } \leq C_{1}\kappa 
^{-2}s^{-1-2\sigma } +C_{2}s^{-2}.
\end{equation}
By combining (\ref{e22}) and (\ref{OO11}) we finally
conclude (\ref{e20}) for $i=1$.\par

\par As for (\ref{e20}) for $i=2$ we use Remark~\ref{rem:4.3},
the integral estimate of Lemma~\ref{lem:4.5} and the factors of $
L_{1}$. Notice that the leading (classical) term from differentiating
the symbol $ b_{t}$ may be written as a sum of three terms: The
contribution from {``}differentiating{''} the factor $
F_{-}{\left(t^{\nu }q^{-}{\left( x,\xi\right)}\right)}$ is
non-positive, cf. Remark~\ref{rem:4.3}. The contribution from
{``}differentiating{''} the first factor $ F_{-}{\left(t^{2\delta
    }q^{s} {\left(x,\xi\right)}\right)} $ may after a symmetrization
be treated by Lemma~\ref{lem:4.5}.  The commutation through the
factors of $ F_{+}{\left(B{\left(s\right)} \right)}$ (when
symmetrizing) involves the calculus of Lemma~\ref{lem:3.5} in a
similar fashion as above. Finally the contribution from
{``}differentiating{''} the last two factors are integrable due to the
factors of $ L_{1}$. We omit further details.\par

\par As for (\ref{e20}) for $i=3$ we use the integral
estimate \eqref{eq:9} and commutation. We omit the details.\par

\par We conclude (\ref{e20}), and therefore by
Proposition~\ref{prop:4.6} the bound (\ref{dd222}) first with $ \sigma
$ replaced by $ \sigma +\epsilon $ and then (since $ \epsilon $ is
arbitrary) by any $ \sigma $ as specified in the proposition.
\end{proof}

\begin{corollary}\label{cor:5.2}
  Under the conditions of Proposition~\ref{prop:5.1} and with $
\Gamma =\Gamma _{t}=\re {\left(G\right)} $ 
\begin{equation}
\label{dd2222} 
\big\|F_{+}{\left(t^{1-\sigma } |\Gamma |\right)}\psi {\left(
    t\right)}\big\|\rightarrow 0\text{ for }t\rightarrow \infty .
\end{equation}
\end{corollary}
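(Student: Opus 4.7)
The plan is to deduce the corollary from Proposition \ref{prop:5.1} by combining an algebraic identity for $P$ with a functional-calculus bound for $F_+$. I would begin by noting that, writing $G = \Gamma + i\Gamma'$ with $\Gamma' = \im G$ self-adjoint, a direct expansion gives
\begin{equation*}
P = GG^* + G^*G = 2\Gamma^2 + 2(\Gamma')^2,
\end{equation*}
so the form inequality $2\Gamma^2 \leq P$ holds on a common dense domain.

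I would next record the scalar bound $F_+(\mu)^2 \leq 4\mu^2$ for all $\mu \in \mathbf{R}$: this is automatic on $(-\infty,1/2]$ since $F_+$ vanishes there, while on $[1/2,\infty)$ one has $4\mu^2 \geq 1 \geq F_+^2$. Applying this via functional calculus to the self-adjoint operator $t^{1-\sigma}|\Gamma|$ and then invoking the form inequality of the previous step, for any vector $\phi$ one obtains
\begin{equation*}
\|F_+(t^{1-\sigma}|\Gamma|)\phi\|^2 \leq 4\langle\phi, t^{2-2\sigma}\Gamma^2\phi\rangle \leq 2\langle\phi, t^{2-2\sigma}P\phi\rangle.
\end{equation*}

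The naive choice $\phi = \psi(t)$ is not enough, since $\langle\psi(t), P\psi(t)\rangle$ need not decay. Instead I would split $\psi(t) = \psi_1 + \psi_2$ spectrally with respect to $P$ at a slightly sharper time-scale: pick $\epsilon \in (0,\sigma-\nu')$ so that $\sigma-\epsilon$ still lies in the admissible range $(\nu',1-\nu')$ for Proposition \ref{prop:5.1}, and set $\psi_2 = F_+(t^{2-2\sigma+2\epsilon}P)\psi(t)$, $\psi_1 = \psi(t) - \psi_2$. Proposition \ref{prop:5.1} applied at parameter $\sigma-\epsilon$ gives $\|\psi_2\| \to 0$, so $\|F_+(t^{1-\sigma}|\Gamma|)\psi_2\| \leq \|\psi_2\| \to 0$. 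For $\psi_1$, the scalar function $s \mapsto s(1-F_+(t^{2-2\sigma+2\epsilon}s))^2$ is bounded by $(3/4)t^{2\sigma-2-2\epsilon}$ for $s \geq 0$ (it vanishes where $F_+ = 1$, and is trivially bounded by $s$ elsewhere), so the spectral theorem yields $\langle\psi_1,P\psi_1\rangle \leq (3/4)t^{2\sigma-2-2\epsilon}\|\psi(t)\|^2$; inserting this into the preceding display with $\phi = \psi_1$ produces
\begin{equation*}
\|F_+(t^{1-\sigma}|\Gamma|)\psi_1\|^2 \leq (3/2)\,t^{-2\epsilon}\|\psi(t)\|^2 \to 0,
\end{equation*}
and the triangle inequality concludes.

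The only subtle point is the choice of $\epsilon$ so that Proposition \ref{prop:5.1} may be invoked at the shifted parameter, which uses only the built-in condition $\sigma > \nu'$. Otherwise the argument is a routine combination of the algebraic identity $2\Gamma^2 \leq P$ with a Chebyshev-type bound --- no additional commutator machinery or propagation estimate is needed.
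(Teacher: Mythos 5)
Your proof is correct and follows essentially the same route as the paper's: both are Chebyshev-type arguments that exploit the scalar bound $F_+(\mu)^2\leq 4\mu^2$, combine it with the operator inequality $\Gamma^2\leq P/2$, and invoke Proposition~\ref{prop:5.1} at an intermediate parameter to dispose of the piece of $\psi(t)$ where $P$ is ``large''. The only cosmetic difference is that you obtain $\Gamma^2\leq P/2$ from the exact identity $P=2\Gamma^2+2(\im G)^2$ and decompose $\psi(t)$ spectrally in $P$, whereas the paper phrases the estimate as the operator-norm bound $\|t^{1-\sigma_1}\Gamma F_-(t^{2-2\sigma_1}P)\|\leq 1$ derived from the one-sided bounds $G^*G,\,GG^*\leq P$ and Cauchy--Schwarz.
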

\begin{proof}
  Let $ \sigma \in {\left(2\nu ^{\prime},1\right)} $ be given. Fix $
  \sigma _{1}\in {\left(2\nu ^{\prime },\sigma \right)}$. By
  Proposition~\ref{prop:5.1} it suffices to show that 
  \begin{equation*}
    \big\|F_{+}{\left(t^{1-\sigma }
|\Gamma |\right)}F_{-}{\left(
t^{2-2\sigma _{1}}P\right)}\big\|=O{\left(t^{\sigma _{1}-\sigma }
\right)}.
\end{equation*}

\par Clearly by the spectral theorem this estimate follows from 
\begin{equation*}
\big\|t^{1-\sigma _{1}}\Gamma F_{%
-}{\left(t^{2-2\sigma _{1}}P\right)}
\big\|\leq 1,
\end{equation*}
which in turn follows from substituting $ \Gamma
=2^{-1}{\left(G+G^{*}\right)} $ and then estimating 
\begin{align*}
  &\big\|t^{1-\sigma _{1}}\Gamma F_{ -}{\left(t^{2-2\sigma
        _{1}}P\right)} \big\|
  \\
  &\leq 2^{-1}\big\|t^{1-\sigma _{ 1}}GF_{-}{\left(\cdot \right)}
  \big\|+2^{-1}\big\|t^{1-\sigma _{1 }}G^{*}F_{-}{\left(\cdot \right)}
  \big\|
  \\
  &\leq 2^{-1}\big\|t^{2-2\sigma _{ 1}}F_{-}{\left(\cdot \right)}
  G^{*}GF_{-}{\left(\cdot \right)} \big\|^{1/2}+2^{-1}\big\|t^{2-
    2\sigma _{1}}F_{-}{\left(\cdot \right)} GG^{*}F_{-}{\left(\cdot
    \right)} \big\|^{1/2}
  \\
  &\leq \big\|F_{-}{\left( \cdot \right)}t^{2-2\sigma _{1}}
  PF_{-}{\left(\cdot \right)}\big\|^{ 1/2}\leq 1.
\end{align*}
\end{proof}

\begin{remark}\label{rem:5.3}
In the case of (\ref{bb13}) we define $ \Gamma $ as follows: We pick
$ l\leq n-1$ such that (\ref{bb13}) holds and write 
\begin{align*}
  &\gamma _{1}=c_{l}{{x }\over{\tilde{x}_{n}}} \cdot
  \omega_{l}{\left(E_{0}\right)} +r_{t,E}{\left(x,\xi\right)} ;
  \\
  &c_{l}=\partial _{u_{l}} \gamma _{1}{\left(w,E_{0}\right)}_{
    |w=0},\quad \tilde{x}_{ n}=tk{\left(E_{0}\right)}.
\end{align*}
The operator $ G=G_{t}=\gamma ^{w}_{t}{\left( x,p\right)}$ is given by
the symbol (using the substitution (\ref{e2}))
\begin{align}\label{EE1} 
  &\gamma _{t}{\left(x,\xi\right)} =\gamma
  ^{1}_{t}{\left(x,\xi\right)} +\gamma ^{2}_{t}{\left(x,\xi\right)} ;
  \\
  &\gamma ^{1}_{t}{\left( x,\xi\right)}=t^{-1}x\cdot \omega_{
    l}{\left(E_{0}\right)},\nonumber
  \\
  &\gamma ^{ 2}_{t}{\left(x,\xi\right)} ={{k{\left(E_{0}\right)}}
    \over{c_{l}}}{\bigg(r_{t,E}{\left( x,\xi\right)}+\sum _{2\leq
        {\left\lvert
            \alpha \right\rvert}\leq m_{0}}c_{%
        \alpha }\gamma ^{\alpha }{\left(x,\xi
        \right)}\bigg)}\tilde{\tilde{g}}_{1}{\left(t^{-1}
      x\right)}\tilde{\tilde{g}}_{ 2}{\left(\xi\right)},\nonumber
  \end{align}
cf. (\ref{e1}). One proves Proposition 5.1 with this $G$
in the same way as before. Let $ \Gamma =\re {\left(G\right)}$. We
have (\ref{dd2222}) for this $ \Gamma $.
\end{remark}

\section{Mourre theory for $\Gamma $}
\label{sec:6}
The goal of this section is to show that $\Gamma$ (modified by a
constant)  and a certain conjugate
operator which we introduce below satisfy a version of the uncertainty
principle.  We accomplish this using Mourre theory. The abstract
version of  the uncertainty
principle we shall  need is the following.

\begin{lemma}\label{lem:6.10} Suppose $\bar{H}$ and $\bar{A}$ are two
  self-adjoint operators on the same Hilbert space such that
\begin{enumerate}
   \item \label{item:96} $\mathcal{D}(\bar{H})\cap \mathcal
   D(\bar{A})$ is dense in  $\mathcal D(\bar{H})$.
\item \label{item:97} $\sup_{|s|<    1}\|\bar{H}e^{is\bar{A}}\psi\|<\infty$ for all
  $\psi\in\mathcal{D}(\bar{H})$. 
\item \label{item:98} The form $i[\bar{H},\bar{A}]$ extends to an $\bar{H}$-bounded
  operator satisfying $$i[\bar{H},\bar{A}]\geq c_1>0.$$
\item \label{item:99} The form $i{\left[i{\left[\bar{H},\bar{A}\right]},\bar{A}
    \right]}$ extends to a bounded
  operator $B$ satisfying $$\left\|B\right\|\leq C_1<\infty.$$
\end{enumerate}

Then there exists $C_2=C(c_1,C_1)>0$ such that for all 
$ h\in C^{\infty
  }_{0}{\left(\mathbf{R}\right)} $ (with 
$ {\left\langle \bar{A}\right\rangle }
:={\left(1+\bar{A}^{2}\right)}^{1/2}$ )
  \begin{equation}
    \label{f1} 
    \big\|{\left\langle \bar{A}\right\rangle }^{ 
      -1}h{\left(\bar{H}\right)}
    {\left\langle \bar{A}\right\rangle }^{
      -1}\big\|\leq C_2\|h\|_{L^{
  1}}.
\end{equation} 

In particular, for all $ h_{1},h_{2}\in C^{\infty }_{0}
  {\left(\mathbf{R}\right)}$, 
   $ \delta_2>\delta_1\geq 0 $ and $t\geq 1$
\begin{align}
    \label{f88} 
     \left\|h_{1}{\left(t^{-\delta_1}\bar{A}
\right)}h_{2}{\left(t^{\delta_2}\bar{H} \right)}\right\|\leq
&C_3t^{(\delta_1-\delta_2)/2 };\\&C_3=C_2\|h_2\|_{L^{
  2}}\sup|\langle x\rangle h_1(x)|.\nonumber
\end{align}

\end{lemma}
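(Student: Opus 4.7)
The plan is to prove \eqref{f1} by reducing it to the Mourre limiting absorption principle, and then deduce \eqref{f88} from \eqref{f1} by a scaling argument together with operator monotonicity.

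For \eqref{f1}, the first step is to observe via Stone's formula that the sandwiched spectral measure $\langle\bar{A}\rangle^{-1}\,dE_\lambda(\bar{H})\,\langle\bar{A}\rangle^{-1}$ is recovered weakly as $\pi^{-1}\im\bigl(\langle\bar{A}\rangle^{-1}(\bar{H}-\lambda-i\epsilon)^{-1}\langle\bar{A}\rangle^{-1}\bigr)$ as $\epsilon\downarrow 0$. Consequently, after integration against $h\in C^\infty_0(\mathbf{R})$, the bound \eqref{f1} is equivalent to the uniform limiting absorption estimate
\begin{equation*}
\sup_{\lambda\in\mathbf{R},\,\epsilon>0}\big\|\langle\bar{A}\rangle^{-1}(\bar{H}-\lambda-i\epsilon)^{-1}\langle\bar{A}\rangle^{-1}\big\|\leq \pi C_2.
\end{equation*}
This is the Mourre limiting absorption principle in its global form---since $i[\bar{H},\bar{A}]\geq c_1$ holds on all of the Hilbert space, no spectral cutoff is needed. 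Its standard proof starts from the commutator identity
\begin{equation*}
R^*\,i[\bar{H},\bar{A}]\,R = i\bar{A}R-iR^*\bar{A}+2\epsilon R^*\bar{A}R,\quad R=(\bar{H}-\lambda-i\epsilon)^{-1},
\end{equation*}
valid on $\mathcal{D}(\bar{A})$ thanks to conditions (1) and (2). Sandwiching by $\langle\bar{A}\rangle^{-1}$, applying the Mourre inequality, and using the bounded double commutator (condition (4)) to control the error from commuting $\bar{A}$ through $R$---after the standard regularization $\bar{A}_\nu=\bar{A}(1+i\nu\bar{A})^{-1}$ to handle the unboundedness of $\bar{A}$---yields a Gr\"onwall-type differential inequality in $\epsilon$ that integrates to the displayed uniform bound, with constant $C_2=C(c_1,C_1)$ scaling linearly in $1/c_1$.

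For \eqref{f88}, apply \eqref{f1} with $\bar{A}$ replaced by $\bar{A}':=t^{-\delta_1}\bar{A}$. The Mourre data scale as $c_1'=c_1 t^{-\delta_1}$ and $C_1'=C_1 t^{-2\delta_1}$, so the linear scaling of $C_2$ in $1/c_1$ yields
\begin{equation*}
\big\|\langle\bar{A}'\rangle^{-1}h(\bar{H})\langle\bar{A}'\rangle^{-1}\big\|\leq C\,t^{\delta_1}\|h\|_{L^1},\quad h\in C^\infty_0(\mathbf{R}).
\end{equation*}
Now write $\|h_1(\bar{A}')h_2(t^{\delta_2}\bar{H})\|^2=\bigl\||h_1|(\bar{A}')h_2(t^{\delta_2}\bar{H})\bigr\|^2$ and use the pointwise bound $|h_1(x)|^2\leq K\langle x\rangle^{-2}$ with $K:=\bigl(\sup_x|\langle x\rangle h_1(x)|\bigr)^2$, together with operator monotonicity of $\sqrt{\cdot}$, to get
\begin{equation*}
\bigl\||h_1|(\bar{A}')h_2(t^{\delta_2}\bar{H})\bigr\|^2\leq K\,\bigl\|\langle\bar{A}'\rangle^{-1}|h_2|^2(t^{\delta_2}\bar{H})\langle\bar{A}'\rangle^{-1}\bigr\|.
\end{equation*}
Apply the scaled \eqref{f1} with $h=|h_2|^2(t^{\delta_2}\cdot)$, for which $\|h\|_{L^1}=t^{-\delta_2}\|h_2\|_{L^2}^2$, to obtain
\begin{equation*}
\|h_1(t^{-\delta_1}\bar{A})h_2(t^{\delta_2}\bar{H})\|^2\leq CK\,t^{\delta_1-\delta_2}\|h_2\|_{L^2}^2,
\end{equation*}
and take square roots; this is \eqref{f88} with $C_3=\sqrt{C}\,\|h_2\|_{L^2}\sup|\langle x\rangle h_1|$.

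The principal obstacle is the Mourre LAP underlying \eqref{f1}; the remainder is routine scaling. The delicate step in the LAP is the rigorous justification of the commutator calculus when $\bar{A}$ is unbounded: the regularization $\bar{A}_\nu$ is needed to make the $\epsilon$-differential inequality rigorous, and the bounded double commutator hypothesis (4) is exactly what controls the error terms that appear. The globality of the Mourre estimate in the present setting simplifies matters compared to the usual local Mourre framework, as no interval cutoff is required.
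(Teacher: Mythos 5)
Your proof of \eqref{f1} matches the paper's: both deduce the uniform resolvent bound $\|\langle\bar{A}\rangle^{-1}(\bar{H}-z)^{-1}\langle\bar{A}\rangle^{-1}\|\leq C$ from the global Mourre estimate (the paper simply cites the constant-tracking of Mourre's method), and then integrate via the Stone/limiting-absorption representation of $h(\bar{H})$.

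For \eqref{f88}, you take a different route from the paper, and this is the point worth flagging. You rescale only $\bar{A}\to t^{-\delta_1}\bar{A}$ and then invoke the claim that the LAP constant $C_2(c_1,C_1)$ ``scales linearly in $1/c_1$'' to absorb the degradation $c_1\to c_1 t^{-\delta_1}$. As stated, that claim is imprecise: by the scaling $\bar{H}\to\lambda\bar{H}$ one sees $C_2(c_1,C_1)=c_1^{-1}\Phi(C_1/c_1)$ for some function $\Phi$, and ``linear in $1/c_1$'' holds only if $\Phi$ is constant, which it is not. Your final inequality is nonetheless correct, because under your rescaling the ratio $C_1'/c_1' = t^{-\delta_1}C_1/c_1$ does not increase, so $\Phi(C_1'/c_1')\leq\Phi(C_1/c_1)$ (assuming $\Phi$ nondecreasing) and the $t^{\delta_1}$ blow-up comes purely from the homogeneous $c_1^{-1}$ factor --- but making this rigorous requires a monotonicity property of $\Phi$ that you would have to extract from the LAP proof. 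The paper sidesteps all of this with a simultaneous rescaling $\bar{A}\to t^{-\delta_1}\bar{A}$, $\bar{H}\to t^{\delta_1}\bar{H}$, under which $i[\bar{H},\bar{A}]$ is exactly invariant and the double commutator acquires a factor $t^{-\delta_1}\leq 1$; thus the Mourre constants $(c_1,C_1)$ are preserved (or improved) and one applies \eqref{f1} verbatim, choosing $h(x)=|h_2(t^{\delta_2-\delta_1}x)|^2$ so that $\|h\|_{L^1}=t^{\delta_1-\delta_2}\|h_2\|_{L^2}^2$. This is the cleaner argument because it never needs to know how $C_2$ depends on $(c_1,C_1)$. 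The remaining steps --- dominating $|h_1|^2$ by $\langle x\rangle^{-2}$ and using $\|ST\|^2=\|S|T|^2S^*\|$ --- are the same in both proofs.
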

\begin{proof} We readily obtain by
keeping track of constants in the method of [M] that for some positive
 constant $C$ depending only on $c_1$ and $C_1$ 
\begin{equation}
\label{f7} 
\big\|{\left\langle \bar{A}\right\rangle }^{
-1}{\left(\bar{H}-z\right)}^{
-1}{\left\langle \bar{A}\right\rangle }^{
-1}\big\|\leq C;\quad \im z\not=0.
\end{equation}

\par  Representing 
$ h{\left(\bar{H}\right)} =\pi ^{-1}\lim _{ \epsilon
  \downarrow 0}\int h{\left(\lambda
  \right)}\im {\big({\big(
\bar{H}-\lambda -i\epsilon \big)}^{
-1}\big)}d\lambda $ and then using (\ref{f7}) we conclude
(\ref{f1}).

As for \eqref{f88} we use  (\ref{f1}) with $\bar{A}\to
t^{-\delta_1}\bar{A}$ and $\bar{H}\to t^{\delta_1}\bar{H}$, and with $h(x)=|h_2(t^{\delta_2-\delta_1}x)|^2$. Notice
that (\ref{item:98}) and
  (\ref{item:99}) hold with the same constants for this
  replacement.
\end{proof}

To apply Lemma \ref{lem:6.10} we shall need a specific construction of
$\Gamma$ given in terms of a
hierarchy of sharp localizations in our observables (see (\ref{Ff1})
and (\ref{Ff2})). We are forced to use such hierarchy due to the energy variation of $(\omega(E),\xi(E))$. 

\par Let 
$ \Gamma $ be as in Section 5 (assuming first (\ref{b13})). The $
m_{0}$ of (\ref{e1}) is here considered as arbitrary (but fixed); the
condition (\ref{E1}) (needed before for dynamical statements) is not
imposed. 

\par We introduce for 
$ 0<\bar{\delta}\leq 1$ the operators
\begin{align}\label{FF11} 
   &\bar{H}=t^{1-\bar{\delta }}\Gamma ,\quad
  \bar{A}=\bar{a}^{w}_{ t}{\left(x,p\right)};
  \\
  &\bar{a}_{t}{\left(x,\xi\right)} =t^{\bar{\delta}-1}{\left( x\cdot
      \omega_{l}{\left(E_{0}\right)} 
+x\cdot \big(
          \omega_{l}{\left(h{\left( x,\xi\right)}\right)}-
          \omega_{ l}{\left(E_{0}\right)} \big)\tilde{\tilde{g}}_{
        1}{\left(t^{-1}x\right)}\tilde{\tilde{g}}_{2}
      {\left(\xi\right)}\right)}.\nonumber
\end{align}

\par We shall need a specific construction of the functions
$ \tilde{\tilde{g}}_{ 1}$ and $ \tilde{\tilde{g}}_{ 2}$ in the
definitions \eqref{FF11} in terms of a
small parameter $ \epsilon >0$:

\par The factor 
$ \tilde{\tilde{g}}_{1}{\left(t^{-1}x\right)}$ is the product of the
$n$ functions 
\begin{equation}
\label{Ff1} 
\begin{split}
  &F_{-}{\left(\epsilon ^{-3}
      |t^{-1}x\cdot \omega_{j}{\left(E_{ 0}\right)}|\right)};\quad
  j=1,\dots ,n-1,
  \\
  &F_{-}{\left(\epsilon ^{-2} |t^{-1}x\cdot
      \omega_{n}{\left(E_{ 0}\right)}-k{\left(E_{0}\right)}
      |\right)}.
\end{split}
\end{equation}

\par The factor 
$ \tilde{\tilde{g}}_{ 2}{\left(\xi\right)}$ is the product of the $n$
functions 
\begin{equation}
\label{Ff2} 
\begin{split}
 & F_{-}{\left(\epsilon ^{-2}
      |{\left(\xi-\xi{\left(E_{0} \right)}\right)}\cdot \omega_{l
      }{\left(E_{0}\right)}|\right)},
  \\
  &F_{-}{\left(\epsilon
      ^{-3} |{\left(\xi-\xi{\left(E_{0} \right)}\right)}\cdot
      \omega_{j }{\left(E_{0}\right)}|\right)}
  ;\quad j=1,\dots ,n-1,\quad j\not=l,
  \\
  &F_{
    -}{\left(\epsilon ^{-4}|{\left( \xi-\xi{\left(E_{0}\right)}
        \right)}\cdot \omega_{n}{\left(E_{
            0}\right)}|\right)}.
\end{split}
\end{equation}

\par Now, indeed for we may apply Lemma \ref{lem:6.10} to the example
 introduced by 
 \eqref{FF11}.

\begin{lemma}\label{lem:6.1}
  There exists $ \epsilon _{0}>0$ such that for all positive $
  \epsilon \leq \epsilon _{0}$ there exists $ t_{0}\geq 1$ such
  that for all $ t\geq t_{0}$ the conditions of Lemma \ref{lem:6.10}
  are fulfilled for $\bar{H}=\bar{H}_{t,\epsilon}$ and
  $\bar{A}=\bar{A}_{t,\epsilon}$ with constants independent of $ t\geq t_{0}$.
\end{lemma}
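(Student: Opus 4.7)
My plan is to verify the four hypotheses of Lemma \ref{lem:6.10} for the operators $\bar{H} = t^{1-\bar{\delta}}\Gamma$ and $\bar{A} = \bar{a}^w_t(x,p)$ by reducing everything to the pseudodifferential calculus applied to their leading symbols. The leading symbol of $\Gamma$ is $(\xi-\xi(E_0))\cdot\omega_l(E_0)$ (a first-order affine symbol in $\xi$) plus the bounded correction $B_1(t)$ from Section \ref{sec:5}; the leading symbol of $\bar{A}$ is $t^{\bar{\delta}-1} x\cdot\omega_l(E_0)$, plus a correction $\bar{A}_2$ supported on $\tilde{\tilde{g}}_1\tilde{\tilde{g}}_2$. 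The canonical commutation $i[p\cdot\omega_l(E_0),x\cdot\omega_l(E_0)]=I$ will generate the constant $1$ once the scaling factors $t^{1-\bar{\delta}}\cdot t^{\bar{\delta}-1}$ cancel. The objective is to show that all remaining contributions to $i[\bar{H},\bar{A}]$ are at most $1/2$ in operator norm (giving Mourre constant $c_1\geq 1/2$) and that the double commutator is uniformly bounded.

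Condition (\ref{item:96}) is immediate since $\mathcal{S}(\mathbb{R}^n)$ is a common core. For (\ref{item:97}), once (\ref{item:98}) and (\ref{item:99}) are in hand, $i[\bar{H},\bar{A}]$ is actually uniformly bounded (not merely $\bar{H}$-bounded), so Duhamel gives
\begin{equation*}
\bar{H}e^{is\bar{A}}\psi = e^{is\bar{A}}\bar{H}\psi + \int_0^s e^{i(s-u)\bar{A}}\,i[\bar{H},\bar{A}]\,e^{iu\bar{A}}\psi\,du,
\end{equation*}
yielding $\|\bar{H}e^{is\bar{A}}\psi\|\leq \|\bar{H}\psi\|+|s|\,\|i[\bar{H},\bar{A}]\|\,\|\psi\|$, uniformly on $|s|<1$.

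The heart of the matter is the Mourre estimate (\ref{item:98}). By the Weyl calculus,
\begin{equation*}
i[\bar{H},\bar{A}] = t^{1-\bar{\delta}}\{\mathrm{Re}\,\gamma_t,\bar{a}_t\}^w + R_t,
\end{equation*}
with $R_t$ a higher-order pseudodifferential remainder. The leading Poisson bracket between $(\xi-\xi(E_0))\cdot\omega_l(E_0)$ and $t^{\bar{\delta}-1}x\cdot\omega_l(E_0)$ is exactly $t^{\bar{\delta}-1}$, producing the identity after rescaling. The remaining contributions split into (i) brackets involving $\bar{A}_2$ and (ii) brackets involving $B_1(t)$. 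Here the crucial structural fact built into the hierarchy (\ref{Ff1})--(\ref{Ff2}) is that on the support of $\tilde{\tilde{g}}_1\tilde{\tilde{g}}_2$ one has $h(x,\xi)-E_0 = O(\epsilon^4)$, since the linear term $k_0\omega_0\cdot(\xi-\xi(E_0))$ is controlled by the $\epsilon^{-4}$-localization in the $\omega_n(E_0)$-direction of $\xi$, and the quadratic terms are controlled by the $\epsilon^{-2}$ and $\epsilon^{-3}$ localizations in the other directions. Consequently $\omega_l(h)-\omega_l(E_0) = O(\epsilon^4)$; together with $\nabla_x h = O(t^{-1}\epsilon^2)$ (from the homogeneity identity $x\cdot\nabla_x h=0$ and smoothness at the critical ray), this makes each correction term $O(\epsilon)$ after multiplication by the $\epsilon^{-3}t^{-1}$ cost of cutoff derivatives. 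Choosing $\epsilon$ small and $t$ large then forces $i[\bar{H},\bar{A}] \geq 1/2$.

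For the double commutator (\ref{item:99}) the key observation is that the leading part of $i[\bar{H},\bar{A}]$ is the constant $I$, which commutes with $\bar{A}$, so $i[i[\bar{H},\bar{A}],\bar{A}]$ is generated entirely by the correction terms. By the same $\epsilon$-bookkeeping each contribution is uniformly bounded in $t\geq t_0$. The main obstacle I anticipate is precisely this bookkeeping: one must check that every derivative of a cutoff $F_-(\epsilon^{-\mu_j}\cdot)$ of size $\epsilon^{-\mu_j}$ is compensated by either a factor $(h-E_0)^\alpha = O(\epsilon^{4\alpha})$, by $\nabla_x h = O(t^{-1}\epsilon^2)$, or by a small off-diagonal component of $x$ or $\xi$, so that the net size remains $O(1)$ in $t$ and at worst $O(1)$ in $\epsilon$. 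The asymmetric choice of exponents $2,3,4$ in (\ref{Ff1})--(\ref{Ff2}) is precisely engineered so that these cancellations succeed in every such computation.
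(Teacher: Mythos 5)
Your proposal is correct and follows essentially the same route as the paper: the leading Poisson bracket produces the constant $1$, the hierarchy \eqref{Ff1}--\eqref{Ff2} yields $h-E_0=O(\epsilon^4)$ and $t\,\partial_x h=O(\epsilon^2)$ on the relevant support so that the corrections are $O(\epsilon)$, and the double commutator is $O(t^{\bar\delta-1})=O(1)$ because the leading part of $i[\bar H,\bar A]$ is constant. The only point worth spelling out more explicitly is the passage from the symbol sup-bound $\sup_{t}|a_t|\le C\epsilon$ to the operator-norm bound $\le 1/4$, which the paper carries out via the $L^2$-boundedness result [H\"o, Theorem 18.6.3] applied to symbols in $S_{\mathrm{unif}}(1,g^{1,0}_t)$, using that the associated universal constant $\nu_0$ can be beaten by choosing $\epsilon$ small and $t\ge t_0(\epsilon)$ large.
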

\begin{proof}
\par We shall  verify Lemma \ref{lem:6.10}  (\ref{item:98}) and
  (\ref{item:99}) only (Lemma \ref{lem:6.10}  (\ref{item:96}) and
  (\ref{item:97}) follow readily from the calculus of pseudodifferential operators).  As for (\ref{item:98}) we claim that for all small enough $ \epsilon $
\begin{equation}
\label{f5} 
 i{\left[\bar{H},\bar{A} \right]}\geq 2^{-1};\quad t\geq
t_{0}=t_{0}{\left(\epsilon \right)}.
\end{equation}
\par To see this we notice that clearly the first term in (\ref{e1})
and the first term of the symbol  
$ \bar{a}$ contribute by 
\begin{equation*}
i{\left[t^{1-\bar{\delta}
}{\left(\gamma ^{1}\right)}^{%
w}{\left(x,p\right)},t^{\bar{\delta}-1}x\cdot \omega_{l}{\left(
E_{0}\right)}\right]}=1,
\end{equation*}
so it remains to estimate 
\begin{equation}
\label{F5} 
\left\|i{\left[t^{1-\bar{\delta} }{\left(\re {\left(\gamma ^{2
}_{t}\right)}\right)}^{
w}{\left(x,p\right)},\bar{A} \right]}\right\|\leq 4^{-1};\quad t\geq
t_{0}, 
\end{equation}
and 
\begin{equation}
\label{FF5} 
 \left\|i{\left[t^{1-\bar{\delta}
}{\left(\gamma ^{1}\right)}^{ 
  w}{\left(x,p\right)},\bar{A}
-t^{\bar{\delta}-1}x\cdot \omega_{
  l}{\left(E_{0}\right)}\right]} \right\|\leq 4^{-1};\quad t\geq
t_{0}. 
\end{equation}

\par Let us denote by 
$ a_{t}{\left(x,\xi\right)}$ the Weyl symbol of the operator in
(\ref{F5}) or the one in (\ref{FF5}). We have in both cases that $
a_{t}\in S_{unif}{\left(1,g^{1,0}_{ t}\right)}$, so it suffices to
show (cf. [H\"o, Theorem 18.6.3] and the proof of [DG, Proposition
D.5.1]) that 
\begin{equation}
\label{F6} 
\sup  _{x,\xi\in \mathbf{R}^{
    n},t\geq t_{0}}|a_{t}{\left( x,\xi\right)}|\leq \nu _{0} ,
\end{equation}
where $ \nu _{0}$ is a (universal) small positive
constant associated for example to the $ L^{2}$ {--}boundedness result
[H\"o, Theorem 18.6.3].
\par 

\par For (\ref{F6}) we note the uniform bounds
\begin{align*}
  &h{\left(x,\xi\right)} -E_{0}=O{\left(\epsilon ^{4}\right)},
  \\
  &t\partial _{x_{j}}h{\left( x,\xi\right)}=O{\left(\epsilon ^{2
      }\right)},
  \\
  & \partial _{\xi_{j}}h{\left( x,\xi\right)}=O{\left(\epsilon ^{2
      }\right)}\text{ for }j\leq n-1,\quad \partial
  _{\xi_{n}}h{\left(x,\xi \right)}=O{\left(\epsilon ^{0}\right)},
  \\
  &\gamma _{j}{\left(x,\xi
    \right)}=O{\left(\epsilon ^{2}\right)},\quad t\partial _{x}
  \gamma _{j}{\left(x,\xi\right)}=O{\left(      \epsilon
  ^{0}\right)},\quad \partial _{ 
    \xi}\gamma _{j}{\left(
      x,\xi\right)}=O{\left(\epsilon ^{0 }\right)},
\end{align*}
on the support of the function $ \tilde{\tilde{g}}_{
  1}{\left(t^{-1}x\right)}\tilde{\tilde{g}}_{2} {\left(\xi\right)}$
given by (\ref{Ff1}) and (\ref{Ff2}).  Here we used \eqref{eq:3} and
\eqref{eq:4}, and the notation 
\begin{equation*}
x_{j}=x\cdot \omega_{j}{\left(E_{0}\right)},\quad \xi_{j}
=\xi\cdot \omega_{j}{\left(E_{0}
\right)}.
\end{equation*}

\par By estimating the leading term of the symbol using these bounds we
may show (with some patience) that 
\begin{equation}
\label{Ff3} 
\sup _{x,\xi\in \mathbf{R}^{n},t\geq t_{0}}|a_{t}{\left(
    x,\xi\right)}|\leq C\epsilon , 
\end{equation}
from which (\ref{F6}) and (therefore) (\ref{f5})
follow.\par

\par As for (\ref{item:99}) we  have the bound 
\begin{equation}
\label{f6} 
\left\|i{\left[i{\left[\bar{H},\bar{A}\right]},\bar{A}
    \right]}\right\|=O{\left( t^{\bar{\delta}-1}\right)} 
=O{\left(1\right)}.
\end{equation}
\end{proof}

As an immediate  consequence of Lemmas \ref{lem:6.10}  and
\ref{lem:6.1} we have.

\begin{corollary}\label{cor:6.2}
  Suppose $ h_{1},h_{2}\in C^{\infty }_{0} {\left(\mathbf{R}\right)}$
  and $ 0\leq \sigma <\bar{\delta}\leq 1$. Then there exists $
  \epsilon _{0}>0$ such that for all positive $ \epsilon \leq \epsilon
  _{0}$ there exists $C>0$ such that for all $ t\geq 1$
  \begin{equation}
    \label{f8} 
     \left\|h_{1}{\left(\bar{A}
\right)}h_{2}{\left(t^{\bar{\delta}-\sigma }\bar{H} \right)}\right\|\leq
Ct^{{\left(\sigma 
    -\bar{\delta}\right)}/2 }.
\end{equation}
\end{corollary}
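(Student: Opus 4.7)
The plan is straightforward: the corollary is designed to be an immediate consequence of the abstract uncertainty estimate \eqref{f88} of Lemma~\ref{lem:6.10}, once the hypotheses of that lemma have been verified for the concrete pair $\bar{H}=\bar{H}_{t,\epsilon}$, $\bar{A}=\bar{A}_{t,\epsilon}$, which is precisely the content of Lemma~\ref{lem:6.1}. So essentially only a parameter match is needed.

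First I would fix $\epsilon_0$ as produced by Lemma~\ref{lem:6.1}, and for $\epsilon\in(0,\epsilon_0]$ let $t_0=t_0(\epsilon)$ be the threshold above which the hypotheses of Lemma~\ref{lem:6.10} hold uniformly, with constants $c_1$ and $C_1$ independent of $t$. Consequently the constant $C_2=C(c_1,C_1)$ in \eqref{f1}, and hence also $C_3$ in \eqref{f88}, may be chosen independent of $t\geq t_0$. I would then invoke \eqref{f88} with the specific choices
\begin{equation*}
\delta_1=0,\qquad \delta_2=\bar{\delta}-\sigma,
\end{equation*}
which is legitimate since the hypothesis $0\leq\sigma<\bar{\delta}\leq 1$ gives $\delta_2>\delta_1\geq 0$. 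The resulting estimate is
\begin{equation*}
\|h_1(\bar{A})\,h_2(t^{\bar{\delta}-\sigma}\bar{H})\|\leq C_3\,t^{(\sigma-\bar{\delta})/2},\qquad t\geq t_0.
\end{equation*}

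Finally, for the remaining bounded range $1\leq t\leq t_0(\epsilon)$, I would simply use the trivial bound $\|h_1(\bar{A})\,h_2(t^{\bar{\delta}-\sigma}\bar{H})\|\leq \|h_1\|_\infty\|h_2\|_\infty$, and then enlarge the constant $C$ in the statement so as to absorb the ratio $\|h_1\|_\infty\|h_2\|_\infty\,/\,t^{(\sigma-\bar{\delta})/2}$ over this compact parameter interval (which is finite since $\sigma-\bar{\delta}<0$).

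There is no real obstacle here, since the hard work has already been carried out: the abstract commutator/Mourre estimate is packaged in Lemma~\ref{lem:6.10}, and all pseudodifferential bookkeeping (in particular the positive commutator bound \eqref{f5} and the double-commutator bound \eqref{f6}) is packaged in Lemma~\ref{lem:6.1}. The only mildly subtle point is the uniformity of the Mourre constants $c_1,C_1$ in $t$, but that is exactly what is stated at the end of Lemma~\ref{lem:6.1}, so the constant $C_3$ produced by Lemma~\ref{lem:6.10} inherits this uniformity and the proof is complete.
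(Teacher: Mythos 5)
Your proof is correct and follows exactly the route the paper intends: the corollary is stated as an immediate consequence of Lemmas~\ref{lem:6.10} and~\ref{lem:6.1}, and your substitution $\delta_1=0$, $\delta_2=\bar{\delta}-\sigma$ in \eqref{f88}, together with the uniformity of the Mourre constants from Lemma~\ref{lem:6.1} and the trivial enlargement of $C$ over the compact range $1\leq t\leq t_0(\epsilon)$, is precisely the detail the paper suppresses.
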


\begin{remark}\label{rem:6.3}
  In the case of (\ref{bb13}) we introduce (with $ \Gamma $ as in
  Remark 5.3) 
  \begin{equation}
    \label{FF12} 
    \begin{split}
      &\bar{H}=t^{1-\bar{\delta}}\Gamma ,\quad \bar{A}=\bar{a}^{w}_{
        t}{\left(x,p\right)};
      \\
    &\bar{a}_{t}{\left(x,\xi\right)} =t^{\bar{\delta}}{\left(
          {\left(\xi-\xi{\left(E_{0}\right)} \right)}\cdot
          \omega_{l}{\left(E_{ 0}\right)}+b{\left(x,\xi\right)}
          \tilde{\tilde{g}}_{ 1}{\left(t^{-1}x\right)}\tilde{\tilde{g}}_{2}
          {\left(\xi\right)}\right)},
      \\
      &b{\left(x,\xi\right)}={\left(
          \xi-\xi{\left(h{\left(x,\xi\right)} \right)}\right)}\cdot
      \omega_{l }{\left(h{\left(x,\xi\right)}
        \right)}-{\left(\xi-\xi{\left(E_{ 0}\right)}\right)}\cdot
      \omega_{ l}{\left(E_{0}\right)}.
    \end{split}
  \end{equation}
  Here the factor $ \tilde{\tilde{g}}_{ 1}{\left(t^{-1}x\right)}$ is
  the product of the $n$ functions 
  \begin{align*}
    &F_{-}{\left(\epsilon ^{-2} |t^{-1}x\cdot \omega_{l}{\left(E_{
              0}\right)}|\right)},
    \\
    & F_{-}{\left(\epsilon ^{- 3}|t^{-1}x\cdot \omega_{j}{\left(
            E_{0}\right)}|\right)} ;\quad j=1,\cdots ,n-1,\quad
    j\not=l,
    \\
    &F_{-}{\left(\epsilon ^{-2}|t^{ -1}x\cdot \omega_{n}{\left(E_{0}
          \right)}-k{\left(E_{0}\right)} |\right)},
  \end{align*}
  while the factor $ \tilde{\tilde{g}}_{ 2}{\left(\xi\right)}$ is the
  product of 
  \begin{align*}
    &F_{-}{\left(\epsilon ^{-3} |{\left(\xi-\xi{\left(E_{0}
              \right)}\right)}\cdot \omega_{j
        }{\left(E_{0}\right)}|\right)} ;\quad j=1,\cdots ,n-1,
    \\
    &F_{-}{\left(\epsilon ^{-4}|{\left( \xi-\xi{\left(E_{0}\right)}
          \right)}\cdot \omega_{n}{\left(E_{ 0}\right)}|\right)}.
\end{align*}
\end{remark}
\par One verifies (\ref{f8}) under the same conditions as in
Corollary~\ref{cor:6.2} along the same line as before.

\section{Proof of Theorem~\ref{thm:1.1}}
\label{sec:7}

The proof of Theorem~\ref{thm:1.1} is based on
Proposition~\ref{prop:4.6}, and Corollaries~\ref{cor:5.2} and
\ref{cor:6.2} (with the assumption (\ref{b13}));  we show that the
$t^{-\delta}$--localization and the strong localization of
$\Gamma$ are incompatible with the uncertainty
principle as expressed in Corollary \ref{cor:6.2}.

\par We recall the assumptions of Proposition~\ref{prop:4.6}: 
$ 0<2\delta <\min {\left(\nu ,2\delta ^{ s}\right)}$ with $ \nu
<2/5$ and $ \delta ^{s}$ as in (\ref{b19}).\par

\begin{lemma}\label{lem:7.1}
  With $ \bar{A}=\bar{A}_{ t}$ given in terms of any (small) $
  \epsilon >0$ and of $ \bar{\delta}=\delta $ (with $ \delta $ as
  above) by either (\ref{FF11}) (in the case of (\ref{b13})) or
  (\ref{FF12}) (in the case of (\ref{bb13})) 
  \begin{equation}
\label{G1} 
\lim  _{t\rightarrow \infty }
||F_{+}{\left(|\bar{A} |\right)}\psi {\left(t\right)} ||=0,
\end{equation}
where $ \psi =f{\left(H\right)}\psi $ is given as in
Proposition~\ref{prop:4.6} (with the support of $f$ being sufficiently
small possibly depending on $
\epsilon $).
\end{lemma}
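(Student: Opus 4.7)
\textit{Plan.} The plan is to reduce via Proposition~\ref{prop:4.6} to a purely symbolic question on the support of $b_t$, then use the cutoff hierarchy \eqref{Ff1}--\eqref{Ff2} together with the small parameters to obtain a pointwise bound $|\bar a_t|<\tfrac12$ there, and finally transfer this to a norm estimate on $F_+(|\bar A|)\,b_t^w(x,p)$ via the pseudodifferential calculus.

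By Proposition~\ref{prop:4.6} we have $\|\psi(t)-b_t^w(x,p)\psi(t)\|\to 0$, so it suffices to prove $\|F_+(|\bar A|)\,b_t^w(x,p)\|\to 0$. By construction each cutoff $\tilde g_j$ appearing in $b_t$ is supported in the interior of $\{\tilde{\tilde g}_j=1\}$, so on $\{b_t\neq 0\}$ the product $\tilde{\tilde g}_1(t^{-1}x)\tilde{\tilde g}_2(\xi)$ is identically one. In the case of \eqref{b13} the symbol \eqref{FF11} therefore simplifies on this support to
\begin{equation*}
\bar a_t(x,\xi)=t^{\delta-1}\,x\cdot\omega_l(h(x,\xi))=t^{\delta-1}\,u_l(h)\,x_n(h).
\end{equation*}
The $F_-$ factors in $b_t$ force $|\gamma|\le \tfrac{3}{2}t^{-\delta}$ on its support, which via $w=T(h)\gamma$ yields $|u(h)|\le C_1 t^{-\delta}$, while $\tilde g_1(t^{-1}x)$ yields $|x_n(h)|\le C_2 t$. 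Hence $|\bar a_t|\le C_1C_2$ pointwise on $\{b_t\neq 0\}$.

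The constant $C_1C_2$ is pushed below $\tfrac12$ by shrinking $\epsilon$ in \eqref{Ff1}--\eqref{Ff2} (sharpening the localization near $(\omega_0,\xi_0)$), by shrinking $\mathrm{supp}(f)$ about $E_0$ (so $T(h)\approx T(E_0)$ and $k(h)\approx k(E_0)$ on the relevant set), and by exploiting the basis-rescaling freedom of \eqref{eq:18} to normalize $\|T(E_0)\|$. With $|\bar a_t|<\tfrac12$ on the support of $b_t$, the symbolic identity $F_+(|\bar a_t|)\cdot b_t\equiv 0$ holds. The operator-norm bound $\|F_+(|\bar A|)b_t^w(x,p)\|\to 0$ is then obtained using an almost-analytic extension of $F_+$ as in \eqref{82a}, expanding the resolvent $(\bar A-z)^{-1}$ times $b_t^w$ in the $t$-uniform calculus $\Psi_{\mathrm{unif}}(\cdot,g_t^{1-\nu,\nu})$ and exploiting the symbolic vanishing. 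The case \eqref{bb13} is handled identically using $\bar A$ defined by \eqref{FF12}.

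\textit{Main obstacle.} The delicate step is the quantitative bound $C_1C_2<\tfrac12$: it requires simultaneous control of $\|T\|$ (through \eqref{eq:18}), of the $F_-$ thresholds built into $b_t$, and of the widths of the cutoffs, while tracking how all three depend on $\epsilon$ and $\mathrm{supp}(f)$. A secondary technical point is that the symbol of $\bar A$ is unbounded in $x$ (the first term grows linearly), so the almost-analytic extension must be applied only after pre-multiplication by $b_t^w$, where resolvents of $\bar A$ are controlled on the relevant subspace.
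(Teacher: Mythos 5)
There is a genuine gap, and it lies at the heart of your argument. You keep the same exponent $\delta$ both in the definition of $\bar A$ (through $\bar\delta=\delta$) and in the localization $b_t$ of Proposition~\ref{prop:4.6}. On $\operatorname{supp}(b_t)$ this gives $|\gamma|\lesssim t^{-\delta}$, and since $\bar a_t$ carries the prefactor $t^{\delta-1}$ while $|x_n|\lesssim t$, the two powers of $t$ cancel exactly and you are left with a \emph{bounded}, not decaying, estimate $|\bar a_t|\le C_1C_2$. You then need $C_1C_2<\tfrac12$, and the mechanism you invoke to get it does not work: the rescaling freedom of \eqref{eq:18} is used in the paper to make the nilpotent blocks $\|N^{\#}_j\|$ small, and this rescaling generically makes $\|T(E_0)\|$ (and hence $C_1$) \emph{larger}, not smaller. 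Moreover $C_2$ is pinned near $k(E_0)$ by the cutoff $\tilde g_1(t^{-1}x)$ and by the requirement $g_1(k(E)\omega(E))f(E)=f(E)$, so it cannot be shrunk by tightening the supports. There is therefore no way to push the product below $\tfrac12$ uniformly, and the symbolic identity $F_+(|\bar a_t|)b_t\equiv 0$ you rely on is unavailable.

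The paper's proof avoids the constant-chasing entirely by a small but essential trick: it fixes $\delta_1$ with $2\delta<2\delta_1<\min(\nu,2\delta^s)$ and applies Proposition~\ref{prop:4.6} with $b_t$ built from $\delta_1$ (not $\delta$). The $F_-$ cutoff then gives $|\gamma|\lesssim t^{-\delta_1}$ on $\operatorname{supp}(b_t)$, so the symbol of $\bar A\,b_t^w(x,p)$ lies in $S_{unif}(t^{\delta-\delta_1},g_t^{1-\nu',\nu'})$, $\nu'=\nu-\delta_1$, yielding $\|\bar A\,b_t^w(x,p)\|=O(t^{\delta-\delta_1})\to 0$. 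Since $F_+(|\lambda|)/\lambda$ is bounded, this directly gives $\|F_+(|\bar A|)\,b_t^w(x,p)\|\to 0$, with no need to manipulate constants, to arrange a pointwise bound below $\tfrac12$, or to deal separately with the unboundedness of $\bar a_t$ via an almost-analytic extension. If you insert the auxiliary exponent $\delta_1>\delta$ into your argument, the cancellation that stalls you becomes a genuine decay $t^{\delta-\delta_1}$ and the rest of your outline simplifies to the paper's one-line symbol computation.
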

\begin{proof}
  We fix $ \delta _{1}$ such that $ 2\delta <2\delta _{1}<\min {\left(
      \nu ,2\delta ^{s}\right)}$. Let $ b_{t}{\left(x,\xi\right)}$ be
  given by (\ref{D16}) in terms of $ \delta _{1}$ and $\nu $.\par

\par By Proposition~\ref{prop:4.6} it suffices to show that 
\begin{equation*}
||F_{+}{\left(|\bar{A}
|\right)}b^{w}_{t}{\left(
x,p\right)}||\rightarrow 0\text{ for }t\rightarrow 
\infty ,
\end{equation*}
and therefore in turn 
\begin{equation*}
||\bar{A}b^{w}_{
t}{\left(x,p\right)}||=O{\left(
t^{\delta -\delta _{1}}\right)}.
\end{equation*}
For the latter bound one easily check that the symbol of $
\bar{A}b^{w}_{t}{\left( x,p\right)}$ belongs to 
\begin{equation*}
S_{unif}{\left(t^{\delta -\delta _{
1}},g^{1-\nu ^{\prime},\nu ^{
\prime}}_{t}\right)};\quad 
\nu ^{\prime}=\nu -\delta _{1}.
\end{equation*}
\end{proof}

\par Now, we first fix 
$ \delta $ as above and conclude from Lemma~\ref{lem:7.1} that
\begin{equation}
\label{g1} 
||\psi {\left(t\right)}-F_{
-}{\left(|\bar{A}|
\right)}\psi {\left(t\right)}
||\rightarrow 0\text{ for }t\rightarrow \infty ,
\end{equation}
where $ \psi =f{\left(H\right)}\psi $ is given as in
Proposition~\ref{prop:4.6}. This holds for $ f\in C^{\infty
}_{0}{\left(I_{0} \right)};$ $I_{0}=I_{0} {\left(\epsilon
  \right)}$.
\par 

\par Next we fix any 
$ \sigma \in {\left(0,\delta \right)}$ in agreement with
Corollary~\ref{cor:5.2} which means that 
\begin{equation}
\label{g2} 
||F_{+}{\left(|t^{1-\sigma 
}\Gamma |\right)}\psi {\left(
t\right)}||\rightarrow 0\text{ for }t\rightarrow 
\infty .
\end{equation}
Here the input of $ \delta $ in Proposition~\ref{prop:5.1} say $
\delta _{1}$ (needed to fix the $ m_{0}$ in the definition of the $
\Gamma $ of Corollary~\ref{cor:5.2}) is different; we need to have $
\sigma >\nu ^{\prime}$, $\nu ^{ \prime}=\nu _{1}-\delta _{1}$, for
which $ \delta _{1}<\delta $ is needed. The construction of this $
\Gamma $ depends on the same $ \epsilon $ as above, cf.
Section~\ref{sec:6}.\par

\par Combining (\ref{g1}) and (\ref{g2}) leads to
\begin{equation}
\label{g3} 
||\psi {\left(t\right)}-F_{
-}{\left(|\bar{A}|
\right)}F_{-}{\left(|t^{1
-\sigma }\Gamma |\right)}\psi {\left(
t\right)}||\rightarrow 0\text{  for }t\rightarrow 
\infty .
\end{equation}

\par By combining Corollary~\ref{cor:6.2} and (\ref{g3}) we conclude (by
finally fixing $ \epsilon >0$ sufficiently small) that
\begin{equation}
\label{g4} 
||\psi {\left(t\right)}|
| \rightarrow  0\text{  for }t\rightarrow \infty ,
\end{equation}
and therefore that 
$
\psi =0$
 proving Theorem~\ref{thm:1.1}.

 \begin{remark}\label{rem:7.2}
   With the assumption (\ref{bb13}) we proceed similarly using
   Remarks~\ref{rem:5.3} and \ref{rem:6.3}, and Lemma~\ref{lem:7.1}.
\end{remark}

\section{Proof of Theorem~\ref{thm:1.2}}
\label{sec:8}

\par We shall here elaborate on the derivation of
   Theorem~\ref{thm:1.2} from our
general result Theorem~\ref{thm:1.1}.\par

\par First we remove the singularity at $x=0$ by defining
\begin{equation*}
h{\left(x,\xi\right)}=2^{-1}
\xi^{2}+\tilde{V}{\left(
x\right)};\quad \tilde{V}
{\left(x\right)}=F_{+}{\left(
|x|\right)}V{\left(\hat{x}
\right)},
\end{equation*}
where (as before) $V$ is a Morse function on $ S^{n-1}$. (See
Remarks~\ref{rem:8.3} for extensions.) In this case clearly the
hypotheses \eqref{H1}{--}\eqref{H3} of Section~\ref{sec:2} are
satisfied, and \eqref{H4} holds for any critical point $ \omega_{l}\in
C_{r}$ and energy $ E>V{\left(\omega_{j}\right)} $ upon putting $
\omega{\left(E\right)}=\omega_{l}$, $\xi{\left(E\right)}
=k{\left(E\right)}\omega_{l}$ and $ k{\left(E\right)}=\sqrt{2{\left(
      E-V{\left(\omega_{l}\right)}\right)} }$.

\par For \eqref{eq:6} we put 
\begin{equation*}
g{\left(u,\eta ,E\right)}=\sqrt{
2{\left(E-V{\left(\omega_{l}\right)}
\right)}} -\sqrt{2E-\eta ^{2
}-2V{\left(\omega_{l}+u\right)}
}, 
\end{equation*}
yielding \eqref{eq:7} with 
\begin{equation*}
A{\left(E\right)}=k{\left(E\right)}^{
-1}
\begin{pmatrix} &V^{{\left(2\right)}}(
\omega_{l})&0\\
&0&I\\ 
\end{pmatrix}. 
\end{equation*}

\par We may choose an orthonormal basis in 
$ {\left\{\omega_{l}\right\}}^{\bot }\subseteq \mathbf{R}^{n}$ for
which $ V^{{\left(2\right)}}{\left( \omega_{l}\right)}$ is diagonal,
say $ V^{{\left(2\right)}}{\left( \omega_{l}\right)}=\diag {\left(
    q_{1},\dots ,q_{n-1}\right)}$. The eigenvalues of $B(E)$ take the
form 
\begin{equation}
\label{AX} 
\begin{split}
  & \beta ^{+}_{j}{\left(E \right)}=-{{1}\over{2}}+{{
      1}\over{2}}\sqrt{1-2q_{j}/{\left(
        E-V{\left(\omega_{l}\right)}\right)} } \text{ or }
  \\
  &\beta ^{-}_{ j}{\left(E\right)}=-{{1}\over{
      2}}-{{1}\over{2}}\sqrt{1-2q_{j
    }/{\left(E-V{\left(\omega_{l}\right)} \right)}}
\end{split}
\end{equation}
say with $ \sqrt{\zeta } :=i\sqrt{-\zeta } $ if $ \zeta <0$.\par

\par Clearly the hypothesis \eqref{H5} is the non-degeneracy condition,
$ q_{j}\not=0$ for all $j$, while hypothesis \eqref{H6} amounts to $
q_{j}<0$ for some $j$, i.e.  $ \omega_{l}$ is a local maximum or a
saddle point of $V$.\par

\par As for \eqref{H7} one easily checks that there exists a smooth basis of
eigenvectors of $ B{\left(E\right)}^{tr}$ for $
E-V{\left(\omega_{l}\right)}\in {\left(0,\infty \right)}\setminus{\left\{
    2q_{1},\dots ,2q_{n-1}\right\}}$.\par

\par Elementary analyticity arguments show that given any
$ m\in {\left\{2,3,\dots \right\}}$ the set of resonances of order $m$
for any of the eigenvalues of $B(E)$ is discrete in $
{\left(V{\left(\omega_{l}\right)},\infty \right)}$.\par

\par In conclusion, the hypotheses \eqref{H1}{--}\eqref{H8} are
satisfied for any local maximum or saddle point $ \omega_{l}$ of a
Morse function $V$ for $E_{0}\in {\left(V{\left(
        \omega_{l}\right)},\infty \right)} \setminus\mathcal{D}$ where $
\mathcal{D}$ is discrete in $ {\left(V{\left(\omega_{l}\right)},\infty
  \right)}$.\par

\par Due to the possible existence of bound states we change the
definition of $ P_{l}$  to be 
\begin{equation*}
P_{l}=s-\lim _{t
\rightarrow \infty }e^{itH}\chi _{l}{\left(
\hat{x}\right)}e^{-itH
}E_{ac}{\left(H\right)}, 
\end{equation*}
where $ E_{ac}{\left(H\right)}$ is the orthogonal projection onto the
absolutely continuous subspace of $H$, see [H] and [ACH, Theorem C.1].
This gives \eqref{eq:12} with the left hand side replaced by $
E_{ac}{\left(H\right)}$. \par

\par Now, to get \eqref{eq:14} it suffices by Theorem~\ref{thm:1.1} to
verify \eqref{eq:13} for any $ E_{0}\in {\left(V{\left(\omega_{
          l}\right)},\infty \right)}$. Invoking the discreteness of
the set of eigenvalues of $H$ on the complement of the set of critical
values of $V$, cf. [ACH, Theorem C.1], one may easily conclude
\eqref{eq:13} from the following statement:\par

\par Consider any open set 
$ I_{0}\subseteq {\left(V{\left(\omega_{ l}\right)},\infty \right)}$
such that $$ I_{0}\cap {\left(\sigma _{pp}{\left( H\right)}\cup
    V{\left(C_{r}\right)} \right)}=\emptyset.$$ Let $ \mathcal{H}_{0}$
be the closure of the subspace of states $ \psi =f{\left(H\right)}\psi
$, $ f\in C^{\infty }_{0}{\left(I_{0} \right)}$, obeying \eqref{eq:8}
and \eqref{eq:9}.  Then for all $ \psi =P_{l}f{\left(H\right)}\psi $
where $ f\in C^{\infty }_{0}{\left(I_{0} \right)}$
\begin{equation}
\label{olew} 
\psi \in \mathcal{H}_{0}.
\end{equation}

\par We shall verify (\ref{olew}) by showing that indeed
$ \psi =P_{l}f{\left(H\right)}\psi $ obeys \eqref{eq:8} and
\eqref{eq:9}. We shall proceed a little more generally than needed in
that we here assume that the $ \mathcal{U}_{0}$ of \eqref{eq:9} is
given by 
\begin{align*}
  &\mathcal{U}_{0}=\mathcal{U}_{\epsilon } =\tilde{\mathcal{C}}_{\epsilon
  } \times \mathbf{R}^{n};
  \\
  &\tilde{\mathcal{C}}_{ \epsilon }={\left\{x\in \mathbf{R}^{n}\setminus
      {\left\{0\right\}}|\hat{x} \in \mathcal{C}_{\epsilon
      }\right\}},\quad \mathcal{C}_{\epsilon }={\left\{\omega\in S^{
        n-1}\mid |\omega-\omega_{l} |<\epsilon \right\}},
\end{align*}
where $ \epsilon >0$ is taken so small that $ \mathcal{C}_{\epsilon
}\cap C_{r}={\left\{ \omega_{l}\right\}}$.
\par 

\par Pick 
$ \tilde{f}\in C^{\infty }_{0 }{\left(I_{0}\right)}$ such that $ 0\leq
\tilde{f}\leq 1$ and $ \tilde{f}=1$ in a neighborhood of $ \supp
{\left(f\right)}$. Let $ r\in C^{\infty }{\left(\mathbf{R}^{n}\right)}
$ be given in terms of any $ F_{+} \in\mathcal{F}_{+}$ by
\begin{equation}
\label{z49} 
r{\left(x\right)}=\int _{0}
^{|x|}F_{+}{\left(s\right)}
ds+\int _{0}^{1}F_{-}{\left(
s\right)}ds.
\end{equation}
(Notice that $ r{\left(x\right)}=|x|$ for $ |x|\geq 1$.) Let
\begin{equation*}
  p_{||}={{1}\over{
      2}}{\left(\nabla r\cdot p+h.c.\right)},\quad
  \tilde{p}_{ 
    ||}=\tilde{f}{\left(
      H\right)}p_{||}\tilde{f}
  {\left(H\right)}.
\end{equation*}

\begin{lemma}\label{lem:8.1}
  Let $ \chi _{l}\in C^{\infty }_{0}{\left(\mathcal{C}_{\epsilon
      }\right)} $ be given with $ 0\leq \chi _{l}\leq 1$ and $ \chi
  _{l}=1$ in a neighborhood of $ \omega_{l}$, and $ \tilde{g}_{2}\in
  C^{\infty }_{0}{\left(\mathbf{R}\right)} $ by $$
  \tilde{g}_{2}{\left(s\right)} =\tilde{f}{\left(2^{-1}s^{
        2}+V{\left(\omega_{l}\right)} \right)}1_{{\left(0,\infty
      \right)} }{\left(s\right)}.$$ Let real-valued $
  g^{-}_{1},g^{+}_{1}\in C^{ \infty }_{0}{\left(\mathbf{R}\right)} $
  be given with 
  \begin{align*}
    &c^{-}_{+}<\tilde{c}_{ -};\quad c^{-}_{+}=\sup {\left(\supp
        {\left(g^{-}_{ 1}\right)}\right)},\quad \tilde{c}_{-}=\inf
    {\left( \supp {\left(\tilde{g}_{ 2}\right)}\right)},
    \\
    & c^{+}_{-}>\tilde{c}_{ +};\quad c^{+}_{-} =\inf
    {\left(\supp {\left(g^{ +}_{1}\right)}\right)},\quad
    \tilde{c}_{+}=\sup {\left(\supp {\left(\tilde{g}_{
              2}\right)}\right)}.
\end{align*}
Let $F_{+}\in \mathcal{F}_{+}$, $F_{-}\in \mathcal{F}_{-}$ and
\begin{equation*}
C>2\sqrt{2{\left(\sup {\left(
\supp {\left(f\right)}\right)}
-\min {\left(V\right)}\right)}
} .
\end{equation*}
Then, in the state $ \psi {\left(t\right)}=e^{-itH}
P_{l}f{\left(H\right)}\psi $
{\allowdisplaybreaks
  \begin{align}
&\label{esa3} 
 \int _{-\infty }^{\infty }{\left\langle r^{-1-\delta }\right\rangle }_{ t}dt<\infty
;\quad \delta >0,
\\
&\label{esa5} 
\int _{-\infty }^{\infty }|{\left\langle p\cdot r^{
      {\left(2\right)}}p\right\rangle }_{ t}|dt<\infty ,
\\
&\label{esa1} 
\int _{-\infty }^{\infty }{\left\langle r |\nabla \tilde{V} |^{2}\right\rangle }_{
  t}dt<\infty ,
\\
&\label{esa111} 
\int _{-\infty}^{\infty }{\left\langle \tilde{\chi }_{ l}r^{-{{1}\over{2}}}{\left( \eta
        ^{2}+u^{2}\right)}r^{-{{1}\over{2}}}\tilde{\chi }_{
      l}\right\rangle }_{t}dt<\infty ;\quad \tilde{\chi }_{l}=\chi _{
  l}{\left(\hat{x}\right)} F_{+}{\left(r\right)},
\\
&\label{esa333} 
\int _{1}^{\infty }-t^{ -1}{\left\langle
    F^{\prime}_{-}{\left( C^{-1}t^{-1}r\right)}\right\rangle }_{
 t}dt<\infty ,
\\
&\label{esa3333} 
\int _{1}^{\infty }t^{ -1}||g{\left(\tilde{p}_{
      ||}\right)}F_{-}{\left( C^{-1}t^{-1}r\right)}\psi {\left(
    t\right)}||^{2}dt<\infty ;\\&\quad g\in
C^{\infty }_{0}{\left( {\left(-\infty ,0\right)}\right)},\quad
\bar{g}=g,\nonumber
  \\
&\label{esa999} 
\int _{1}^{\infty }t^{ -1}||{\left(1-\tilde{g}_{ 2}{\left(\tilde{p}_{|
          |}\right)}\right)}F_{- }{\left(C^{-1}t^{-1}r\right)}
\tilde{\chi}_{l}\psi {\left( t\right)}||^{2}dt<\infty ,
\\
&\label{esa4} 
\int _{1}^{\infty }t^{-1}| |B^{-}{\left(t\right)}\psi
{\left(t\right)}||^{2}d t<\infty ;\quad B^{-}{\left(t\right)}
=g^{-}_{1}{\left(t^{-1}r\right)}
\tilde{g}_{2}{\left(\tilde{p}_{||}\right)},
\\
&\label{aaar} 
\int _{1}^{\infty }t^{-1}| |B^{+}{\left(t\right)}\psi
{\left(t\right)}||^{2}d t<\infty ;\quad B^{+}{\left(t\right)}
=g^{+}_{1}{\left(t^{-1}r\right)}
\tilde{g}_{2}{\left(\tilde{p}_{||}\right)}.
\end{align}
}
\end{lemma}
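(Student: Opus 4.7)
The proof of the stated list of estimates proceeds by a sequence of propagation arguments, each of which produces one line of the display. The scheme is the standard one: select a propagation observable $\Phi(t)$, compute the Heisenberg derivative $\mathbf{D}\Phi=\partial_t\Phi+i[H,\Phi]$, and exhibit $\mathbf{D}\Phi$ as the sum of a positive quadratic form in the quantity to be estimated and terms that either telescope (giving a finite boundary contribution on time integration) or are controlled by a previously established line. The structural inputs are the Mourre estimate for $H$ on $I_0$, which is strict since $I_0\cap(\sigma_{pp}(H)\cup V(C_r))=\emptyset$ (see [ACH, Theorem C.1]), and the fact that $\psi(t)\in\Ran P_l$, so that the cut-off $\tilde{\chi}_l$ asymptotically acts as the identity along the orbit.

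The local decay line \eqref{esa3} follows directly from abstract Mourre theory with the dilation generator $A=\tfrac{1}{2}(x\cdot p+p\cdot x)$. The next two lines I would obtain from $\Phi_1=\tfrac{1}{2}(p\cdot\nabla r+\nabla r\cdot p)$ and $\Phi_2=r\,\tilde{V}$: one has $\mathbf{D}\Phi_1=p\cdot r^{(2)}p-\nabla r\cdot\nabla\tilde{V}$, with $r^{(2)}\geq 0$ outside a bounded set, $\nabla\tilde{V}=O(r^{-1})$, and the second term integrable via \eqref{esa3}; the Heisenberg derivative of $\Phi_2$ produces $r\,|\nabla\tilde{V}|^2$ as its leading positive contribution, modulo $O(r^{-1-\delta})$ errors again handled by \eqref{esa3}. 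The angular line \eqref{esa111} is the delicate one: take $\Phi_3=\tilde{\chi}_l\,r\,\tilde{\chi}_l$. Since $\psi\in\Ran P_l$, the cut-off $\tilde{\chi}_l$ asymptotically acts as the identity and $\Phi_3$ is bounded along $\psi(t)$, while the principal positive term in $\mathbf{D}\Phi_3$ is exactly the angular kinetic form $\tilde{\chi}_l r^{-1/2}(\eta^2+u^2)r^{-1/2}\tilde{\chi}_l$, the Morse vanishing of $\nabla V$ at $\omega_l$ making the potential cross-term subleading.

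The remaining five bounds are minimal/maximal radial-velocity estimates in the spirit of Sigal--Soffer and Derezinski. For \eqref{esa333} use $\Phi_4(t)=F_-(C^{-1}t^{-1}r)$; the choice $C>2\sqrt{2(\sup\supp f-\min V)}$ ensures that the cross term in $\mathbf{D}\Phi_4$ is dominated by $-C^{-1}t^{-2}\,r\,F'_-(\cdots)$ on $f(H)$, giving the asserted sign. The three estimates \eqref{esa3333}, \eqref{esa999}, \eqref{esa4}--\eqref{aaar} follow in the same framework with observables of the form $F_-(C^{-1}t^{-1}r)\,g(\tilde{p}_{||})^2$ and $B^\pm(t)^*B^\pm(t)$, exploiting that in the cone around $\omega_l$ one has $\tfrac{1}{2}\tilde{p}_{||}^2+V(\omega_l)\approx H$ asymptotically, so that the supports of $g$ (in $(-\infty,0)$), of $1-\tilde{g}_2$, and of $g_1^\pm$ are disjoint from the classically allowed interval for $\tilde{p}_{||}$. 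The main obstacle is \eqref{esa111}: one must carefully commute $\chi_l(\hat{x})$ through the kinetic energy (producing second-order angular derivatives), translate between the $(u,\eta)$-coordinates of Section~\ref{sec:1} and the original $(x,\xi)$, and close the estimate using the Morse structure of $V$ near $\omega_l$.
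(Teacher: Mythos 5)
Your overall scheme (a chain of propagation observables, closing each estimate with the previous ones, with the strict Mourre estimate on $I_0\setminus(\sigma_{pp}(H)\cup V(C_r))$ as the engine) is the same as the paper's, and your observables for \eqref{esa3}, \eqref{esa5}, \eqref{esa1}, \eqref{esa333} and \eqref{esa3333} match the paper's (or differ only cosmetically; the paper simply cites [H] and [ACH, Theorem C.1] for the first three rather than reproducing the argument). For \eqref{esa4}--\eqref{aaar}, the paper uses the propagation observable $\tilde f(H)\tilde g_2(\tilde p_{||})F(t^{-1}r)\tilde g_2(\tilde p_{||})\tilde f(H)$ with the cumulative antiderivative $F(s')=\int_{-\infty}^{s'}(g_1^\pm(s))^2\,ds$; the positive square $(g_1^\pm(t^{-1}r))^2$ then comes out of $\partial_t F(t^{-1}r)$ directly, avoiding the need to control the time-derivative of a time-dependent $B^\pm(t)^*B^\pm(t)$. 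Your formulation is vaguer there but the same in spirit.

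The genuine problem is your treatment of \eqref{esa111}. You propose $\Phi_3=\tilde\chi_l\,r\,\tilde\chi_l$ and assert that the principal positive term of $\mathbf D\Phi_3$ is the angular form $\tilde\chi_l r^{-1/2}(\eta^2+u^2)r^{-1/2}\tilde\chi_l$. This is not correct: since $\mathbf D r=p_{||}$, the leading term of $\mathbf D\Phi_3$ is $\tilde\chi_l\,p_{||}\,\tilde\chi_l$ (radial momentum), not the angular kinetic energy; the angular form only appears after one more Heisenberg derivative, i.e.\ from $\mathbf D p_{||}=p\cdot r^{(2)}p+O(r^{-3})$, which is precisely \eqref{esa5}. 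Moreover $\Phi_3$ grows linearly along $\psi(t)$, so the telescoping argument does not yield an integrable positive term from it. The paper obtains \eqref{esa111} by Taylor expansion from \eqref{esa5} and \eqref{esa1}: on the support of $\tilde\chi_l$ one has (for $r>1$) $\nabla\tilde V(x)=r^{-1}(I-\hat x\hat x^T)(\nabla V)(\hat x)$ with $(\nabla V)(\hat x)=V^{(2)}(\omega_l)u+O(u^2)$, so Morse non-degeneracy of $V^{(2)}(\omega_l)$ gives $r\,|\nabla\tilde V|^2\gtrsim r^{-1}u^2$; likewise $p\cdot r^{(2)}p=r^{-1/2}|p-(p\cdot\hat x)\hat x|^2 r^{-1/2}$ controls $r^{-1}\eta^2$ up to $O(u^2)$ corrections. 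Thus \eqref{esa111} is a routine consequence of \eqref{esa5} and \eqref{esa1}, not a fresh propagation estimate, and the observable you wrote down would have to be replaced by this Taylor-expansion argument for the proof to close.
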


\begin{proof}
  For (\ref{esa3}), (\ref{esa5}) and (\ref{esa1}) we refer to [H] and
  [ACH, Theorem C.1]. The bound (\ref{esa111}) follows from those
  estimates by Taylor expansion.\par

\par As for (\ref{esa333}) we consider the {``}propagation
observable{''} 
\begin{equation*}
\Phi {\left(t\right)}=f{\left(
H\right)}F_{-}{\left(C^{-1}
t^{-1}r\right)}f{\left(H\right)}.
\end{equation*}
We may bound its Heisenberg derivative as 
\begin{equation*}
\mathbf{D}\Phi {\left(t\right)}\geq -
\epsilon t^{-1}f{\left(H\right)}F^{
\prime}_{-}{\left(C^{-1}t^{-1
}r\right)}f{\left(H\right)}
+O{\left(t^{-2}\right)};\quad 
\epsilon >0.
\end{equation*}

\par As for (\ref{esa3333}) we consider the observable
\begin{equation*}
\Phi {\left(t\right)}=\tilde{f}
{\left(H\right)}g{\left(\tilde{p}_{
||}\right)}t^{-1}rF_{-
}{\left(C^{-1}t^{-1}r\right)}
g{\left(\tilde{p}_{||
}\right)}\tilde{f}{\left(
H\right)}.
\end{equation*}
We write its Heisenberg derivative as 
\begin{align*}
  &\mathbf{D}\Phi {\left(t\right)} =T_{1}+T_{2}+T_{3};
  \\
  &T_{1}=\tilde{f} {\left(H\right)}{\left(\mathbf{D}g{\left(
          \tilde{p}_{||}\right)} \right)}t^{-1}rF_{-}{\left(
      C^{-1}t^{-1}r\right)}g{\left( \tilde{p}_{||}\right)}
  \tilde{f}{\left(H\right)} +h.c.,
  \\
  &T_{2}=2^{-1}\tilde{f} {\left(H\right)}g{\left(\tilde{p}_{
        ||}\right)}t^{-1}r{\left( \mathbf{D} F_{-}{\left(C^{-1}t^{
            -1}r\right)}\right)}g{\left( \tilde{p}_{||}\right)}
  \tilde{f}{\left(H\right)} +h.c.,
  \\
  &T_{3}=2^{-1}\tilde{f} {\left(H\right)}g{\left(\tilde{p}_{
        ||}\right)}{\left(\mathbf{D}{\left(t^{-1}r\right)}\right)}
  F_{-}{\left(C^{-1}t^{-1}r\right)} g{\left(\tilde{p}_{||
      }\right)}\tilde{f}{\left( H\right)}+h.c.,
\end{align*}
and notice the identities 
\begin{equation}
\label{x2} 
\mathbf{D}r=p_{||},\quad
\mathbf{D}p_{||}=p\cdot r^{
    {\left(2\right)}}p+O{\left(r^{ -3}\right)}.
\end{equation}

\par Using (\ref{esa3}), (\ref{esa5}), the second identity
of (\ref{x2}) and (\ref{83a}) we readily obtain after symmetrization
that
\begin{equation}
\label{x33} 
 \int _{1}^{\infty }|{\left\langle T_{1}\right\rangle }_{t}|dt<\infty.
\end{equation}

\par As for the the term 
$ T_{2}$ we use the first identity of (\ref{x2}) and (\ref{esa333}) to
derive 
\begin{equation}
\label{x3333} 
 \int _{1}^{\infty }|{\left\langle T_{2}\right\rangle }_{t}|dt<\infty
.
\end{equation}

\par For the term 
$ T_{3}$ we compute using the first identity of (\ref{x2}) and
(\ref{83a})

\begin{align}\label{x44} 
&  T_{3}=\re {\left(t^{ -1}\tilde{f}{\left(H\right)}
      g{\left(\tilde{p}_{|| }\right)}{\left(p_{||}
          -t^{-1}r\right)}F_{-}{\left( C^{-1}t^{-1}r\right)}g{\left(
          \tilde{p}_{||}\right)} \tilde{f}{\left(H\right)}
    \right)}+O{\left(t^{-2}\right)}\nonumber
  \\
  & \leq -\epsilon
  t^{-1}\tilde{f}{\left( H\right)}g{\left(\tilde{p}_{
        ||}\right)}F_{-}{\left( C^{-1}t^{-1}r\right)}g{\left(
      \tilde{p}_{||}\right)} \tilde{f}{\left(H\right)}
  +O{\left(t^{-2}\right)};\epsilon > 0.
\end{align}

\par We conclude (\ref{esa3333}) from (\ref{x33}), (\ref{x3333}) and
(\ref{x44}).\par
 
\par The bound (\ref{esa999}) follows from elementary energy
bounds, Taylor expansion and the previous estimates. (For this we need
(\ref{esa3333}) to deal with the {``}region{''} where $ p^{2}_{||}$
energetically has the right size, but $ p_{||}<0$.)\par

\par As for (\ref{esa4}) we consider 
\begin{equation*}
\Phi {\left(t\right)}=\tilde{f}
{\left(H\right)}\tilde{g}_{
2}{\left(\tilde{p}_{|
|}\right)}F{\left(t^{-1}
r\right)}\tilde{g}_{2}
{\left(\tilde{p}_{||
}\right)}\tilde{f}{\left(
H\right)};\quad F{\left(s^{\prime
}\right)}=\int _{-\infty }^{
s^{\prime}}g^{-}_{1}{\left(
s\right)}^{2}ds.
\end{equation*}
We write its Heisenberg derivative as 
\begin{align*}
  &\mathbf{D}\Phi {\left(t\right)} =T_{1}+T_{2};
  \\
  &T_{1}=\tilde{f}
  {\left(H\right)}{\left(\mathbf{D}\tilde{g}_{2}{\left(\tilde{p}_{
            ||}\right)}\right)}F {\left(t^{-1}r\right)}\tilde{g}_{
    2}{\left(\tilde{p}_{| |}\right)}\tilde{f} {\left(H\right)}+h.c.,
  \\
  & T_{2}=\tilde{f} {\left(H\right)}\tilde{g}_{ 2}{\left(\tilde{p}_{|
        |}\right)}{\left(\mathbf{D} F{\left(t^{-1}r\right)}\right)}
  \tilde{g}_{2}{\left(\tilde{p}_{||}\right)}
  \tilde{f}{\left(H\right)}.
\end{align*}
\par Using (\ref{esa3}), (\ref{esa5}), the second identity
of (\ref{x2}) and (\ref{83a}) as for (\ref{esa3333}) we obtain that
\begin{equation}
\label{x3} 
 \int _{1}^{\infty }|{\left\langle T_{1}\right\rangle }_{t}|dt<\infty
.
\end{equation}
\par As for the the term 
$ T_{2}$ we compute using the first identity of (\ref{x2}) and
(\ref{83a}) 
\begin{equation}
\label{x4} 
\begin{split}
  & T_{2}=t^{-1}\tilde{f} {\left(H\right)}B^{-}{\left(
      t\right)}^{*}{\left(p_{|| }-t^{-1}r\right)}B^{-}{\left(
      t\right)}\tilde{f}{\left( H\right)}+O{\left(t^{-2}\right)}
  \\
  &\geq t^{-1}B^{-}{\left( t\right)}^{*}{\left(\tilde{p}_{
        ||}1_{[\tilde{c}_{ -},\infty )}{\left(\tilde{p}_{
            ||}\right)}-c^{-}_{+} \tilde{f}{\left(H\right)}^{
        2}\right)}B^{-}{\left(t\right)} +O{\left(t^{-2}\right)}
  \\
  &\geq \epsilon t^{-1}B^{-}{\left(t\right)}^{
    *}B^{-}{\left(t\right)}+O{\left( t^{-2}\right)};\quad \epsilon =
  \tilde{c}_{-}-c^{-}_{ +}.
  \end{split}
\end{equation}

\par Clearly (\ref{esa4}) follows by combining (\ref{x3})
and (\ref{x4}).\par

\par As for (\ref{aaar}) we may proceed similarly using
\begin{equation*}
\Phi {\left(t\right)}=\tilde{f}
{\left(H\right)}\tilde{g}_{
2}{\left(\tilde{p}_{|
|}\right)}F{\left(t^{-1}
r\right)}\tilde{g}_{2}
{\left(\tilde{p}_{||
}\right)}\tilde{f}{\left(
H\right)};\quad F{\left(s^{\prime
}\right)}=\int _{-\infty }^{
s^{\prime}}g^{+}_{1}{\left(
s\right)}^{2}ds.
\end{equation*}
\end{proof}

\begin{corollary}\label{cor:8.2}
  Let $ \psi $, $ \chi _{l}\in C^{\infty }_{0}{\left(
      \mathcal{C}_{\epsilon }\right)} $ and $ \tilde{g}_{2}$ be given
  as in Lemma~\ref{lem:8.1}. Let $ g_{1}\in C^{\infty }_{0}{\left(
      \mathbf{R}\right)}$ be given such that $ 0\leq g_{1}\leq 1$ and
  $ g_{1}=1$ in an open interval containing  $ \supp
{\left(\tilde{g}_{ 2}\right)}$.
 Then 
 \begin{equation}
\label{olee} 
 ||\psi {\left(t\right)}-g_{ 1}{\left(t^{-1}r\right)}
    \tilde{g}_{2}{\left(\tilde{p}_{ ||}\right)}\chi _{l}{\left( 
    \hat{x}\right)}\tilde{f} {\left(H\right)}\psi {\left(t\right)}
||\rightarrow 0\text{ for }t\rightarrow \infty .
\end{equation}
\end{corollary}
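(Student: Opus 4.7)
The plan is to verify \eqref{olee} by peeling off the four factors $\tilde{f}(H)$, $\chi_l(\hat{x})$, $\tilde{g}_2(\tilde{p}_{||})$ and $g_1(t^{-1}r)$ from the right, checking at each stage that the factor acts asymptotically as the identity on the partial state reached so far. The first two reductions are essentially free: $\tilde{f}(H)\psi(t)=\psi(t)$ because $\tilde{f}f=f$, and $\|(I-\chi_l(\hat{x}))\psi(t)\|\to 0$ because $\psi=P_l f(H)\psi$, $\chi_l\equiv 1$ near $\omega_l$, and $\mathcal{C}_\epsilon$ contains no other critical point of $V|S^{n-1}$. Here one invokes the resolution $I = \sum_{l'} P_{l'}$ on $\Ran E_{ac}(H)$ (i.e. \eqref{eq:12} amended via [ACH, Theorem~C.1]) together with the standard observation that $s\text{-}\lim_t e^{itH}\chi(\hat{x})e^{-itH}E_{ac}(H)$ depends only on the values of $\chi$ at the points of $C_r$, so that the cutoff $\chi_l$ of the present corollary is interchangeable with the one used to define $P_l$.

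The momentum and radial cutoffs are both handled via a common schema: convert an integral bound of the form $\int_1^\infty t^{-1}\|A(t)\psi(t)\|^2\,dt<\infty$ (which the various parts of Lemma~\ref{lem:8.1} provide) into the pointwise decay $\|A(t)\psi(t)\|\to 0$. This upgrade is carried out by computing the Heisenberg derivative of $\|A(t)\psi(t)\|^2$ using Lemma~\ref{lem:3.5}(A) and the identities \eqref{x2}, showing via \eqref{esa3}, \eqref{esa5} and \eqref{esa1} that this derivative is absolutely integrable on $[1,\infty)$, and then concluding that $\|A(t)\psi(t)\|^2$ has a finite limit which must vanish since $\int^{\infty}\!dt/t = \infty$.

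For the momentum cutoff, apply the schema to $A(t)=(I-\tilde{g}_2(\tilde{p}_{||}))\chi_l(\hat{x})$. The initial integral estimate comes from \eqref{esa999}, after one removes the auxiliary cutoff $F_-(C^{-1}t^{-1}r)\tilde\chi_l$: the localization $\tilde{\chi}_l\psi(t)\approx\psi(t)$ comes from the angular step combined with the compact support in $x$ of $\chi_l(\hat{x})(I-F_+(r))$, while the complementary region $t^{-1}r\geq C$ is discarded using \eqref{aaar} (or, equivalently, the maximal velocity information encoded in \eqref{esa333}). For the radial cutoff, the hypothesis $g_1\equiv 1$ on an open interval containing $\supp\tilde{g}_2$ allows us to write $1-g_1=g_1^-+g_1^+$ with $\sup(\supp g_1^-)<\tilde{c}_-$ and $\inf(\supp g_1^+)>\tilde{c}_+$, so that \eqref{esa4} and \eqref{aaar} apply verbatim to yield $\int_1^\infty t^{-1}\|(I-g_1(t^{-1}r))\tilde{g}_2(\tilde{p}_{||})\psi(t)\|^2\,dt<\infty$; the same integral-to-pointwise upgrade then concludes the proof.

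\textbf{Main obstacle.} The hard part is the integral-to-pointwise upgrade: one must verify that the Heisenberg derivative of each quantity $\|A(t)\psi(t)\|^2$ is absolutely integrable, which requires careful bookkeeping of commutators of $\tilde{g}_2(\tilde{p}_{||})$, $g_1(t^{-1}r)$ and $\chi_l(\hat{x})$ with $H$, using both parts of Lemma~\ref{lem:3.5} together with the entire arsenal \eqref{esa3}--\eqref{aaar}. The identities $\mathbf{D}r=p_{||}$ and $\mathbf{D}p_{||}=p\cdot r^{(2)}p + O(r^{-3})$ of \eqref{x2} are precisely what allow the extra $t^{-1}$-factor produced when differentiating $g_1(t^{-1}r)$ to be absorbed into integrable error terms via the virial and dispersive bounds \eqref{esa3}, \eqref{esa5}, \eqref{esa111}, which is the reason Lemma~\ref{lem:8.1} assembles such a comprehensive list of propagation estimates in advance.
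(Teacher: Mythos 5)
Your peeling-off strategy matches the paper's (first remove $\tilde{f}(H)$ and $\chi_l(\hat{x})$, then the momentum cutoff, then the radial cutoff), and your ``integral to pointwise'' schema --- finite absolute variation of $\langle A(t)\rangle_t$ plus $\int t^{-1}\langle A(t)\rangle_t\,dt<\infty$ forces the limit to be zero --- is exactly the mechanism the paper uses to derive \eqref{OOl}. But you diverge from the paper on the two nontrivial cutoffs, and one of your steps has a genuine gap.

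For the momentum cutoff the paper simply cites [H, Theorems 4.10 and 4.12] to obtain \eqref{oleee}; you instead propose to derive it from \eqref{esa999} via the upgrade schema. This is a plausible alternative, though you would still have to quote minimal/maximal velocity information from [H] or [ACH] (and not just Lemma~\ref{lem:8.1}) to remove the factors $F_-(C^{-1}t^{-1}r)$ and $F_+(r)$ hidden in $\tilde\chi_l$, so it is not obviously more self-contained than the paper's citation. For the radial cutoff the paper does something structurally different: using $g_1\equiv 1$ on $\supp\tilde{g}_2$ it rewrites the problem as $\|\{g_1(t^{-1}r)-g_1(\tilde{p}_{||})\}\tilde{g}_2(\tilde{p}_{||})\tilde f(H)\psi(t)\|\to 0$, restricts to $\{t^{-1}r\le C\}$ by a density argument, and then reduces via Taylor expansion and \eqref{82a} to showing $\langle\Phi_C(t)\rangle_t\to 0$ for $\Phi_C$ built on $(\tilde{p}_{||}-t^{-1}r)^2$. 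That is, the paper proves the \emph{asymptotic velocity} statement $t^{-1}r\approx\tilde{p}_{||}$ in the localized state, and $g_1(t^{-1}r)\approx g_1(\tilde{p}_{||})=1$ is then automatic. This is a more robust observable than $(1-g_1(t^{-1}r))^2$ because its Heisenberg derivative, via \eqref{x2}, produces the manifestly sign-definite term $-2t^{-1}(\tilde{p}_{||}-t^{-1}r)^2$.

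The concrete gap: you claim to write $1-g_1=g_1^-+g_1^+$ with $g_1^\pm\in C_0^\infty(\mathbf{R})$, $\sup(\supp g_1^-)<\tilde c_-$ and $\inf(\supp g_1^+)>\tilde c_+$, so that \eqref{esa4} and \eqref{aaar} apply verbatim. But $1-g_1$ is identically $1$ outside $\supp g_1$, so \emph{neither} piece of such a decomposition can be compactly supported, and \eqref{esa4}/\eqref{aaar} --- which require $g_1^\pm\in C_0^\infty(\mathbf{R})$ --- do not apply as stated. To repair this you must first restrict to $\{t^{-1}r\le C\}$ (for the far tail) and invoke minimal velocity / absolute continuity (for the near tail $t^{-1}r\ll 1$), which is exactly the ``standard density argument'' the paper inserts before introducing \eqref{oleeee}; neither restriction is supplied by Lemma~\ref{lem:8.1} alone. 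You gesture at \eqref{esa333} and maximal velocity for the momentum step, but you never apply the corresponding reduction for the radial step, which is where it is actually needed.
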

\begin{proof}
  From the very definition of $ \psi $ we have 
  \begin{equation*}
||\psi {\left(t\right)}-
\chi _{l}{\left(\hat{x}
\right)}\tilde{f}{\left(
H\right)}\psi {\left(t\right)}
||\rightarrow 0\text{  for }t\rightarrow \infty .
\end{equation*}
Next, from [H, Theorems 4.10 and 4.12] we learn that
\begin{equation}
\label{oleee} 
 ||\psi {\left(t\right)}-\tilde{g}_{2}{\left(\tilde{p}_{
      ||}\right)}\chi _{l}{\left( \hat{x}\right)}\tilde{f}
{\left(H\right)}\psi {\left(t\right)} ||\rightarrow 0 
\text{ for }t\rightarrow \infty .
\end{equation}
Whence to show (\ref{olee}) it suffices to verify that
\begin{equation*}
||{\left\{g_{1}{\left(t^{
-1}r\right)}-g_{1}{\left(\tilde{p}_{||}\right)}
\right\}}\tilde{g}_{2}
{\left(\tilde{p}_{||
}\right)}\tilde{f}{\left(
H\right)}\psi {\left(t\right)}
||\rightarrow 0\text{  for }t\rightarrow \infty ,
\end{equation*}
which in turn is reduced (by a standard density argument using that
the energy bounds the momentum) to verifying that for all 
constants $C$ large enough
\begin{multline}
\label{oleeee} 
 ||F_{-}{\left(C^{-1}t^{-1 }r\right)}{\left\{g_{1}{\left(
        t^{-1}r\right)}-g_{1}{\left( \tilde{p}_{||}\right)}
  \right\}}\tilde{g}_{2} {\left(\tilde{p}_{||
    }\right)}\tilde{f}{\left( H\right)}\psi {\left(t\right)}
||\\\rightarrow 0\;\hbox {\rm for }t\rightarrow \infty .
\end{multline}

\par For (\ref{oleeee}) we consider the observable 
\begin{multline*}
\Phi _{C}{\left(t\right)}\\=
\tilde{f}{\left(H\right)}
\tilde{g}_{2}{\left(\tilde{p}_{||}\right)}
F_{-}{\left(C^{-1}t^{-1}r\right)}
{\left(\tilde{p}_{||
}-t^{-1}r\right)}^{2}F_{
-}{\left(C^{-1}t^{-1}r\right)}
\tilde{g}_{2}{\left(\tilde{p}_{||}\right)}
\tilde{f}{\left(H\right)}.
\end{multline*}

Using Lemma~\ref{lem:8.1} as well as the proof of this lemma we easily
show that 
\begin{equation*}
\left|\int _{1}^{\infty }
|{{d}\over{dt}}{\left\langle \Phi _{
C}{\left(t\right)}\right\rangle }_{
t}\right|dt,\quad \int 
_{1}^{\infty }t^{-1}{\left\langle \Phi _{
C}{\left(t\right)}\right\rangle }_{
t}dt<\infty , 
\end{equation*}
from which we conclude that along some sequence $ t_{k}\rightarrow
\infty $ indeed $ {\left\langle \Phi _{C}{\left(t_{k}
      \right)}\right\rangle }_{t_{k}} \rightarrow 0$, and then in turn
that 
\begin{equation}
\label{OOl} 
 {\left\langle \Phi _{C}{\left(t\right)} \right\rangle
}_{t}\rightarrow 0.
\end{equation}

\par We easily obtain (\ref{oleeee}) using (\ref{OOl}),
(\ref{82a}) and commutation. 
\end{proof}

\par Now, one may easily verify (\ref{olew}) for 
$ \psi =P_{l}f{\left(H\right)}\psi $ as follows: We introduce a
partition $ f=\sum f_{i}$ of sharply localized $ f_{i}${`}s and for
each of these a {``}slightly larger{''} $ \tilde{f}_{i}$. Using these
functions and the states $ \psi _{i}=P_{l}f_{i}{\left( H\right)}\psi $
as input in Corollary~\ref{cor:8.2} the bounds \eqref{eq:8} follow
from the conclusion of the corollary and [H, Theorems 4.10 and 4.12].
As for \eqref{eq:9} we may use the same partition and then conclude
the result from Lemma~\ref{lem:8.1} (applied with $ \tilde{f}$
replaced by $ \tilde{f}_{i}$ ). \par

\begin{remarks}\label{rem:8.3}
\hfill
  \begin{enumerate}
  \item Using the Mourre estimate [ACH, Theorem C.1] one may easily
    include a short-range perturbation $ V_{1}=O{\left(|x|^{ -1-\delta
        }\right)}$, $\delta >0$, $\partial ^{\alpha }_{
      x}V_{1}=O{\left(|x|^{-2} \right)}$, $|\alpha |=2$, to the
    Hamiltonian $H$. In particular Theorem~\ref{thm:1.2} holds for the
    strictly homogeneous case as discussed in Section~\ref{sec:1}.

  \item The non-degeneracy condition at $ \omega_{l}$ is important for
    the method of proof presented in this paper. However it is not
    important that the set of critical points $ C_{r}$ is finite; it
    suffices that $ \omega_{l}$ is an isolated non-degenerate critical
    point and that $ V{\left(C_{r}\right)}$ is countable.\par

  \item At a local maximum we proved a somewhat better result in [HS1]
    (by a different method): A larger class of perturbations was
    included and we imposed a somewhat weaker condition than the
    non-degeneracy condition. The method of [HS1] yielded only a
    limited result at saddle points. Although there are indications
    that this method of proof might be extended to included
    Theorem~\ref{thm:1.2} (by using a certain complicated iteration
    scheme) the proof presented in this paper is probably much
    simpler.\par

  \item The components of the $ \gamma $ of \eqref{eq:19} may be taken
    of the form 
    \begin{equation*}
      \gamma _{j}=\eta _{j}+\sqrt{2{\left(
            E-V{\left(\omega_{l}\right)}\right)}
      } \beta ^{\#}_{j}{\left(
          E\right)}u_{j,}
    \end{equation*}
    where $ \beta ^{\#}_{j}{\left(E\right)} $ is given by one of the
    expressions of (\ref{AX}). In particular both of the conditions
    (\ref{b13}) and (\ref{bb13}) are satisfied in the potential
    case.\par

  \item We applied the Sternberg linearization procedure in [HS3] to
    the equations \eqref{eq:6} in the case of a local minimum. In this
    case the union of all resonances (of all orders and for all
    eigenvalues) is discrete on ${\left(V{\left(\omega_{l}\right)},
        \infty \right)}$. One needs to exclude this set of resonances
    to construct a smooth Sternberg diffeomorphism, see for example
    [N, Theorem 9]. The construction of the symbol $ \gamma
    ^{{\left(m\right)}}$ in (\ref{BB21}) may be viewed as a rudiment
    of this procedure. However, the union of all resonances at a local
    maximum or a saddle point $ \omega_{l}$ is dense in $
    {\left(V{\left(\omega_{l}\right)},\infty \right)}$, and for that
    reason the smooth Sternberg diffeomorphism (defined at
    non-resonance energies) would not be suited for quantization.
    Although not elaborated, one may essentially view $ \gamma
    ^{{\left(m\right)}}$ as being constructed by a $ C^{m}$ Sternberg
    diffeomorphism.\par
\end{enumerate}
\end{remarks}

\appendix
 
\section{A generalization of the homogeneity 
condition}
\label{sec:generalization}

\par In this appendix we shall discuss possible generalizations of the homogeneity
condition \eqref{eq:234}. We 
elaborate on the structure of the classical mechanics of our
models.  A  possible formulation of the quantum problem will be
proposed although not justified in general. It will be discussed for
various examples.\par

\par The homogeneity condition is best understood as the
invariance of the Hamiltonian under the flow generated by the vector
field 
$v{\left(x,\xi\right)}=\sum 
x_{j}\partial /\partial x_{j}$, or infinitesimally 
\begin{equation}
vh{\left(x,\xi\right)}=0.
\label{eq:15}
\end{equation}
Our goal is thus to find invariance conditions \eqref{eq:15} which
will
\begin{enumerate}
\renewcommand{\theenumi}{\alph{enumi}}
\item  reduce the dimension of phase space by two giving an auto\-no\-mous
dynamical system in dimension $2n-2$ (usually not Hamiltonian)

\item give a natural framework for discussing stability of orbits
which do not lie in a compact set. It will turn out that stability is not
measured using any preexisting metric in the phase space but rather using
bundles of orbits of the vector field $v$ surrounding a given orbit of the
Hamiltonian vector field, $ v_{h}$.
\end{enumerate}

\par The particular vector field 
$ v{\left(x,\xi\right)}=\sum x_{j}\partial _{x_{j}}$ does not generate
a symplectic flow but does satisfy a crucial property. Namely $
\mathcal{L}_{v}\omega=\omega$ where $ \mathcal{L}_{v}$ is the Lie derivative
in direction $v$ and $ \omega$ is the symplectic form. It will turn out (see
Lemma~\ref{thm:1.3}) that a geometric condition such as this, although more
restrictive than necessary, will guarantee that $v$ is a suitable vector
field.\par

\par We will require $v$ to satisfy certain conditions
relative to $v_{h}$, where $ v_{h}$ is a Hamiltonian vector field on a
symplectic manifold $ {\left(M,\omega\right)}$ with Hamiltonian $h$:\par
\begin{enumerate}
\item 
In a neighborhood $\mathcal U_0$ of a point 
$ x_{0}\in M$, the local flow $ \phi ^{v}_{t}{\left(\cdot \right)} $
generated by $v$ exists for all $ t\in {\left(-\epsilon ,\infty \right)}$ for
some $ \epsilon >0$ and there exists a surface $ S\subset \mathcal U_0$ containing $ x_{0}$,
transverse to $v$, and a diffeomorphism $ \sigma:B\rightarrow S$,
where $B$ is a ball in $ \mathbf{R}^{2n-1}$ centered at $0$, such that the map
\begin{equation*}
B\times {\left(-\epsilon ,\infty \right)}
\ni {\left(w,t\right)}\rightarrow \phi ^{%
v}_{t}{\left(\sigma {\left(w\right)}
\right)}
\end{equation*}
is a diffeomorphism onto its image, $ \mathcal{K}_{0}\supseteq \mathcal U_0$. We also assume $v$ and $
v_{h}$ are parallel (and nonzero) along the positive orbit of $v$ originating
at $ x_{0}$ (identified as $0\in B$).

\item There are smooth functions $ \beta $ and
$ \gamma $ such that 
\begin{equation*}
{{\left[v,v_{h}\right]}=\beta 
v_{h}+\gamma v\text{ in }\mathcal{K}_{0}.
}\end{equation*} 

\item
$ vh=0\text{ in }\mathcal{K}_{0}$. 
\end{enumerate}

\par Condition (1) allows us to assume (after a change of coordinates)
that $ \mathcal{K}_{0}=B\times {\left(-\epsilon ,\infty \right)}$,
$x_{0}={\left(0,0\right)}$, and $ v={\left(0,\dots ,0,1\right)}$ in $\mathcal{K}_{0}$. With the notation $x_{\bot }={\left(x_{1},\dots
    ,x_{2n-1}\right)}$ for $ x\in \mathbf{R}^{2n}$, condition (2) implies
\begin{equation*}
{\left(v_{h}\right)}_{
\bot }{\left(x\right)}=k{\left(x\right)}
{\left(v_{h}\right)}_{\bot }
{\left(x_{\bot },0\right)} 
\end{equation*}
 where $k(x)=\exp \big (\int^{x_{2n}}_0 \beta\circ
 \phi^v_s (x_\perp,0)ds\big )$  so that introducing the new time
variable $ \tau $ with $ d\tau /dt=k{\left(x{\left(t\right)} \right)}$ the
first $2n-1$ of Hamilton{'}s equations become 
\begin{equation*}
{{{dx_{\bot }}\over{d\tau }}={\left(
v_{h}\right)}_{\bot }{\left(
x_{\bot },0\right)}}.
\end{equation*}
As long as $dh{\left(x_{0}\right)}\not=0$, using condition (3) we can
eliminate one more variable using energy conservation,
$h{\left(x\right)}=h{\left(x_{\bot },0\right)}=E$. For example if $ \partial
h/\partial x_{2n-1}\not=0$ we obtain $ x_{2n-1}=g{\left(w,E\right)}$ with $
w={\left(x_{1},\dots ,x_{2n-2}\right)} $. Here we assume $
{\left(w,E\right)}$ is in a neighborhood of $ {\left(0,E_{0}\right)}$, $
E_{0}=h{\left(x_{0}\right)} =h{\left(0\right)}$. We obtain
\begin{equation}
{{dw}\over{d\tau }}=f{\left(w,E\right)} ,
\label{eq:16}
\end{equation}
where $f(w,E) = ((v_h)_1(w,g(w,E),0),\dots,(v_h)_{2n-2}(w,g(w,E),0))$.  The orbit of
$v_{h}$ along $v$ corresponds to $ w=0,\quad E=E_{0}$ (in which case
$f{\left(0,E_{0}\right)}=0$ ). If $\det {\left(\partial f_{i}/\partial
    w_{j}{\left(0,E_{0}\right)}\right)} \not=0$ there will be a smooth family
of fixed points of \eqref{eq:16}, $ w=w{\left(E\right)}$, in a
neighborhood of $ E_{0}$ (with $ w{\left(E_{0}\right)}=0$). This situation is
 analogous to the case $ v{\left(x,\xi\right)}=\sum x_{j}\partial _{x_{j}}$ discussed in Section \ref{sec:1}  and we can define stability of orbits in $ M$
in terms of the stability of the fixed points $ w{\left(E\right)}$. In
practice one might want to place the fixed point of \eqref{eq:16} at
the origin by an affine 
change of  variables, cf. Section \ref{sec:1}. In any
case one may check that for the model studied in Section \ref{sec:1}
indeed the systems \eqref{eq:6} and \eqref{eq:16} are smoothly
equivalent systems (up to a conformal factor). Notice that in this  case we may choose
$S\subset S^{n-1}\times \mathbf{R}^{n}$, for example. \par

\par If a proof of absence of channels is contemplated along the lines
carried out in this paper, it is necessary that low order
 resonances  do not occur at more than a discrete set of
energies. In particular, the equations \eqref{eq:16} should not
have a Hamiltonian structure. \par 

\par The only place where the Hamiltonian nature of the equations
appeared above was where we used conservation of energy. To bring in the
symplectic form $ \omega $ we introduce a more geometric condition
which turns out to imply condition (2) above (see Remark \ref
{rem:sympl} for an interpretation):
\par 

\begin{lemma}\label{thm:1.3}
 Fix an
 open set $ U\subseteq M$. 
 \begin{enumerate}
   \renewcommand{\theenumi}{\alph{enumi}}
 \item \label{it:clas1}Suppose $ \mathcal{L}_{v}\omega=\alpha \omega$
   in $ U$ for some $\alpha\in C^{\infty}(U)$.  Suppose in addition that $ vh=0$ in $U$. Then $
{\left[v,v_{h}\right]}=-\alpha v_{%
  h}$ in $U$.

\item \label{it:clas2}Suppose $v$ is nonzero in
  $U$ and for any smooth function $h$ on $U$ satisfying $ vh=0$ in a
  neighborhood of a point of $U$, $v$ satisfies $
{\left[v,v_{h}\right]}=-\alpha v_{%
  h}$ in this neighborhood. Then $ \mathcal{L}_{v}\omega=\alpha \omega$ in
$U$.

\end{enumerate}
\end{lemma}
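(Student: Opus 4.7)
The plan is to route both implications through the Cartan-type identity
\begin{equation*}
\iota_{[X,Y]}\omega \;=\; \mathcal{L}_X\!\left(\iota_Y\omega\right) - \iota_Y\!\left(\mathcal{L}_X\omega\right),
\end{equation*}
applied with $X = v$ and $Y = v_h$. Using the defining relation $\iota_{v_h}\omega = dh$ and $\mathcal{L}_v(dh) = d(vh)$, this specializes to
\begin{equation*}
\iota_{[v,v_h]}\omega \;=\; d(vh) - \iota_{v_h}\!\left(\mathcal{L}_v\omega\right).
\end{equation*}
Every further step reduces to algebra in this single formula plus non-degeneracy of $\omega$.

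For part (\ref{it:clas1}), the hypotheses $vh=0$ and $\mathcal{L}_v\omega = \alpha\omega$ make the right-hand side equal to $-\alpha\,\iota_{v_h}\omega = \iota_{-\alpha v_h}\omega$, and non-degeneracy of $\omega$ yields $[v,v_h] = -\alpha v_h$ on $U$. For part (\ref{it:clas2}), substituting the assumed identities $vh=0$ and $[v,v_h]=-\alpha v_h$ into the same formula produces
\begin{equation*}
\iota_{v_h}\!\left(\mathcal{L}_v\omega - \alpha\omega\right) \;=\; 0
\end{equation*}
for every smooth $h$ with $vh = 0$ in a neighborhood of a chosen point $p \in U$.

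The remaining step is to upgrade this family of contraction identities into the pointwise vanishing of the $2$-form $\beta := \mathcal{L}_v\omega - \alpha\omega$ at $p$. Since $v(p)\neq 0$, choose local coordinates near $p$ with $v = \partial/\partial x_{2n}$; functions $h$ with $vh = 0$ in a neighborhood of $p$ are exactly those independent of $x_{2n}$, so $dh(p)$ ranges freely over the annihilator hyperplane of $v(p)$ in $T_p^*M$. Through the symplectic isomorphism $\omega\colon T_pM \to T_p^*M$, the corresponding vectors $v_h(p)$ sweep out a codimension-one subspace $\mathcal{H}\subset T_pM$, and the display above says $\iota_X\beta_p = 0$ for every $X \in \mathcal{H}$.

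The only part requiring any thought, and the one I expect to be the main (modest) obstacle, is the elementary linear-algebra fact that a $2$-form on a $2n$-dimensional vector space which is annihilated by contraction with every vector in a hyperplane $\mathcal{H}$ must vanish. I would verify this by picking a basis $e_1,\dots,e_{2n-1}$ of $\mathcal{H}$ together with $e_{2n}\notin \mathcal{H}$: the assumption forces $\beta_p(e_i,e_j)=0$ whenever $i\leq 2n-1$, and antisymmetry handles the remaining diagonal component $\beta_p(e_{2n},e_{2n})=0$. Since $p\in U$ is arbitrary, $\mathcal{L}_v\omega = \alpha\omega$ throughout $U$, completing part (\ref{it:clas2}).
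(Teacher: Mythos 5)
Your proof is correct and follows essentially the same route as the paper: both hinge on the commutator identity $[\mathcal{L}_v, i_{v_h}]\omega = i_{[v,v_h]}\omega$ together with $i_{v_h}\omega = dh$ and non-degeneracy of $\omega$. The only difference is that for part (b) the paper compresses the final step into the single remark that ``there are sufficiently many choices of $h$,'' whereas you spell this out via the coordinate choice $v = \partial/\partial x_{2n}$, the resulting hyperplane of admissible $v_h(p)$, and the elementary fact that a $2$-form annihilated by every vector in a hyperplane vanishes.
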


\begin{proof} We shall use the general relations $dh(w)=\omega(v_h,w)$, $[\mathcal{L}_v, i_w] = i_{[v,w]}$ and
$[\mathcal{L}_w, d] = 0$.  Here $i_w$ represents interior product with
 $w$ (see for example \cite[p. 84] {car} or \cite[p. 198]{a2}).

For  (\ref{it:clas1}) we compute in  $ U$
\begin{equation*}
  i_{[v,v_{h}]}\omega=[\mathcal{L}_v, i_{v_{h}}]
  \omega=\mathcal{L}_vdh-i_{v_{h}}\alpha \omega=d\mathcal{L}_vh-i_{\alpha v_{h}}\omega=i_{-\alpha v_{h}}\omega.
\end{equation*} Since $\omega$ is non-degenerate we conclude
  (\ref{it:clas1}).

As for (\ref{it:clas2}) we use the same computation to conclude that 
\begin{equation*}
i_{v_{h}}(-\mathcal{L}_v\omega+\alpha\omega)=0
\end{equation*} in open subsets where $ vh=0$. Since $v$ is nonzero
there are sufficiently many choices of $h$ to conclude from this
that indeed $\mathcal{L}_v\omega=\alpha\omega$.
\end{proof} 
%

\begin {remark}\label{rem:sympl} By integrating the
  condition of  Lemma \ref{thm:1.3}  (\ref {it:clas1}), $
\mathcal{L}_{v}\omega=\alpha \omega$,   we obtain 
\begin{equation}\label{ekssym222}
  (\phi^v_t)^*\omega=\exp\Big ({\int^t_0\alpha\circ \phi^v_{s}ds}\Big )\omega.
\end{equation} In particular  if $
\mathcal{L}_{v}\omega=\alpha \omega$ holds in $M$ and $\phi^v_t$ is a
global flow we see that the diffeomorphisms $\phi^v_t$ preserve the family of Lagrangian
manifolds. 

Conversely one may readily prove that if $\phi^v_t$ is a  global flow
and the diffeomorphisms $\phi^v_t$ preserve the family of Lagrangian
manifolds, then indeed  $
\mathcal{L}_{v}\omega=\alpha \omega$ for some smooth $\alpha$.
\end {remark}
 
We give two  simple examples.  

\begin{example} \label{exam:riema2} Consider the symbol $h$ on
  $\mathbf{R}^{2}\times \mathbf{R}^{2}$, suitably regularized at singularities,
  \begin{equation*}\label{ekssym}
 h=h(x,\xi)=\tfrac {1}{2}\big (x^2-a\xi _2^2\big )^{-1}\xi ^2;\; a >0.   
  \end{equation*}  Let $v(x,\xi)=\tfrac {1}{2} \sum (x_{j}\partial _{x_{j}}+\xi_{j}\partial _{\xi_{j}})$. Then the vector field  $v$
  and the Hamitonian vector field 
  $v_h$ fulfill the conditions (1)--(3) along the positive
  orbit of $v$ originating at  $(1+2E)^{-1/2}(1,0;\sqrt{2E},0)$,    
  $E>0$. Here we  take the $S$ in condition (1) to be a subset of the
  unit-sphere ${S}^3$. Notice also that $(\phi^v_t)^* \omega =
  \exp(t)\omega$, and therefore  $\mathcal{L}_v\omega=\omega$. 
 After linearizing the   reduced
  flow  
  \eqref{eq:16} we  find the   eigenvalues 
  \begin{equation*}
  -\sqrt{2E}\Big(1\pm \sqrt{1+4Ea}\Big ),  
  \end{equation*}
  and we conclude that the family of
fixed points consists of saddle
  points. Resonances
  (of any fixed  order) are discrete in $(0,\infty)$.
\end{example}

\begin{example} \label{exam:riema3} Consider the symbol $h$ on $(\mathbf{R}^{2}\setminus{\left\{
0\right\}})\times \mathbf{R}^{2}$
  \begin{equation*}
 h=h(x,\xi)=\tfrac
 {1}{2}\big(x_1^2+bx_2^2\big)^{\kappa/2}\xi^2;\;b>0,\,\kappa <2,\,\kappa (b-1)<0.   
  \end{equation*} We introduce $s=2/(2-
 \kappa)$ and $v=\sum (sx_j\partial _{x_j}+(1-s)\xi _j\partial _{\xi
 _j})$. The vector field  $v$
  and the Hamitonian vector field 
  $v_h$ fulfill the conditions (1)--(3) along the positive
  orbit of $v$ originating at  $(1,0;\sqrt{2E},0)$,    
  $E>0$. Here we  take  $S\subset \{(x,\xi)|\,x_1=1\}$. We notice that
 the condition $\kappa<2$ assures that the $x$--component of the
 flow $\phi^v_t$ grows as $t\to \infty$; whence there is no conflict
 with a regularization  at  $x=0$. (The fact that for $\kappa \in (0,2)$ the 
 $\xi $--component  decays is irrelevant.) 
 We  find the   eigenvalues  for the linearized reduced flow to be given by 
 \begin{equation*}
 -\tfrac
 {2-\kappa}{4}\sqrt {2E}\Big \{1\pm \sqrt{1-8\kappa(b-1)(2-\kappa)^{-2}}\Big \}.  
 \end{equation*} 
  Since by assumption $\kappa (b-1)<0$ we conclude that the family of
fixed points consists of saddle
  points. For a ``generic''  set of parameters $b$ and $\kappa$ there
  are no resonances (of any order).   
\end{example}

We shall propose a formulation of  the quantum problem
corresponding to  the
 classical framework discussed above, and then relate it to 
Examples \ref{exam:riema2} and \ref{exam:riema3}. 

Let us strengthen  the above
conditions (1)-(3) as follows: We assume that $\epsilon= \infty$ in
$(1)$ so that $\mathcal K_0$ is two-sided invariant under the flow
$\phi^v_\tau$, and furthermore that the condition $
\mathcal{L}_{v}\omega=\alpha \omega$ of Lemma \ref{thm:1.3} (\ref{it:clas1})
   holds in $ U=\mathcal K_0$ (implying (2) with $\beta=-\alpha$ and
   $\gamma=0$). Suppose also that $\alpha>0$.

Under these conditions we may write 
\begin{align*}
  &\phi^v_{\tau(t,E_0)}(x_0)=\phi^{v_h}_t(x_0);\\&{d\tau(t,E_0)\over dt}=\exp \Big (-\int^{\tau(t,E_0)}_0 \alpha\circ
 \phi^v_s (x_0)ds\Big )k(E_0),\\
&v_h(x_0)=k(E_0)v(x_0),\;\tau(0,E_0)=0.
\end{align*} Notice that any maximal solution to this differential equation
 is defined at least on a positive directed half-line
 (i.e. $\tau(t,E_0)$ exists for all large $t$'s). Denoting by $x(E)\in S$ the fixed points for neighboring
energies $E\approx E_0$ we have similar identities for the positive common
orbits originating
at $ x_0\to x(E)$. Whence we may look at localization of states in quantum
mechanics in terms of Weyl quantization of symbols of the form
$a(\phi^v_{-\tau(t,h)})$ where $a\in C^{\infty
  }_{0}{\left(\mathcal{U}_{0 }
    \right)}$. Notice that for the model studied in the bulk of this paper this procedure
  is a slight modification of the one used in 
  \eqref{eq:9} and \eqref{eq:10}. In fact in this case we may take
  $S\subset S^{n-1}\times \mathbf{R}^{n}$ and compute in terms of the
  function $k=k(E)$ of \eqref{eq:4}
  \begin{equation*}
    \tau=\ln (tk(E)+1)
  \end{equation*} yielding 
  \begin{equation*}
    \phi^v_{-\tau(t,h)}(x,\xi)=(x/(tk(h)+1),\xi);\;h=h(x,\xi).
  \end{equation*} We need in this
  setting to replace 
\begin{equation*}
\gamma(I_0)\to   \gamma(I_0)=\{x(E)=(\omega(E),\xi(E))|\;E\in
  I_0\}.   
  \end{equation*}
  There is also a way to interpret the first factor $t^{-1}$ of
  \eqref{eq:9}:
Using \eqref{ekssym222} we may compute the Poisson bracket
\begin{equation*}
  \{h,a(\phi^v_{-\tau(t,h)}(\cdot))\}=\exp \Big (\int^{-\tau(t,h)}_0  \alpha\circ
 \phi^v_s (\cdot)ds\Big )\{h,a\}(\phi^v_{-\tau(t,h)}(\cdot)),
\end{equation*} which indicates that the first factor to the right is a
 ``Planck constant'' (this interpretation is supported by
 the requirement $\alpha>0$). Effectively it is equal to $t^{-1}$ for this 
 example. Whence a possible  reformulation of the integral
 condition \eqref{eq:9} (suited for generalization) is 
  \begin{align}\label{eq:999}
\int _{1}^{\infty }&\left\|b^{w}_t{\left(x,p\right)}\psi {\left(t\right)}
  \right\|^{2}dt<\infty \text{ for all }a\in C^{\infty
  }_{0}{\left(\mathcal{U}_{0 }\setminus\gamma {\left(I_{0}\right)}
    \right)};
\\&a_t(x,\xi)=a(\phi^v_{-\tau(t,h)}(x,\xi)),\nonumber\\
&b_t(x,\xi)=\exp \Big (2^{-1}\int^{-\tau(t,h)}_0  \alpha\circ
 \phi^v_s (x,\xi)ds\Big )a_t(x,\xi),\nonumber\\&\gamma {\left(I_{0}\right)}
  ={\left\{x(E)\mid E\in I_{%
0}\right\}},\,\psi(t)
  ={e^{-itH}f\left(H\right)}\psi,\, f\in C^{\infty }_{0}{\left(I_{0}\right)}.\nonumber
\end{align}
The analogous statement of Theorem \ref{thm:1.11}
in general would read: 

For all $a\in C^{\infty
  }_{0}{\left(\mathcal{U}_{0 }
    \right)}$ and all localized states $\psi(t)
  ={e^{-itH}f\left(H\right)}\psi $, $f\in C^{\infty }_{0}{\left(I_{0}\right)}$, obeying \eqref{eq:999} with
  $I_0\ni E_0$ small enough
\begin{equation}\label{eq:999990}
\big\|a_t^{w}(x,\xi)\psi {\left(t\right)}
\big\|\rightarrow 0\text{ for }t\rightarrow \infty.
\end{equation} 

Now, for Examples \ref{exam:riema2} and \ref{exam:riema3}  we may compute 
\begin{equation}\label{eq:9999}
  \phi^v_{-\tau(t,h)}(x,\xi)=\big({t_0/(t+t_0)}\big)^{1/2}(x,\xi);\;t_0=\big(2\sqrt{2h}(1+2h)\big)^{-1},
\end{equation}
and  
\begin{equation}\label{eq:99991}
  \phi^v_{-\tau(t,h)}(x,\xi)=\Big(\big(\tfrac {t}{s\sqrt {2h}}+1\big)^{-s}x,\big(\tfrac {t}{s\sqrt {2h}}+1\big)^{s-1}\xi
  \Big);\;s=2/(2-\kappa),
\end{equation} respectively. 

We may use the effective Planck
constant $t^{-1}$ like for the other example. In conclusion, the somewhat complicated looking
quantum
condition \eqref{eq:999} reduces to  simple  explicit
requirements. Similarly \eqref{eq:999990} reads in these  cases
\begin{equation}\label{eq:99999}
\big\|a^{w}{\left((t_0(h)/t)^{1/2}(x,p)\right)}\psi {\left(t\right)}
\big\|\rightarrow 0\text{ for }t\rightarrow \infty
\end{equation} and 
\begin{equation}\label{eq:999991}
\big \|a^{w}{\Big (\Big (s\sqrt {2h}\Big )^{s}t^{-s}x,\Big (s\sqrt {2h}\Big )^{1-s}t^{s-1}p\Big )}\psi {\left(t\right)}
\big\|\rightarrow 0\text{ for }t\rightarrow \infty,
\end{equation} respectively.

We remark that  \eqref{eq:999}, \eqref{eq:9999} (or \eqref{eq:99991}) and
\eqref{eq:99999} (or \eqref{eq:999991}) apply
literally for Example \ref{exam:riema2} (or  Example  \ref{exam:riema3}); the conclusion
\eqref{eq:99999} (or \eqref{eq:999991}) for the states considered may be reached using Theorem
\ref{thm:1.11} after a symplectic change of variables and invoking symplectic covariance:

\begin{example} \label{exam:riema4} Consider a  smooth symbol $h$ on $(\mathbf{R}^{n}\setminus{\left\{
0\right\}})^2$ obeying one of the  homogeneity properties  1) 
\begin{equation*}
 h(\lambda x,\lambda  \xi)=h(x,\xi); \text { for all }\lambda >0,
  \end{equation*} 
or 2)  for 
some $ \kappa _2\neq 0$ and some $ \kappa _1 \neq \kappa _2$ 
  \begin{equation*}
 h(\lambda _1x,\lambda _2 \xi)=\lambda _1^{\kappa _1}\lambda
 _2^{\kappa _2}h(x,\xi); \text { for all }\lambda _1,\lambda _2 >0.
  \end{equation*} 

For 2) the change of variables $x=|y|^{s}\hat y= |y|^{s-1}y$, where
 $s=\kappa_2/(\kappa_2- \kappa_1)$,  induces a symplectic map on $(\mathbf{R}^{n}\setminus{\left\{
0\right\}})^2$. The Hamiltonian  in the corresponding new
 variables, denoted again by $x$ and $\xi$, reads 
 \begin{equation*}
  \tilde h(x,\xi )=h(\hat x , \xi
 +(s^{-1}-1)\langle \hat x ,\xi \rangle \hat x). 
 \end{equation*}
The same  change of variables with $s=\tfrac {1}{2}$ leads for 1) to
a Hamiltonian of the same   form.
In particular \eqref{eq:234} holds (in both cases) for the new symbol $ \tilde h$. Up to
other conditions we may therefore apply Theorem
\ref{thm:1.11}. Clearly Examples \ref{exam:riema2} and 
\ref{exam:riema3}  are  concrete
examples.  To
stress the symplectic covariance let us note that indeed $v:=\sum (sx_j\partial _{x_j}+(1-s)\xi _j\partial _{\xi _j})\to \tilde v
:=\sum x_j\partial _{x_j}$. 
\end{example}

We give yet another example from Riemannian
  geometry.

\begin{example}
\label{exam:riema33} Consider the symbol $h$ on $(\mathbf{R}^{2}\setminus{\left\{
0\right\}})\times \mathbf{R}^{2}$
  \begin{equation*}
 h=h(x,\xi)=\tfrac
 {1}{2}g^{-1} \xi^2,   
  \end{equation*} where the conformal (inverse) metric factor is
 specified in polar coordinates 
 $x=(r\cos \theta, r\sin \theta)$ as
 $g^{-1}=e^f;\;f=f(\theta-c\ln r)$. We assume $f$ is a given smooth 
 non-constant $2\pi$--periodic function and that $c>0$. 
We introduce $v=(x_1-cx_2)\partial_{x_1}+(cx_1+x_2)\partial_{x_2}-c\xi
_2\partial_{\xi_1}+c\xi_1 \partial_{\xi_2}$. Computations show that  $v$
  and the Hamitonian vector field 
  $v_h$ fulfill the conditions (1)--(3) along the positive
  orbit of $v$ originating at  $(r_0,0;\rho_0,c\rho_0)$; here $\rho_0=\sqrt
  {2E(1+c^2)^{-1}e^{-f_0}}$ where    
  $f_0 =f(\theta_0)$ is given in terms
  of any $r_0>0$ satisfying  the
  equation
  \begin{equation}
    \label{eq:237}
 -f'(\theta_0) =2c(1+c^2)^{-1};\;\theta_0=-c\ln r_0, 
  \end{equation}  and $E=h>0$ is arbitrary.  (Notice that there 
  are at least two solutions to \eqref{eq:237} for  
  all small as well as for  all large values of $c$.) The $x$--space part of the orbit (a geodesic) is the logarithmic spiral
 given by the equation $\theta-c\ln r= \theta_0$. 
 We  take  $S\subset \{(x,\xi)|\,x_2=0\}$ and compute the eigenvalues
  for the linearized reduced flow to be given by
\begin{equation} \label{eq:238}
 -\rho_0 \tfrac
 {1}{2} \Big \{1\pm \sqrt{1-2(1+c^2)^2f''_0}\Big \};\;f''_0=f''(\theta_0).
 \end{equation}  For $f''_0<0$ the
family of fixed points consists of  saddles. There are no resonances for ``generic''
values of $c$, and we also notice that  taking  $c\to 0$ in
\eqref{eq:237} and \eqref{eq:238} yields the formulas for the
corresponding homogeneous model (here the equations are  considered to be
 equations  in $c$ and $\theta_0$).

Finally, using the new angle $\tilde \theta = \theta-c\ln r$ one may again
conjugate to a homogeneous model. 
More precisely the relevant  symplectic change of variables is induced
(expressed here in terms of  rectangular coordinates) by the map $x\to \tilde x=(x_1g_1+x_2g_2,x_2g_1-x_1g_2)$, where
$g_1=\cos (c\ln |x|)$ and $g_2=\sin (c\ln |x|)$. One may check that  
$v\to \tilde v
:=\sum x_j\partial _{x_j}$, and that $h\to \tilde h$ given by
\begin{equation*}
\tilde h =\tfrac
 {1}{2}e^{f(\theta)}\Big ( \big \{(c\sin \theta +\cos
 \theta)\xi_1+(\sin \theta-c\cos \theta)\xi _2\big  \}^2+\{-\sin
 \theta  \xi _1 +\cos \theta  \xi _2\}^2 \Big );
\end{equation*}
 we changed notation back to the old one, $x=(r\cos \theta, r\sin
 \theta)$ for position and $\xi$ for momentum.
\end{example}

\begin{remark} \label{rem:slut} Although we shall not elaborate,  due to the general
nature of   the  method used 
in the bulk of this paper the  method   should be generalizable to apply to  the quantum
problem for Examples
\ref{exam:riema2}, \ref{exam:riema3} and \ref{exam:riema33} (without
changing variables). We believe it would apply to the quantum
problem for a variety of other  examples of the classical
theory. However we have not pursued the outlined general scheme for two reasons: 1)
There are additional complications related to the pseudodifferential
calculus, cf. \cite [Section 18] {ho}. The
treatment of these
complications is somewhat cumbersome and does not  add new insight to
the problem. 2) The condition \eqref{eq:999} has a certain global
flavour in our opinion, whence it does not entirely stand alone.  For instance its verification in the context of proving
asymptotic completeness, cf. \cite{h}, \cite {hs3},  \cite{chs2} 
and Section \ref{sec:8}, 
relies on {\it global}  information on the dynamics. 
 
To illustrate this point  further let us look at Example \ref{exam:riema3} in the case
$\kappa <0$ and $b>1$. For the classical problem any orbit $x(t)$
going to infinity will roughly follow either the $x_1$--axis or the
$x_2$--axis. As a first step of proving asymptotic
completeness in Quantum Mechanics (for the regularized Hamiltonian)
one may derive estimates for states in
the continuous subspace with roughly
the same content, in particular the bound \eqref{eq:999}.  Due to  the
eigenvalue calculation of Example
\ref{exam:riema3} only the
$x_2$--axis is ``stable'' for the classical orbits.  The corresponding statement in Quantum Mechanics
given by \eqref{eq:999991}  then leads to the preliminary information
for  asymptotic
completeness, $\|x_1/|x| \psi(t)\|\to 0$ for $t\to \infty$. 
Although the
dynamics of   Example
\ref{exam:riema33}  in general
is more complicated than Example
\ref{exam:riema3} we remark that the attractive spirals (cf. the
eigenvalue calculation \eqref{eq:238}) similarly define non-trivial quantum channels. One can show in some cases,
for example if $f'(\theta) + 2c(1+c^2)^{-1}\leq 0$ on an interval of
length $(1+c^2)\pi /2$, that those channels are the only occuring ones.
\end{remark}

\end{document}